\newtheorem{theorem}{Theorem}
\newtheorem{lemma}{Lemma}
\newtheorem{claim}{Claim}
\newtheorem{conjecture}{Conjecture}
\newtheorem{corollary}{Corollary}
\newtheorem{proposition}{Proposition}
\newtheorem{observation}{Observation}
\theoremstyle{definition}
\newtheorem{definition}{Definition}
\newtheorem{example}{Example}
\newcommand{\inst}{\mathcal{I}}
\newcommand{\bundle}{B} 
\newcommand{\bundlei}[1][i]{\bundle_{#1}} 
\newcommand{\alloc}{A} 
\newcommand{\allocs}{\mathbf \alloc} 
\newcommand{\alloci}[1][i]{\alloc_{#1}} 
\newcommand{\agents}{\mathcal{N}} 
\newcommand{\items}{\mathcal{M}} 
\newcommand{\vai}[2][i]{v_{#1}\left(#2\right)} 
\newcommand{\last}{\omega}
\newcommand{\lex}[1]{\stackrel{\mathclap{\tiny lex}}{#1}}
\title{
A Reduction from Chores Allocation to Job Scheduling
\footnote{A one-page  extended abstract of this paper has been presented in the EC 2023 conference.}
}
\author{ 
Xin Huang\thanks{{\tt kidxshine@gmail.com,
huangxin@inf.kyushu-u.ac.jp}} 
\\
Technion, Haifa 32000, Israel
\\
Kyushu University, Fukuoka, Japan
\and
Erel Segal-Halevi\thanks{{\tt erelsgl@gmail.com}}
\\
Ariel University, Ariel 40700, Israel
}
\newcommand{\er}[1]{\textcolor{blue}{#1}}
\newcommand{\erel}[1]{\er{\emph{(Erel says: #1)}}}
\newcommand{\xin}[1]{\textcolor{brown}{\emph{Xin: #1}}}
\begin{document}
\maketitle

\begin{abstract}
We consider allocating indivisible chores 
among agents with different cost functions, such that all agents receive a cost of at most a constant factor times their maximin share.
The state-of-the-art was presented by
Huang and Lu \cite{DBLP:conf/sigecom/HuangL21}.
They presented a non-polynomial-time algorithm, called HFFD, that attains an $11/9$ approximation, and a polynomial-time algorithm that attains a $5/4$ approximation.

In this paper, we show that HFFD can be reduced  to an algorithm called MultiFit,
developed by Coffman, Garey and Johnson in 1978 \cite{DBLP:journals/siamcomp/CoffmanGJ78}, for makespan minimization in job scheduling.
Using this reduction, we prove that the approximation ratio of HFFD is in fact equal to that of MultiFit, which is known to be $13/11$ in general, $20/17$ for $4\leq n\leq 7$, and $15/13$ for $n=3$.

Moreover,   we develop an algorithm for $(13/11+\epsilon)$-maximin-share allocation for any $\epsilon>0$, with run-time polynomial in the problem size and $1/\epsilon$. 
For $n=3$, we can improve the algorithm to find a
$15/13$-maximin-share allocation with run-time polynomial in the problem size.
Thus, we have practical algorithms that attain the best known approximation to maximin-share chore allocation.
\end{abstract}

\section{Introduction}

In this work, we study the problem of fair chore allocation. In this problem, there are some $m$ chores that have to be performed collectively by some $n$ agents. Each agent may have a different cost for doing each chore. Formaly, each agent $i$ has a cost function $v_i$, that maps each chore to its cost for the agent. Our goal is to partition the chores among the $n$ agents such that the resulting allocation is \emph{fair}, considering the different agents' cost function. 

We focus on a well known fairness notion --- \emph{maximin share} \cite{budish2011combinatorial}. Informally, the maximin share of agent $i$ is the worst cost in the most even $n$-partition that agent $i$ can make according to her own cost function. 
Obviously, it may be impossible to guarantee to agent $i$ a cost smaller than her maximin share, since if all agents have the same cost function, at least one of them will get exactly the maximin share.
Moreover, in some instances, there is no allocation that could give every agent at most her maximin share~\cite{Aziz_Rauchecker_Schryen_Walsh_2017}. 
Based on this fact, people ask what is the best approximation ratio for the maximin share?%
\footnote{
In the context of chores, the term \emph{minimax share} is probably more appropriate: the minimum, over all allocations, of the maximum bundle cost. However, the term maximin share is common in the literature on fair chores allocation, so we stick with it.
}

The best known approximation ratio for this problem is $\frac{11}{9}$~\cite{DBLP:conf/sigecom/HuangL21}. 
The algorithm in that paper generalizes an algorithm called First Fit Decreasing (FFD), originally developed for the problem of \emph{bin packing}, to chores allocation.
In bin packing, every chore has the same cost in every bin it is put into; in chores allocation, chores may have different costs to different agents.
We call the generalized algorithm Heterogeneous First-Fit Decreasing (HFFD; see Algorithm \ref{alg-hffd} below). They applied HFFD to the job scheduling problem, which can be considered as a special case of chores allocation such that all cost function are the same. At the end of the paper, they point out that there may be more connections with the job scheduling problem.

Actually, in the previous literature, applying FFD to job scheduling problem has already been studied. The corresponding algorithm is MultiFit algorithm, which was first proposed by Coffman, Garey and Johnson~\cite{DBLP:journals/siamcomp/CoffmanGJ78}. In that paper, the authors proved an $\frac{11}{9}$ approximation ratio for job scheduling and $\frac{20}{17}$ lower bound. We notice that these two ratios are exactly the same as the ratios for chores in the paper \cite{DBLP:conf/sigecom/HuangL21}, though their proof ideas are different. Is this a coincidence? Or are there some deep connections between these problems?

In this work, we answer this question positively by presenting a formal reduction from chores allocation to job scheduling. We show that HFFD for chores and MultiFit for job scheduling share the same approximation ratio. This gives a tight bound $\frac{13}{11}$ for the HFFD algorithm approximation ratio.

\subsection{Our contribution}
We analyze the HFFD algorithm from the work~\cite{DBLP:conf/sigecom/HuangL21} by a reduction from HFFD  to FFD. 
This allows us to both improve the approximation ratio of MMS for chores, and strengthen the bond between the scheduling and the fair allocation literature. The reduction works as follows.
Suppose that there is an instance in which HFFD cannot attain an approximation ratio  $\alpha$ or better, for some constant $\alpha\ge\frac{8}{7}$. 
Then, we can construct an instance in which FFD cannot attain an approximation ratio $\alpha$ or better.

The first attempt of adapting the FFD algorithm from bin-packing to identical-machines scheduling is the MultiFit algorithm \cite{DBLP:journals/siamcomp/CoffmanGJ78}. MultiFit uses binary search to find the minimum bin size such that FFD algorithm can allocate all jobs.
A line of follow-up works \cite{DBLP:journals/siamcomp/Friesen84,DBLP:journals/jal/FriesenL86,yue1990exact,yue1992simple} found that the exact approximation ratio of this algorithm is $\frac{13}{11}\approx 1.182$.
The approximation ratio improves to 
$\frac{20}{17}\approx 1.176$ when $n\leq 7$, 
to 
$\frac{15}{13}\approx 1.15$ when $n\leq 3$,
and to 
$\frac{8}{7}\approx 1.14$ when $n= 2$,



Huang and Lu~\cite{DBLP:conf/sigecom/HuangL21} first proved that HFFD could attain an approximation ratio $\frac{11}{9}\approx 1.222$. We show that HFFD has the same approximation ratio as MultiFit algorithm. 

Below, we denote by $\alpha^{(n)}$ the approximation ratio of MultiFit for scheduling on $n$ identical machines.

\begin{theorem}[Informal]
\label{thm-approximation-intro}
For any $n\geq 2$, 
HFFD among $n$ agents guarantees to each agent at most $\alpha^{(n)}$ of their MMS.
\end{theorem}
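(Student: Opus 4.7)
The plan is to argue by contraposition through a reduction from chore allocation to identical-machine scheduling. Suppose, for contradiction, that HFFD on some chores instance $\inst$ produces an allocation in which some agent $i$ receives a bundle with $\vai{\bundlei} > \alpha^{(n)} \cdot \mms_i$. From this failure I would construct a scheduling instance on $n$ identical machines and a bin capacity of $\alpha^{(n)} \cdot \mms_i$ on which FFD fails to pack all jobs. Since agent $i$'s own MMS partition is a valid schedule whose makespan is at most $\mms_i$, the failure would contradict the known approximation ratio $\alpha^{(n)}$ of MultiFit on $n$ machines \cite{DBLP:journals/siamcomp/CoffmanGJ78,DBLP:journals/siamcomp/Friesen84,DBLP:journals/jal/FriesenL86,yue1990exact,yue1992simple}.

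The central step is to homogenize the instance around $\val$: I would replace every other agent's cost function with $\val$, so that HFFD degenerates into ordinary FFD over a bin-packing instance with item sizes $\vai{g}$ and bin capacity $\alpha^{(n)} \cdot \mms_i$. Under this modification $\mms_i$ is unchanged (it depends only on $\val$), and the MMS partition of agent $i$ certifies that the scheduling optimum is at most $\mms_i$. Once the homogenized execution is shown to still overload agent $i$ beyond $\alpha^{(n)} \cdot \mms_i$, we obtain a genuine FFD counterexample and the contradiction follows.

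The main obstacle lies in this homogenization step: when other agents adopt $\val$, items may shift across bundles, possibly into or out of agent $i$'s bundle, and the HFFD processing order itself may change. I would handle this by walking through the HFFD trace item by item and building, in parallel, an execution of FFD on the homogenized instance, maintaining as an invariant that agent $i$'s load in the homogenized run weakly dominates her load in the original run. Items that were previously placed in bundles $\bundlei[j]$ for $j \neq i$ because of idiosyncrasies of $v_j$ can be matched to analogous placements, pruned, or substituted by dummy items, while preserving that $n$ bundles suffice and that the MMS partition still certifies the scheduling optimum. This accounting is delicate, and it appears to be exactly where the reduction needs the hypothesis $\alpha^{(n)} \geq 8/7$ flagged in the introduction.
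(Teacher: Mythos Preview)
Your high-level plan --- reduce an HFFD failure to an FFD failure on the same number of machines and invoke the known MultiFit ratio --- matches the paper. But two concrete points in your setup are off, and the heart of the argument is missing.

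First, the failure mode is misidentified. HFFD is run with thresholds $h_i = \alpha^{(n)}\cdot\mms_i$, and by construction no agent ever receives a bundle exceeding her threshold; the failure is that some chore $c^*$ remains \emph{unallocated}. The contrapositive hypothesis you want is ``with these thresholds, $c^*\notin\bigcup_k A_k$,'' not ``$\vai{\bundlei}>\alpha^{(n)}\cdot\mms_i$.''

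Second, and more importantly, the agent you single out is wrong. The paper does not homogenize around an arbitrary agent; it uses the \emph{last} agent $\last$ (the one receiving $A_n$). The reason is structural: $\last$ is present in every round of HFFD, so from $\last$'s viewpoint each bundle $A_k$ is lexicographically at least the bundle FFD would have produced at step $k$ (the ``First Fit Valid'' condition, Proposition~\ref{prop-HFFD-first-fit}). An agent who exits early gives you no control over later bundles, and your trace-matching invariant has no anchor.

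Finally, the paper does not argue by tracking loads or maintaining a domination invariant between parallel traces; that route appears not to work, because homogenizing can genuinely rearrange the allocation (see Example~\ref{exm:chores}, where $A_1$ has cost $77>\tau$ for the last agent). Instead, starting from the First Fit Valid tuple $(\items,\allocs,v_\last,h_\last)$, the paper runs a Tidy-Up Procedure to shrink the instance, then iteratively replaces each ``excessive'' last chore (one whose bundle exceeds $\tau$ in $\last$'s eyes) by a carefully chosen smaller \emph{suitable reduced chore} so that the first $k$ bundles become a legitimate FFD output while the set of chores beyond them is unchanged (Lemma~\ref{lem-suitable-reduced-chore}). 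This is where the hypothesis $\alpha^{(n)}\ge 8/7$ is actually used: it is needed to guarantee such a reduced chore exists (the case analysis splits on the ``fit-in space'' and relies on every remaining chore having cost larger than $\tau-\mu\ge\mu/7$). After at most $n$ such replacements one obtains a \emph{new} chore set $\items'$ (with altered costs) on which FFD with threshold $h_\last$ leaves $c^*$ unallocated while $\mms(\items',v_\last,n')\le\mms_\last$. Your sketch lacks this instance-modification machinery, which is the substantive content of the proof.
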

\Cref{thm-approximation-intro} is proved in \Cref{sec:reduction}. Combined with the results for the MultiFit algorithm, we have a tight approximation ratio $\frac{13}{11}$ for HFFD algorithm \cite{yue1992simple}, $\frac{20}{17}$ when the number of agents is 
at most $7$, and $\frac{15}{13}$ 
when the number of agents is 
at most $3$.

As explained by 
\cite{DBLP:conf/sigecom/HuangL21},
it is impractical to directly use the PTAS of job scheduling to compute an approximate maximin share of each agent, and then apply HFFD. They give an efficient algorithm that guarantees only $\frac{5}{4}$ MMS. We develop an FPTAS using binary search in a clever way. 

\begin{theorem}[Informal]
\label{thm-fptas-intro}
When all costs are non-negative integers,
for any $n\geq 2$ and any $\epsilon>0$,
we can use HFFD and binary search to find a $(1+\epsilon)\alpha^{(n)}$-approximate maximin share allocation in time $O(\frac{1}{\epsilon}n^2m\log(V_{\max}))$, where $V_{\max}$ is the maximum cost of an agent to all chores.
\end{theorem}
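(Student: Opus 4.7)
The plan is to combine \Cref{thm-approximation-intro} with a per-agent MultiFit-style binary search, exploiting the reduction at the granularity of a single agent's cost function. The idea is that although $\mms_i$ itself is NP-hard to compute, any upper bound on $\mms_i$ that still admits a feasible FFD $n$-packing is an adequate threshold for HFFD.

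\textbf{Step 1 (per-agent MMS approximation via binary search).} For each agent $i$, restrict attention to the scheduling problem defined by $v_i$ on $n$ identical machines. Consider the geometric grid $\{L_i(1+\epsilon)^k : k=0,1,\dots,K\}$ with $K = O(\tfrac{1}{\epsilon}\log V_{\max})$ spanning $[L_i,V_{\max}]$, where $L_i := \max_{j}v_i(j)$ is an obvious lower bound on $\mms_i$. At each grid point $T$, run FFD (the single-agent specialization of HFFD) with bin capacity $T$. Let $\tilde T_i$ be the smallest grid value at which FFD packs all chores into $n$ bins. Success of FFD at $\tilde T_i$ certifies an $n$-partition with max load $\le \tilde T_i$, giving $\tilde T_i \ge \mms_i$; and the MultiFit analysis guarantees FFD success whenever $T \ge \alpha^{(n)}\mms_i$, so the grid's multiplicative step of $(1+\epsilon)$ yields $\tilde T_i \le (1+\epsilon)\alpha^{(n)}\mms_i$.

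\textbf{Step 2 (global HFFD with approximate thresholds).} Run HFFD on the full heterogeneous instance using $\tilde T_i$ as the threshold for agent $i$ in place of $\mms_i$. I will invoke the reduction underlying \Cref{thm-approximation-intro} to argue that HFFD successfully allocates every chore whenever each $\tilde T_i$ admits an FFD $n$-packing on its own, and that the resulting bundle of agent $i$ has cost at most $\tilde T_i$. Combined with Step 1, every agent obtains cost at most $(1+\epsilon)\alpha^{(n)}\mms_i$.

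\textbf{Running time and main obstacle.} Step 1 calls FFD $O(\tfrac{1}{\epsilon}\log V_{\max})$ times per agent at $O(nm)$ per call, totalling $O(\tfrac{1}{\epsilon}n^2 m\log V_{\max})$; Step 2 adds a single HFFD call and is absorbed. The technical crux is Step 2: the informal form of \Cref{thm-approximation-intro} only bounds HFFD's output relative to the \emph{true} $\mms_i$, so it cannot be invoked as a black box with approximate thresholds. What I actually need is a strengthening hidden inside the reduction---a monotonicity statement saying that whenever each per-agent threshold $\tilde T_i$ is FFD-feasible in the single-agent problem, HFFD is jointly feasible and respects those thresholds. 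Extracting this by rerunning the reduction's proof with $\tilde T_i$ in the role of $\mms_i$, and checking that every structural step remains valid when the exact MMS values are replaced by upper bounds, is the heart of the argument; the rest is routine bookkeeping on the binary-search grid.
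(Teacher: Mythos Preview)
Your Step~2 relies on a monotonicity property that is \emph{false}. You need: ``if FFD$(\items,v_i,\tilde T_i,n)$ packs all chores for every $i$, then HFFD with thresholds $(\tilde T_1,\dots,\tilde T_n)$ packs all chores.'' \Cref{exm:chores} is a direct counterexample: for each of the two cost types the MMS is $75$ and FFD with bin size $75$ succeeds, yet HFFD with three type-A agents, one type-B agent, and all thresholds equal to $75$ leaves $c_{15}$ unallocated. Since your $\tilde T_i$ is the \emph{smallest} grid point at which the per-agent FFD succeeds, it can land exactly at $\mms_i$ (as it would here), so your one-shot HFFD call can fail. Your proposed fix---rerunning the reduction with $\tilde T_i$ in place of $\mms_i$---does not help: the reduction in \Cref{sec:reduction} requires $\tau\ge\frac{8}{7}\mu$ with $\mu=\mms_\last$, and you have no such gap when $\tilde T_\last$ is close to $\mms_\last$. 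Indeed the paper devotes all of \Cref{sec:monotonicity} to exactly the monotonicity statement you are trying to invoke, proves it only for $n\le 3$, and leaves the general case as \Cref{conj:montone}.

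The paper's actual argument avoids this by iterating. It runs HFFD; if a chore remains, then by \Cref{thm-tightapprox} the last agent $\last$ must have had $h_\last<\alpha^{(n)}\mms_\last$ (this is the only use of the reduction, and it is applied contrapositively with the true $\mms_\last$, so no monotonicity is needed). The algorithm then bumps $h_\last$ up by a factor $(1+\epsilon)$, reruns binary search from that lower bound, and repeats HFFD. Each agent's threshold can be bumped at most $O(1/\epsilon)$ times before it exceeds $\alpha^{(n)}\mms_i$, after which that agent can never again be the failing last agent; this gives the $O(n/\epsilon)$ bound on iterations and the stated running time. The essential difference from your plan is that the paper never asserts HFFD succeeds at FFD-feasible thresholds; it only asserts the contrapositive that HFFD \emph{failure} forces the last threshold below $\alpha^{(n)}\mms_\last$, which is exactly what the reduction delivers.
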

\Cref{thm-fptas-intro} is proved in \Cref{sec:fptas}.

The reason we cannot get an exact $\alpha^{(n)}$ MMS approximation in polynomial time is related to a peculiar property of FFD called \emph{non-monotonicity}: on some instances, FFD might allocate all the chores when the bin is small, but fail to allocate all the chores when the bin is slightly larger.
All known examples for non-monotonicity use a bin size that is very close to 1. Therefore, we conjecture, that if the bin size is sufficiently large, FFD becomes monotone. 
We extend the monotonicity property of FFD to HFFD, and prove that it holds for $n=2$ and $n=3$. These results imply:
\begin{theorem}[Informal]
\label{thm-monotonicity-intro}
When all costs are non-negative integers, 
for $n=3$, we can use HFFD and binary search to find a $\frac{15}{13}$-approximate maximin share allocation in time $O(n^2m\log(V_{\max}))$.
\end{theorem}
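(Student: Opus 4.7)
The plan is to combine the $\alpha^{(3)} = 15/13$ bound for HFFD (which follows from \Cref{thm-approximation-intro} together with Yue's tight analysis of MultiFit at $n=3$) with an efficient binary search that is made sound by the $n=3$ monotonicity of HFFD proved earlier in the paper.

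At a high level, I would mimic the MultiFit construction: binary-search over integer thresholds $T \in [0, V_{\max}]$ for the smallest value at which HFFD successfully assigns every chore subject to the per-agent cost bound $T$ (after any normalization needed so that a single scalar threshold applies coherently across the agents' valuations). The monotonicity of HFFD for $n=3$ guarantees that the ``success'' predicate is monotone in $T$, so ordinary binary search converges in $O(\log V_{\max})$ rounds to the smallest successful threshold $T^\star$. Each round executes HFFD in $O(nm)$ time; the outer loop structure (one search per agent, since the threshold must be aligned against each individual $\mms_i$) contributes the extra factor of $n$ needed to reach the claimed $O(n^2 m \log V_{\max})$ runtime.

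For correctness, the MultiFit-style analysis combined with \Cref{thm-approximation-intro} yields $T^\star \leq \alpha^{(3)}\,\mms_i = (15/13)\,\mms_i$ for each agent $i$: the MMS witnesses a $3$-partition which, together with the approximation guarantee of \Cref{thm-approximation-intro}, forces HFFD to succeed at every threshold above $(15/13)\,\mms_i$, so the smallest successful threshold cannot exceed that value. Since HFFD at threshold $T^\star$ is, by the very definition of ``success'', an allocation in which agent $i$'s bundle cost does not exceed $T^\star$, we obtain bundle cost at most $(15/13)\,\mms_i$, which is exactly the desired $15/13$-approximate maximin share.

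The principal obstacle is the non-monotonicity of FFD emphasized in the paragraph just before the theorem statement: in general HFFD may succeed at some small $T$ yet fail at $T+1$, in which case binary search could return an incorrect threshold. The $n=3$ monotonicity lemma is precisely what closes this gap, and is the load-bearing technical ingredient; everything else is a routine combination of that lemma with the MMS definition, the HFFD approximation already proved in \Cref{thm-approximation-intro}, and a standard binary-search complexity analysis.
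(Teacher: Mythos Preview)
Your proposal has a real gap in how it uses the monotonicity result. The paper's monotonicity theorem (\Cref{thm-ffd-is-monotone}) does \emph{not} say that the predicate ``HFFD succeeds at threshold $T$'' is monotone in $T$. What it says (\Cref{def-monotone}) is: for $n=3$ and $\alpha\ge 10/9$, if \emph{FFD} with threshold $\alpha\mu$ succeeds, then every First-Fit-Valid tuple at that same threshold allocates everything. This links FFD success to HFFD success at the \emph{same} threshold; it does not let you binary-search directly on HFFD's own success predicate. A second issue you do not address is the $\alpha\ge 10/9$ restriction: monotonicity only kicks in above that ratio, so even if each agent's per-agent FFD binary search returns a threshold $h_i$ at which FFD succeeds, that $h_i$ may well be below $(10/9)\,\mms_i$, and then HFFD is not forced to succeed. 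Your ``normalization'' hint does not resolve the heterogeneous-$\mms_i$ issue either, since the $\mms_i$ are unknown.

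The paper's proof is structurally different. It runs \Cref{alg-ptas}: first, for each agent $i$, binary-search on \emph{FFD} (not HFFD) to get a threshold $h_i$ with $\mms_i\le h_i\le \alpha^{(3)}\mms_i$ at which FFD succeeds. Then run HFFD with the vector $(h_1,\ldots,h_n)$. If HFFD fails with last agent $\omega$, the FFV characterization plus monotonicity at $\alpha\ge 10/9$ force $h_\omega<(10/9)\,\mms_\omega$; so one multiplies $h_\omega$ by $(1+\epsilon)$ with $\epsilon\le (15/13)/(10/9)-1$, re-runs the FFD binary search from this higher lower bound, and repeats. The new $h_\omega$ never exceeds $(15/13)\,\mms_\omega$, and each agent can be the ``last'' failing agent only $O(1/\epsilon)=O(1)$ times, giving the $O(n^2 m\log V_{\max})$ bound. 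The load-bearing step you identified is right, but it is used contrapositively to bound $h_\omega$ upon failure, not to make HFFD's success predicate monotone.
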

\Cref{thm-monotonicity-intro} is proved in  \Cref{sec:monotonicity}.

Our paper conveys a more general message: there are many  techniques used for proving approximation ratios in scheduling problems, and we believe that many of them can be adapted to prove approximation ratios for fair allocation.

\subsection{Related Work}

\subsubsection*{Job scheduling}
Job scheduling was one of the first optimization problems that have been studied.
In the \emph{identical machines} setting, we are given a set of $n$ identical processors (machines), and a set of processes (jobs) with different run-times. A \emph{schedule} is an assignment of jobs to machines. The \emph{makespan} of a schedule is the completion time of the machine that finishes last. The \emph{makespan minimization} problem is the problem of finding a schedule with a smallest makespan. This problem is NP-hard by reduction from \textsc{Partition}, but there are efficient algorithms attaining constant-factor approximations \cite{graham1966bounds,graham1969bounds,hochbaum1987using,xin2017direct}, and theoretic PTAS-s \cite{hochbaum1987using,alon1998approximation}.

The makespan minimization problem can be interpreted in a different way. The ``machines'' represent human agents. The ``jobs'' are chores that must be completed by these agents. The ``run-time'' of each chore represents the negative utility that each agent gets from this chore: a job $j$ with run-time $v(j)$ represents a chore that is ``worth'' $-v(j)$ dollars (that is, people are willing to pay $v(j)$ dollars to get an exemption from this chore).
A schedule is an allocation of the chores among the agents.
The makespan of an allocation corresponds to the (negative) utility of the poorest agent --- the agent with the smallest utility. In the economics literature, this utility is called the \emph{egalitarian welfare} of the allocation.
The problem of makespan minimization corresponds to the well-known problem of \emph{egalitarian allocation} --- an allocation that maximizes the utility of the poorest agent \cite{rawls1971theory,Moulin2004Fair}

The job scheduling problem becomes more challenging when each job can have a different run-time on each machine. 
This corresponds to the setting called \emph{unrelated machines}.
In this setting (in contrast to the setting of identical machines), the makespan minimization problem cannot have a polynomial-time algorithm that attains an approximation factor better than $3/2$ unless P=NP \cite{lenstra1990approximation}. The best known approximation ratio attained by a polynomial-time algorithm is $2$ \cite{lenstra1990approximation,verschae2014configuration}.%
\footnote{
The problem has an FPTAS when the number of machines is constant \cite{woeginger2000does}.
}
Translated to the fair allocation setting, this means that, when agents have subjective valuations (different agents may assign different costs to the same chore), it is hard to approximate the optimal egalitarian welfare.

The literature on fair allocation has adopted a more modest goal: instead of finding the egalitarian-optimal allocation overall, let each agent $i$ find an egalitarian-optimal  allocation assuming that all $n$ agents have the \emph{same} valuation as himself (this is the maximin-share of agent $i$), and then try to give each agent an approximation of his own maximin-share.

\subsubsection*{Chores}
The relation between scheduling and fair allocation was already noted in several papers. For example, 
\cite{barman2017approximation} presented an algorithm for fair item allocation, that is very similar to the Longest Processing Time First (LPT) algorithm for scheduling. Interestingly, for chores allocation, their algorithm attains an approximation ratio of $(4n-1)/(3n)$ of the MMS, which is exactly the same ratio that is attained by LPT for makespan minimization \cite{graham1969bounds}.

Techniques from scheduling have also been used in the problem of \emph{optimal-MMS}, that is, finding the optimal approximation of MMS attainable in a specific instance \cite{aziz2016approximation,nguyen2017approximate,heinen2018approximation}.

There is a different way to approximate the maximin-share for chores. Instead of multiplying the MMS by a constant, we can pretend that there are fewer agents available for doing the chores. That is, for some $d<n$, let each agent $i$ find an egalitarian allocation among $d$ agents with the same valuation as $i$; then attempt to give each agent $i$ at least this egalitarian value.
Recently, \cite{hosseini2022ordinal} showed that, when $d\leq 3n/4$, such an allocation always exists. Moreover, when $d\leq 2n/3$, such an allocation can be found in polynomial time. Both results use techniques from \emph{bin packing} --- another optimization problem, that can be seen as a dual of makespan-minimization \cite{hochbaum1987using}.

\subsubsection*{Goods}
While minimizing the largest completion time corresponds to fair allocation of chores,
the opposite scheduling problem --- \emph{maximizing} the \emph{smallest} completion time --- corresponds to fair allocation of goods.
Finding an egalitarian-optimal allocation in this case is much harder. As far as we know, 
there are polynomial-time constant-factor approximation algorithms only for special cases, such as when each agent $i$ values each good at either $0$ or $v_i$ \cite{bansal2006santa,feige2008allocations,asadpour2012santa}; the best known approximation ratio for this case is $12.3$ \cite{annamalai2017combinatorial}.
For the general case, the approximation ratios depend on the number of agents or goods \cite{bezakova2005allocating,asadpour2010approximation,chakrabarty2009allocating}.
There are also exact algorithms for two agents using branch-and-bound \cite{dall2007allocate}, and for any number of agents using constraint programming \cite{bouveret2009computing}.

An allocation giving each agent his or her maximin share may not exist for goods, just like for chores \cite{kurokawa2018fair,feige2021tight}.
The approximation ratios for goods are worse: the best known approximation ratio is $3/4$ \cite{ghodsi2018fair,garg2020improved}.
For goods, the algorithm of \cite{barman2017approximation} attains an approximation ratio of $2n/(3n-1)$. Interestingly, this is worse than the approximation ratio of LPT for maximizing the minimum completion time, which is $(3n-1)/(4n-2)$ \cite{csirik1992exact}.

\subsection{Overview of the Technique}
Basically, we reduce the HFFD algorithm for chores allocation to the FFD algorithm for bin packing. The reduction is done in two steps.

1) In Section \ref{sec:FFabstraction}, we show how to simplify a multi-agent instance to a single-agent instance. To this aim, we characterize the allocations that can be generated by HFFD with multiple agents, based on the valuation of a single agent --- the last remaining agent.
We call such an allocation \emph{First Fit Valid}.

2) In the second step,
given a First Fit Valid allocation
which is a counter-example for the approximation ratio of HFFD,
we construct an instance for the FFD algorithm such that the approximation ratio and the unallocated chores are preserved. To do this, we look at a minimal counter-example for the HFFD algorithm, which we call a \emph{Tidy-Up tuple}. 
Then we modify the instance such that the resulting allocation is a possible output of FFD, which gives us a counter-example for FFD.
This part is in Section \ref{sec:reduction}.

\section{Preliminaries}\label{sec:pre}
\subsection{Agents and Chores}
There are $n$ agents and $m$ chores. Let us denote the set of agents as $\agents$ and the set of chores as $\items$. A \emph{bundle collection} $\allocs$  is an ordered partition of $S\subseteq\items$ into $n$ disjoint sets. If $S=\items$, we can also call it a \emph{complete} bundle collection. We call each set $\alloci\in\allocs$ a \emph{bundle}. 

An \emph{assignment} $\sigma$ is a permutation over $\{1,2,..,n\}$. An allocation $(\allocs,\sigma)$ is a pair of bundle collection and an assignment such that agent $i$ will get bundle $\alloci[\sigma(i)]$. 
An allocation is called \emph{complete} if $\allocs$ is a complete bundle collection. 
We need the assignment $\sigma$ here because the HFFD algorithm does not allocate bundles according to a fixed order, and the order of bundles is important.
In the convention of the literature, agent $i$ gets bundle $\alloci$ so that there is no need for an assignment. 
Starting at Section \ref{sec:FFabstraction}, we do not need the assignment $\sigma$ anymore as only one agent is considered. 

Each agent has a cost function $v_i:2^{\items}\rightarrow \mathbb{R}^{+}$. In this paper, we only consider  additive cost functions, i.e., $v_i(S)=\sum_{c\in S}\vai{c}$ for every $S\subseteq \items$.

\begin{definition}[Maximin share]
Given a chores set $\items$, a cost function $v$, and the number of bundles $n$, \emph{maximin share} is defined as $\text{MMS}(\items,v,n)=\min_{\allocs}\max_{\alloci[j]\in\allocs}\vai{\alloci[j]}$, where $\allocs$ ranges over all complete bundle collections over $\items$. 
\end{definition}

For each agent $i$, let $\text{MMS}_i:=\text{MMS}(\items,v_i,n)$ denote the maximin share of agent $i$.


\begin{definition}[Maximin share allocation]
Allocation $(\allocs,\sigma)$ is called a \emph{maximin-share allocation} if
$\forall i, \vai{\alloci[\sigma(i)]}\le \text{MMS}_i$.

It ls called an \emph{$\alpha$-maximin-share-allocation}, for some $\alpha>1$, if  $\forall i, \vai{\alloci[\sigma(i)]}\le \alpha\cdot \text{MMS}_i$.
\end{definition}

\begin{definition}[Identical-order preference (IDO) \cite{DBLP:conf/sigecom/HuangL21}]
An instance is called \emph{identical-order preference} if there is an ordering of all chores $c_1,c_2\dots, c_m$ such that, for any agent $i$, $\vai{c_j}\ge\vai{c_k}$ if $j\le k$,
that is, chores are ordered from largest (most costly) to smallest (least costly).
\end{definition}

Informally, IDO instances are 
the ``hardest'' for maximin share.
This is formalized by the following lemma
(similar lemmas were proved in the context of goods \cite{bouveret2016characterizing,barman2017approximation}).
\begin{lemma}[\cite{DBLP:conf/sigecom/HuangL21}]
Any instance $\inst$ can be transformed into an IDO instance $\inst'$, so that any $\alpha$-MMS-allocation for $\inst'$ can be transformed to an $\alpha$-MMS-allocation for $\inst$ (and both transformations can be done in polynomial time).
\end{lemma}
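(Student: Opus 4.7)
The plan is to build $\inst'$ by sorting each agent's cost function, observe that the maximin share is invariant under this sort, and then convert any $\alpha$-MMS allocation by traversing the chores in reverse IDO order and greedily handing out the cheapest unassigned chore to the designated agent.

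To construct $\inst'$, for each agent $i$ I would write the multiset of values $\{v_i(c) : c \in \items\}$ in decreasing order as $\sigma_i(1)\ge \sigma_i(2)\ge\cdots\ge\sigma_i(m)$, and set $v_i'(c_j):=\sigma_i(j)$ on the common chore ordering $c_1,\dots,c_m$. By design $\inst'=(\agents,\items,\{v_i'\})$ is IDO, and sorting all agents takes $O(nm\log m)$ time. Since $v_i$ and $v_i'$ assign the same multiset of costs to $\items$, every $n$-partition of $\items$ realizes the same collection of bundle values under the two cost functions, so $\text{MMS}(\items,v_i,n)=\text{MMS}(\items,v_i',n)=:\text{MMS}_i$.

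For the conversion, let the given $\alpha$-MMS allocation of $\inst'$ assign each chore $c_j$ to some agent $i_j$. I would process positions $j=m,m-1,\dots,1$ in reverse order and, at each step, assign to agent $i_j$ in the new allocation of $\inst$ the currently unassigned chore $c^*\in\items$ minimizing $v_{i_j}$. The key claim to verify is that when step $j$ begins, exactly $j$ chores remain unassigned, so the cheapest of them (according to $v_{i_j}$) has cost at most $\sigma_{i_j}(j)$: if all $j$ unassigned chores had $v_{i_j}$-cost strictly greater than $\sigma_{i_j}(j)$, they would all lie among the (at most) $j-1$ chores realizing the top $j-1$ values of $v_{i_j}$, a contradiction. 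Summing the resulting bound $v_{i_j}(c^*)\le \sigma_{i_j}(j)=v_{i_j}'(c_j)$ over the positions $j$ with $i_j=i$ shows that agent $i$'s bundle in $\inst$ has $v_i$-cost at most their $v_i'$-cost in $\inst'$, which is in turn $\le \alpha\cdot\text{MMS}_i$.

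The one step that really requires care is the pigeonhole counting just described, and in particular the direction of the sweep matters: a forward sweep (increasing $j$) with the analogous greedy rule can fail at late positions because too many cheap chores may already have been consumed, whereas the reverse sweep always preserves enough slack to apply the pigeonhole. Beyond that, both transformations are clearly polynomial (sort each $v_i$, then perform $m$ minimum-selections).
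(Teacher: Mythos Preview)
The paper does not supply its own proof of this lemma: it is stated with a citation to Huang and Lu (and flagged as analogous to earlier goods results of Bouveret--Lema\^itre and Barman--Krishnamurthy), then used as a black box. Your proposal reproduces exactly the standard argument from that literature, and the reverse-sweep plus pigeonhole step is correct.

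One small imprecision worth fixing: your justification of MMS-invariance reads ``every $n$-partition of $\items$ realizes the same collection of bundle values under the two cost functions,'' which is false for a \emph{fixed} partition. For instance, with three chores having $v_i$-costs $1,2,3$ and $v_i'$-costs $3,2,1$, the partition $(\{c_1,c_2\},\{c_3\})$ gives bundle costs $(3,3)$ under $v_i$ but $(5,1)$ under $v_i'$. The right statement is that $v_i'$ arises from $v_i$ via a permutation $\pi_i$ of the chore set, so $P\mapsto \pi_i(P)$ is a bijection on $n$-partitions that preserves the multiset of bundle costs; hence the two min--max values coincide. With that correction, the argument is complete.
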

Therefore, the approximation factor of any algorithm is equal to its approximation factor on IDO instances.

Given any IDO instance, we will fix an underlying ordering $c_1,c_2\dots, c_m$ satisfying the above definition.  We call this ordering a \emph{universal ordering}.

\begin{definition}
Suppose we are given an IDO instance. 
Given a bundle $B\subseteq\items$ and a positive integer $p$, the notation $B[p]$ denotes the $p$-th largest chore in the bundle. If there are ties, we break ties  according to the universal ordering. 
\end{definition}

By the above definition, each chore $c\in B$ is uniquely identify by an integer $p$ such that $c=B[p]$.


\subsection{First Fit Decreasing and MultiFit}
The First Fit Decreasing (FFD) algorithm \cite{johnson1973near} accepts as input a list of chores $\items$, a single cost function $v$, a threshold $\tau$, and a fixed number of bins $n$.%
\footnote{
In the original description of FFD, the number of bins is not fixed in advance: new bins are added until all items are filled. For our purposes, we fix the number of bins: chores that do not fit remain unallocated.
}
It fills one bin at a time. For each bin, it loops over the remaining chores from large to small. If the current chore \emph{fits} into the current bin (that is, it can be added to the current bin such that its total cost remains at most $\tau$), then it is inserted; otherwise, it is skipped and the algorithm examines the next chore. Once no more chores fit into the current bin, it is closed, a new bin is opened, and the algorithm loops over the remaining chores from large to small again. %
\footnote{
The original paper describes FFD in a different way: it keeps several open bins at once; it loops once over all items in decreasing order of size; and puts the current item in the first (smallest-index) bin into which it fits. 
However, a moment's thought reveals that the two descriptions are equivalent.
}
After $n$ bins are filled, the remaining chores remain unallocated.
We denote the output of FFD by
$FFD(\items, v, \tau, n)$, or just 
$FFD(\items, v, \tau)$ when $n$ is clear.

Given the number of bins $n$,
The \emph{approximation ratio of FFD w.r.t. $n$}, denoted $\alpha^{(n)}$, is the smallest real number satisfying the following property.
For any chores set $\items$ and cost function $v$ with MMS$(\items,v,n)=\mu$, FFD$(\items,v, \tau^{(n)} \mu,n)$ allocates all chores.
In other words:
$\alpha^{(n)}$ is the smallest factor such that, if all chores can fit into $n$ bins of size $\mu$, then 
FFD  allocates all chores into $n$ bins of size $\alpha^{(n)} \cdot \mu$.
Table \ref{table:approxratio} shows the approximation ratios of FFD for every $n$, as mentioned by \cite{johnson1973near} (for $n\leq 7$) and \cite{yue1990exact} (for $n\geq 8$).
\begin{center}
    \begin{tabular}{|c|c|c|c|c|}
    \hline
        $n = $ & 2 & 3 & 4,5,6,7 & $\ge 8$\\
        \hline
       $\alpha^{(n)} = $ & $8/7$ & $15/13$ &  $20/17$  & $13/11$\\
       \hline
    \end{tabular}
    \captionof{table}{approximation ratios of FFD for job scheduling.}\label{table:approxratio}
\end{center}

The MultiFit algorithm \cite{DBLP:journals/siamcomp/CoffmanGJ78}
runs FFD multiple times with the same $n$ and different values of $\tau$,
aiming to find an $\tau$ for which $FFD(\items, v, \tau, n)$ succeeds in allocating all the chores.
It computes the threshold using binary search: pick a threshold and run FFD; if it succeeds, pick a lower threshold; if it fails, pick a higher threshold. 
MultiFit is an approximation algorithm for makespan minimization in identical machine scheduling: its approximation ratio is the same as that of FFD in Table \ref{table:approxratio}.

\section{From identical to different cost functions}
Huang and Lu \cite{DBLP:conf/sigecom/HuangL21} generalize the FFD algorithm 
from bins to \emph{agents},
where each agent $i \in \agents$ may have a different cost $v_i(c)$ to each item (chore) $c$.
Moreover, each agent  may have a different capacity, 
denoted by $h_i$. 
We call this generalization \emph{Heterogeneous First Fit Decreasing} (HFFD; Algorithm \ref{alg-hffd}). 
In each iteration $k$, HFFD loops over the remaining chores from large to small, by the universal ordering. If the current chore fits into the current bin according to the capacity $h_i$ and the cost function $v_i$ of at least one remaining agent $i\in\agents$, it is inserted; otherwise, it is skipped and the algorithm examines the next chore. Once no more chores fit into the current bin, it is closed and allocated to one of the remaining agents $i\in\agents$ for whom the last item fitted.
This goes on until all chores are allocated, or until there are no remaining agents. In the former case, we say that the algorithm \emph{succeeded}; in the latter case, it \emph{failed}, since there are remaining chores that cannot be allocated to any agent.
Our goal is to determine a threshold $h_i$ for each agent $i\in\agents$ such that
\begin{itemize}
\item The algorithm with thresholds $h_1,\ldots,h_n$ succeeds (allocates all chores);
\item The thresholds are not too high, that is, $h_i \leq (13/11)\cdot \text{MMS}_i$.
\end{itemize}

\begin{algorithm}
\KwIn{An IDO instance $\inst$, threshold values of agents $(h_1,\dots,h_n)$.}
\KwOut{An allocation $(\allocs,\sigma)$.}
\BlankLine
Let $R=\items$ and $T=\agents$ and $\allocs=(\emptyset,\dots,\emptyset)$\;
\For(\tcp*[f]{Loop to generate bundles}\label{inalg-loop}){$k=1$ \KwTo $n$}
{
	 $\alloci[k]=\emptyset$\tcp*[f]{Generate bundles from $\alloci[1]$ to $\alloci[n]$}\;
	\For(\tcp*[f]{Consider chores in $R$ from largest to smallest}){ $j=1$ \KwTo  $|R|$}
	{
		\If{$\exists i\in T, \vai{\alloci[k]\cup R[j]}\le h_i$}
		{
			$\alloci[k]=\alloci[k]\cup R[j]$; 
		}	
	}
	Let $i\in T$ be an agent  such that $\vai{\alloci[k]}\le h_i$\;\label{inalg-select}
	$\sigma(i)=k$\tcp*[f]{Agent $i$ will get the bundle $A_k$}\;
	$T=T\setminus i$\;
	$R=R\setminus \alloci$\;		
}
\Return {$(\allocs,\sigma)$}
\caption{Heterogeneous First Fit Decreasing (HFFD)}\label{alg-hffd}	
\end{algorithm}

We denote by $HFFD(\items,\overrightarrow{v}, \overrightarrow{\tau})$ the output of HFFD on chores
$\items$,
cost functions $v_1,\ldots,v_n$ and thresholds $h_1,\ldots,h_n$.

Note that $FFD(\items, v, \tau,n)$, as described in the preliminaries, is 
 a special case of $HFFD(\items,\overrightarrow{v}, \overrightarrow{\tau})$ in which there are $n$ cost functions all equal to $v$ and all thresholds are equal to $\tau$, that is, 
$HFFD(\items,\allowbreak  v,\ldots,v,\allowbreak \tau, \ldots, \tau)$ where there are $n$ times $v$ and $n$ times $\tau$.

Based on \cite{yue1990exact}, one could think that we could attain a $(13/11)$-MMS allocation by letting each agent $i$ compute a threshold using binary search on $v_i$. Unfortunately, this does not work: the individual thresholds are not guaranteed to work with heterogeneous cost functions. 
The following example is based on Example 5.2 in \cite{DBLP:conf/sigecom/HuangL21}.

\begin{example}
\label{exm:chores}
There are $n=4$ agents and $m=15$ chores, with two types valuation types:
\begin{center}
\begin{tabular}{|c|c|c|c|c|c|c|c|c|c|c|c|c|c|c|c|}
\hline
Chore \#: & 1 & 2 & 3 & 4 & 5 & 6 & 7 & 8 & 9 & 10 & 11  & 12 & 13 &14  & 15 \\
\hline
\hline
Type A: & 51 & 28 & 27 & 27 & 27 & 26 & 12 & 12 & 11 & 11 &11  & 11 & 11 &11  & 10 \\
\hline
Type B: & 51 & 28 & 27 & 27 & 27 & 24 & 21 & 20 & 10 & 10 & 10 & 9 & 9 & 9 & 9  \\
\hline
\end{tabular}
\end{center}
For each type, the MMS value is $75$, using the following partitions:
\begin{itemize}
\item For Type A, we have $P_1=\{c_1,c_7,c_8\}$ with cost $51 + 12 + 12 = 75$, 
$P_2 = \{c_2,c_3,c_9\}$ with cost $28 + 27 + 11 = 66$, $P_3 = \{c_4,c_5,c_{10},c_{15}\}$ with cost $27 + 27 + 11 + 10 = 75$, $P_4 = \{c_6,c_{11},c_{12},c_{13},c_{14}\}$ with cost  $26 + 11 + 11 + 11 + 11 = 70$.
\item For Type B, we have $P_1 = \{c_1,c_6\}$ with cost $51 + 24 = 75$, $P_2 = \{c_2,c_3,c_8\}$ with cost $28 + 27 + 20 = 75$, $P_3=\{c_4,c_5,c_7\}$ with cost $27 + 27 + 21 = 75$, $P_4=\{c_9,\ldots,c_{15}\}$ with cost $10 + 10 + 10 + 9 + 9 + 9 + 9 = 66$.
\end{itemize} 
In fact, FFD with bin size $75$ yields exactly the same allocations.
However, HFFD with three agents of type A and one agent of type B, where all thresholds are $75$, fills bundles in the following order:
\begin{itemize}
\item $A_1$: \{$c_1,c_6$\}. Its cost for type B is $51+24=75$ and its cost for type A is $51+26 = 77 > 75$, so it is taken by the type B agent. The other three bundles are taken by  type A agents:
\item $A_2$: \{$c_2,c_3,c_7$\} with cost 28 + 27 + 12 = 67.
\item $A_3$: \{$c_4,c_5,c_8$\} with cost 27 + 27 + 12  = 66.
\item $A_4$: \{$c_9,...,c_{14}$\} with cost 11 + 11 + 11 + 11 + 11 + 11= 66.
\end{itemize}
There is no room in any of these bundles for $c_{15}$, since its cost is $10$, so $c_{15}$ remains unallocated.

\qed
\end{example}

\section{Proof Overview}\label{sec:proofoverview}
Before we go into the details, let us have a bird's-eye view of the whole proof. Our goal is: Given an output of HFFD algorithm such that there is at least one chore unallocated, we want to construct an instance of job scheduling such that the output of FFD algorithm cannot allocate all chores and the bin size is the same as HFFD algorithm.  

We do this in two main steps: 
1) Transform a multiple-agents setting to single-agent setting;
2) Reduce a potentially invalid output of FFD to a valid output of FFD.

In the first step, we transform a  multiple-agents setting to a single-agent setting by just looking at the valuation of the last agent.
The idea is that the last agent is present throughout the entire allocation, and thus evaluates all $n$ bundles.
 However, when we only take one valuation into consideration, we lose a lot of information about the allocation. For example, it is possible that the threshold of the last agent is $10$, but one of the allocated bundles has three chores whose costs to the last agent are $\{6,7,3\}$; such a bundle cannot appear in a valid output of FFD with the valuation of the last agent.
Here is the challenge: if the bundles
are arbitrary 
in the view of the last agent, then we cannot construct a job scheduling instance. 
To overcome this difficulty, we propose a necessary condition of the output of HFFD algorithm in the view of the last agent. We call this condition First Fit Valid (FFV). The FFV condition also gives us a lot of structure information for the reduction.  

The second step is more complex, and requires
several sub-steps. 
First, we try to shrink the size of instance as much as possible, while keeping the condition that at least one chore remains unallocated. This is done by a procedure we call the Tidy-Up procedure. After that, we can work on a ``tidy'' instance that has several useful properties. One property is that each bundle contains at most one ``illegal'' chore (that is, there is one chore such that, if it is removed, then the total bundle cost is at most the last agent's threshold).
The next thing is that we reduce the cost of the first illegal chore and then put it back. With a careful choice of the reduced cost, we can make sure that 
all bundles from the first bundle to the bundle that contained the first illegal chore are valid (that is, contain no illegal chores). 
Then, by induction, we get an output of FFD with some chore unallocated at the end.

The diagram in Figure \ref{fig:proofflow} shows an overview for the whole proof.
\begin{figure}
    \centering
    \includegraphics[width=16cm]{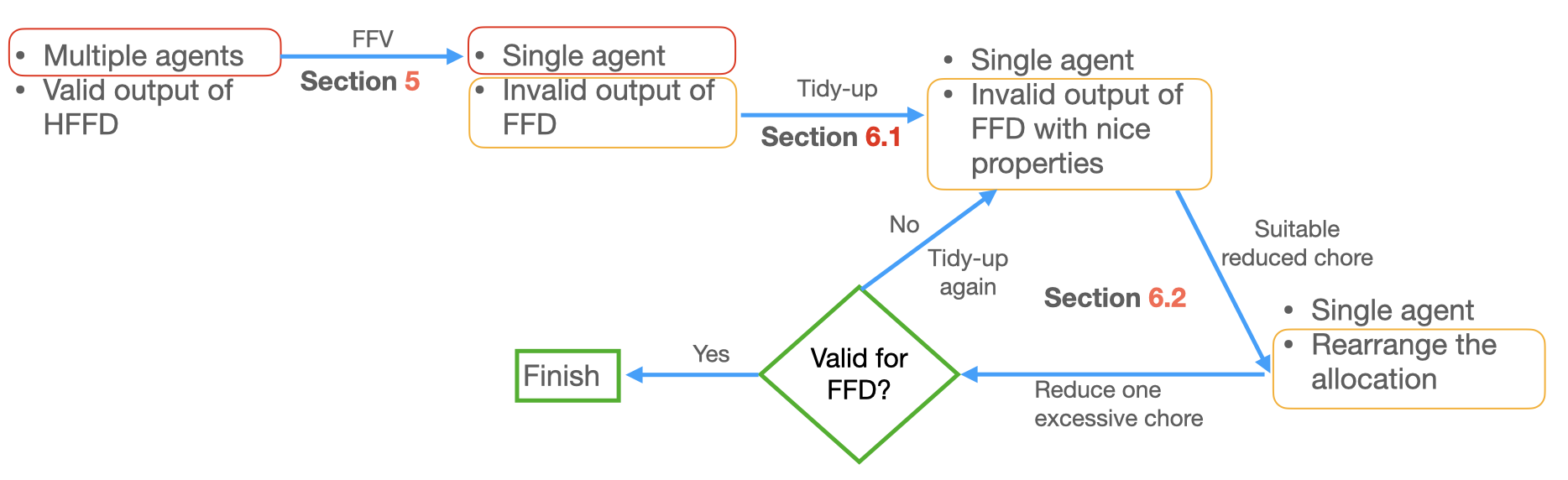}
    \caption{A bird's eye view of the whole proof.
    }
    \label{fig:proofflow}
\end{figure}

\section{First Fit Abstraction}\label{sec:FFabstraction}
In this section we 
reduce the multiple agents setting (chores allocation) to the single agent setting (job scheduling). 
In particular, we give a necessary condition for the output of the HFFD algorithm in terms of the cost function of the agent who gets the last bundle $\alloci[n]$. We denote this agent by $\last$, such that $\sigma(\last)=n$. We call this agent the \emph{last} agent. 

Given the necessary condition of the output of the HFFD algorithm
(proved in Proposition \ref{prop-HFFD-first-fit} below), we can forget all other agents and study the allocation satisfying this condition in the view of only one agent.
So from this section to the end, only the last agent is considered. Whenever we mention a cost function, it is the cost function of the last agent. In other words, there is a single cost function involved in all arguments. When we say ``an allocation'', there is no need for an assignment $\sigma$ anymore, so $\sigma$ is omitted from now on. 

We first introduce new notations to compare two bundles. Given a bundle $B$, let us recall the notation $B[p]$ is the $p$-th largest chore in bundle $B$. 
If there are several chores of the same cost, we choose  B[p] based on the universal ordering  (in Section \ref{sec:pre}).
We extend this notation such that, when $p>|B|$, we define $v(B[p])=0$.

\begin{definition}[Lexicographic order]
Given a cost function $v$ and two bundles $B_1,B_2$, we say $B_1\lex{=} B_2$ if $v(B_1[p])=v(B_2[p])$ for all $p$; $B_1\lex{\le} B_2$ if $B_1\lex{=}B_2$  or  $v(B_1[q])<v(B_2[q])$, where $q$ is the smallest index for which $v(B_1[q])\neq v(B_2[q])$.

Note that every two bundles are comparable by this order.

We define $B_1\lex{\ge}B_2$ if $B_2\lex{\le}B_1$; and $B_1\lex{<}B_2$ if $B_1\lex{\le}B_2$ and $B_1\lex{\neq}B_2$; and similarly define $\lex{>}$. 
\end{definition}

Note that, when we consider only a  single cost function, we can treat $\lex{=}$ as $=$. This is because, 
from the point of view of a single agent, chores with the same cost are equivalent, and can be exchanged anywhere, without affecting any of the arguments in the proofs.

With lexicographical order, we can define \emph{lexicographically maximal} subset. 
\begin{definition}[Lexicographically maximal subset]
Given a set of chores $S$, a cost function $v$ and a threshold $\tau$, let $S_{\leq \tau}=\{B\subseteq S\mid v(B)\le \tau\}$ be the set of all bundles with cost at most $\tau$. A bundle $B$ is a \emph{lexicographically maximal subset} of $S$, w.r.t. $v$ and $\tau$,
if $B\in S_{\leq \tau}$ and $B\lex{\ge} B'$ for any $B'\in S_{\leq \tau}$.
\end{definition}

Based on lexicographically maximality, we develop the following notions to relate the output of the FFD algorithm and HFFD algorithm.

\begin{definition}[Benchmark bundle]
Suppose that we are given a chores set $\items$, a bundle collection $\allocs$, a cost function $v$ and a threshold $\tau$. For each 
integer $k\in[n]$,
we define the \emph{$k$-th benchmark bundle} of $\allocs$, denoted $B_k$,
as a lexicographically maximal subset
of $\items\setminus \bigcup_{j<k} \alloci[j]$ w.r.t. $v$  and $\tau$.
\end{definition}

Intuitively,
after bundles 
$\alloci[1],\ldots,\alloci[k-1]$ 
have already been allocated,
the benchmark bundle is the lexicographically maximal feasible subset of the remaining items. We show below that this is exactly the outcome of running the 
FFD algorithm for generating the next bundle $\alloci[k]$. 

\begin{proposition}\label{prop-lexiffd}
Let $\mathbf{D}$ be the bundle collection which is output by FFD$\left(S,v,\tau\right)$. 
Then, for all $k\in[n]$, we have $D_k\lex{=}B_k$, 
where $B_k$ is the $k$-th benchmark bundle of $\mathbf{D}$.
\end{proposition}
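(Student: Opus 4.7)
The plan is to reduce the statement to a single-bin claim: if FFD is applied on one bin to a chore set $S$ with threshold $\tau$, the resulting bundle is a lexicographically maximal subset of $S$ of cost at most $\tau$. This will imply \Cref{prop-lexiffd} by a straightforward outer induction on $k$: having produced $D_1,\ldots,D_{k-1}$, FFD on the full instance behaves exactly like single-bin FFD on the residual set $\items\setminus\bigcup_{j<k}D_j$ with threshold $\tau$, whose output is precisely $D_k$, while $B_k$ is by definition the lexicographically maximal feasible subset of that same residual set.

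To prove the single-bin claim, I would induct on $m=|S|$ with respect to the universal ordering. Let $c_1$ be the chore with the smallest universal index in $S$; by IDO it has the largest cost. There are two cases. If $v(c_1)>\tau$, then no feasible subset contains $c_1$ (or any chore of that index position whose cost exceeds $\tau$), FFD skips $c_1$, and the induction hypothesis applied to $S\setminus\{c_1\}$ concludes. If $v(c_1)\le\tau$, then FFD includes $c_1$ and proceeds on $S\setminus\{c_1\}$ with the reduced threshold $\tau-v(c_1)$. The heart of the argument is to show that some lexicographically maximal feasible subset $B^\star$ of $S$ contains (an equivalent of) $c_1$: if $B^\star$ contains a chore of cost $v(c_1)$, one swaps it with $c_1$ without changing the lex-sorted cost vector; if $B^\star$ contains no chore of cost $v(c_1)$, then $\{c_1\}$ alone is feasible and lex-strictly-greater than $B^\star$, a contradiction. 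Once $c_1\in B^\star$, the set $B^\star\setminus\{c_1\}$ is lexicographically maximal among feasible subsets of $S\setminus\{c_1\}$ for threshold $\tau-v(c_1)$, and the inductive hypothesis forces FFD on the reduced instance to produce something lex-equal to $B^\star\setminus\{c_1\}$; prepending $c_1$ yields $D\lex{=}B^\star$.

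The main obstacle is handling ties in costs cleanly: different chores may share the same cost, so ``the'' lex-max subset is not unique, and one must be careful that the swap in the second case preserves the lex-sorted cost vector. The $B[p]$ notation and the universal-order tie-breaking fixed in \Cref{sec:pre} make this essentially bookkeeping, as explained in the paragraph after the definition of lexicographic order: for a single cost function, chores with equal cost are interchangeable and $\lex{=}$ may be treated as equality of sorted cost sequences. With that identification the swap is immediate, and the two-level induction goes through without further subtlety.
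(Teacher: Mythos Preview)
Your proof is correct but takes a genuinely different route from the paper's. The paper gives a short direct argument: since $v(D_k)\le\tau$, we have $B_k\lex{\ge}D_k$ by maximality of $B_k$; for the reverse inequality, suppose $D_k\lex{<}B_k$, let $q$ be the first position where $v(D_k[q])<v(B_k[q])$, and observe that when FFD was building $D_k$ and reached position $q$, the chore $B_k[q]$ was still available, fit (because $v(B_k)\le\tau$ and the earlier positions agree), and was larger than $D_k[q]$, so FFD would have taken it---contradiction. That is the entire proof: one contradiction at the first differing coordinate.

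Your approach instead isolates a reusable single-bin lemma and proves it by induction on $|S|$, peeling off the largest chore and recursing with a reduced threshold. This is longer but has its own merits: it makes explicit the recursive structure of FFD (output on $(S,\tau)$ equals $c_1$ prepended to output on $(S\setminus\{c_1\},\tau-v(c_1))$ when $v(c_1)\le\tau$), and the single-bin lemma is a clean standalone statement. The paper's argument buys brevity and avoids the bookkeeping around ties that you flagged; your argument buys a more modular decomposition. Both are sound, and the tie-handling you worried about is indeed just bookkeeping once one adopts the paper's convention that $\lex{=}$ identifies bundles with identical sorted cost vectors.
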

\begin{proof}
By the description of FFD, we have $v(D_k)\le \tau$. By the definition of lexicographically maximal bundle, we have $B_k\lex{\ge}D_k$. To show equality, it is sufficient to prove $D_k\lex{\ge}B_k$ for all $i\in[n]$.
Suppose the contrary, and 
let $i$ be the smallest integer for which $D_k\lex{<}B_k$.
Let $q$ be the smallest integer for which $v(D_k[q])<v(B_k[q])$. 
When FFD constructs bundle $D_k$,
the chore $B_k[q]$ is available, it is processed before $D_k[q]$,
and the cost of the current bundle would not exceed $\tau$ if $B_k[q]$ is put into it at that moment, since $v(B_k)\leq \tau$. 
Therefore, FFD would put $B_k[q]$ into the current bundle before $D_k[q]$ --- a contradiction.
So we have $D_k\lex{\ge}B_k$, which gives us $D_k\lex{=}B_k$ as claimed.
\end{proof}

Next, we show that if we use HFFD to generate the bundle in some iteration $k$, then the bundle should be lexicographical greater than or equal to the $k$-th benchmark bundle. We call this property \emph{First Fit Valid}, and define it formally below.

\begin{definition}[First Fit Valid]
\label{def-first-fit-valid}
Given a bundle collection $\allocs$, a cost function $v$ and a threshold $\tau$, 
with $\tau\geq \text{MMS}(\items, v, n)$,
we call the tuple $(\items, \allocs,v, \tau)$ \emph{First Fit Valid} if for any $k\in[n]$, we have $A_k\lex{\ge}\bundle_k$, where $\bundle_k$ is the $k$-th benchmark bundle of $\allocs$,%
\footnote{
Note that $B_k$ is lexicographically-maximal among all subsets of $\bigcup_{j<k}A_j$ with cost at most $\tau$, 
but 
$A_k$ may be lexicographically larger than $B_k$ since its cost may be larger than $\tau$.
}
and the lexicographic comparison uses the cost function $v$.
\end{definition}

With the notion of First Fit Valid, we can characterize the output of HFFD algorithm as follows. 
Recall that $v_{\last},h_{\last}$ are the cost function and threshold of the last agent in HFFD.
\begin{proposition}\label{prop-HFFD-first-fit}
For any IDO instance $\inst$ and any threshold vector $(h_1,\ldots,h_n)$,
if Algorithm HFFD
produces a bundle collection $\allocs$,
then $(\items,\allocs,v_{\last},h_{\last})$ is a First Fit Valid  tuple. 
\end{proposition}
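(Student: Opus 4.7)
The plan is to prove the proposition by contradiction, working one iteration $k$ at a time and exploiting the fact that the last agent $\last$ belongs to $T$ throughout iteration $k$. Assume $A_k \lex{<} B_k$ and let $q$ be the smallest index at which the sorted value sequences disagree, so that $v_\last(A_k[p]) = v_\last(B_k[p])$ for $p<q$ and $v_\last(A_k[q]) < v_\last(B_k[q]) =: V^\ast$. Because the sorted value sequence of $A_k$ is non-increasing, this forces $|\{c \in A_k : v_\last(c) \ge V^\ast\}| = q-1$.

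Next, let $r$ denote the number of positions with value strictly greater than $V^\ast$ --- the same in $A_k$ and in $B_k$, by the agreement at positions $1,\dots,q-1$ --- and let $q_1 \ge q$ denote the number of positions in $B_k$ with value $\ge V^\ast$. Then $B_k$ contains $q_1 - r$ chores of value exactly $V^\ast$, and $\sum_{p \le q_1} v_\last(B_k[p]) \le v_\last(B_k) \le h_\last$. The IDO property, combined with tie-breaking by the universal order, guarantees that chores of value $>V^\ast$ precede chores of value $=V^\ast$, which in turn precede chores of value $<V^\ast$. Writing $H^{>}$ and $H^{=}$ for the sets of remaining chores of value $>V^\ast$ and $=V^\ast$ respectively, HFFD's iteration $k$ processes all of $H^{>}$ before $H^{=}$, and so the partial bundle right after the $H^{>}$ phase is exactly $A_k \cap H^{>}$, a set of size $r$ with $v_\last$-cost $\sum_{p \le r} v_\last(A_k[p]) = \sum_{p \le r} v_\last(B_k[p])$.

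The key step is to argue that HFFD then inserts at least the first $q_1 - r$ chores of $H^{=}$, in universal order. For the $j$-th such candidate, the last agent's threshold check reduces to $\sum_{p \le r} v_\last(B_k[p]) + j \cdot V^\ast \le h_\last$; at $j = q_1 - r$ this equals $\sum_{p \le q_1} v_\last(B_k[p])$, which is at most $h_\last$ by feasibility of $B_k$, so the inequality also holds for every smaller $j$. Since $\last \in T$ throughout iteration $k$, HFFD inserts a chore whenever the last agent allows it, so all of these $q_1 - r$ insertions take place. Therefore $A_k$ contains at least $r + (q_1 - r) = q_1 \ge q$ chores of value $\ge V^\ast$, contradicting the count $q-1$ obtained above.

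The main difficulty I expect is the careful handling of ties at the critical value $V^\ast$: rather than tracking only the single chore $B_k[q]$, one must group together all chores of value exactly $V^\ast$ and account for the full count $q_1 - r$ in $B_k$. The IDO universal ordering is what makes the block-by-block analysis of HFFD's processing rigorous, and the feasibility bound $v_\last(B_k) \le h_\last$ --- immediate from $B_k$ being a lex-max feasible subset of the remaining chores --- is what forces the last agent to permit the needed insertions.
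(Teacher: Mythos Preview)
Your proof is correct and rests on the same contradiction strategy as the paper: assume $A_k \lex{<} B_k$, take the first disagreeing position $q$, and exploit that $\last \in T$ throughout iteration $k$ together with the feasibility bound $v_\last(B_k)\le h_\last$ to force HFFD to accept more high-value chores than $A_k$ actually contains. The difference is one of granularity. The paper tracks the single chore $B_k[q]$: since it precedes $A_k[q]$ in the universal order and adding it keeps the last agent within threshold, HFFD would insert it before $A_k[q]$, which already yields the contradiction. You instead group all remaining chores at the critical value $V^\ast$ and count, showing HFFD must accept at least $q_1\ge q$ chores of value $\ge V^\ast$ while $A_k$ has only $q-1$. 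Your block-counting version is more explicit about ties at $V^\ast$ (the paper handles these tersely via its convention that chores of equal $v_\last$-cost are interchangeable), so your argument is more self-contained, at the price of being longer than the paper's two-line treatment.
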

\begin{proof}
For all $k\geq 1$, 
we have to prove that $A_k\lex{\ge}B_k$, where $B_k$ is the $k$-th benchmark bundle of $\allocs$, where the lexicographic comparison uses the cost function $v_{\last}$.

if $A_k\lex{=}B_k$ we are done. 
Otherwise, let $q$ be the smallest index such that $\vai[\last]{A_k[q]}\neq \vai[\last]{B_k[q]}$.
Suppose for contradiction that 
$\vai[\last]{A_k[q]}<\vai[\last]{B_k[q]}$.
This implies that $B_k[q]$ precedes $A_k[q]$ in the universal ordering. 
In addition, if we add chore $B_k[q]$ to the current bundle before chore $A_k[q]$ is added, the cost of the current bundle would be acceptable by agent $\last$,
since $v_{\last}(B_k)\leq \tau$ by definition of a benchmark bundle. 
Therefore, HFFD would insert chore $B_k[q]$ to the current bundle before chore $A_k[q]$, which is a contradiction.
This implies that $\vai[\last]{A_k[q]}> \vai[\last]{B_k[q]}$.
Since $q$ is the first index in which $A_k$ and $B_k$ differ, 
$A_k\lex{>}B_k$.%
\end{proof}


We illustrate Proposition \ref{prop-HFFD-first-fit} on the HFFD run in Example \ref{exm:chores} with threshold $75$. 
Note that the last agent is of type A, 
so we use the type A cost function for the analysis.
For $k=1$, the benchmark bundle is the lexicographically-maximal subset of $\items$ with sum at most $75$, which is $B_1 = \{c_1,c_7,c_8\}$. 
Indeed, $A_1 = \{c_1,c_6\}$ is lexicographically larger than $B_1$.
For $k=2$, the benchmark bundle 
is the lexicographically-maximal subset of $\items\setminus \{c_1,c_6\}$ with sum at most $75$, which is $B_2 =\{c_2,c_3,c_7\}$. Indeed, $A_2 = B_2$. 
Similarly, it can be checked that $A_k=B_k$ for $k=3,4$.

The above propositions imply the following corollary,
showing the relationship between the First Fit Valid property and the FFD algorithm.
\begin{corollary}\label{cor-ffdsame}
Let $(\items,\allocs,v,\tau)$
be 
a First Fit Valid tuple, and $k\in[n]$ an integer such that  $v(\alloci[j])\le \tau$ for all $j\le k$.
Then 
the first $k$ bundles in FFD$(\items,v,\tau)$ are
lexicographically equivalent to
$(\alloci[1],\dots,\alloci[k])$.
\end{corollary}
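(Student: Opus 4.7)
The plan is to induct on $j$ from $1$ to $k$, showing that the $j$-th bundle $D_j$ produced by FFD$(\items,v,\tau)$ satisfies $D_j \lex{=} A_j$. Throughout I use the convention noted in the paper that, under a single cost function, $\lex{=}$ may be treated as $=$, so a lex-maximal subset depends only on the multiset of chore costs in the available pool. The core driver is a three-step squeeze: (i) by Proposition~\ref{prop-lexiffd}, $D_j \lex{=} B_j^{\mathbf{D}}$, the $j$-th benchmark bundle of the FFD output; (ii) by the First Fit Valid hypothesis, $A_j \lex{\ge} B_j^{\allocs}$, the $j$-th benchmark bundle of $\allocs$; (iii) since $v(A_j)\le \tau$ by assumption, $A_j$ is itself a candidate in the definition of the lex-maximal subset used for $B_j^{\allocs}$, giving $B_j^{\allocs} \lex{\ge} A_j$. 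Combining (ii) and (iii) yields $A_j \lex{=} B_j^{\allocs}$.

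For the base case $j=1$, both benchmark bundles are taken over all of $\items$, so $B_1^{\mathbf{D}} \lex{=} B_1^{\allocs}$ trivially, and the squeeze immediately gives $D_1 \lex{=} A_1$. For the inductive step, assume $D_{j'} \lex{=} A_{j'}$ for every $j' < j$. Then the set $\items \setminus \bigcup_{j'<j} D_{j'}$ and the set $\items \setminus \bigcup_{j'<j} A_{j'}$ have the same multiset of chore costs, so the $j$-th benchmark bundles $B_j^{\mathbf{D}}$ and $B_j^{\allocs}$ are lex-equivalent. Applying the three-step squeeze to this common benchmark yields $D_j \lex{=} B_j^{\mathbf{D}} \lex{=} B_j^{\allocs} \lex{=} A_j$, completing the induction.

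The main obstacle is step (iii) above and the passage from ``lex-equal prefix'' to ``equivalent remaining pool''. For (iii) we need the hypothesis $v(A_j)\le\tau$ precisely so that $A_j$ qualifies for the lex-max selection; this is exactly where the corollary's assumption is used, and without it the First Fit Valid definition (which allows $v(A_j)>\tau$, as noted in the footnote of Definition~\ref{def-first-fit-valid}) would leave $A_j$ possibly strictly lex-greater than $B_j^{\allocs}$. For the pool-equivalence step, the cleanest way to formalise it is to identify chores with identical cost under $v$, so that at each inductive stage one may, if desired, relabel so that $D_{j'}$ and $A_{j'}$ coincide as sets; the lex-maximal subset is then literally the same bundle in both instances. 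Once these two points are made rigorous, the squeeze argument carries through and the corollary follows.
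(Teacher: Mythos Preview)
Your proposal is correct and follows essentially the same approach as the paper: the squeeze $A_j \lex{\ge} B_j^{\allocs} \lex{\ge} A_j$ (using First Fit Valid and the hypothesis $v(A_j)\le\tau$) together with Proposition~\ref{prop-lexiffd} is exactly what the paper uses. The only difference is that you make the induction on $j$ explicit in order to identify $B_j^{\mathbf{D}}$ with $B_j^{\allocs}$, whereas the paper compresses this step into the single sentence ``By Proposition~\ref{prop-lexiffd}, $B_j$ is the output of the $j$-th iteration of FFD$(\items,v,\tau)$'', leaving the reader to observe that once $A_{j'}\lex{=}D_{j'}$ for $j'<j$ the remaining pools coincide.
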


\begin{proof}
By the property of First Fit Valid, for all $j\in[n]$ we have $\alloci[j]\lex{\ge}B_j$, where $B_j$ is the $j$-th benchmark bundle of $\allocs$. 
By the definition of benchmark bundle, $B_j$ is lexicographically maximal among subsets of remaining items with cost at most $\tau$. So $B_j\lex{\ge}\alloci[j]$ as $v(\alloci[j])\le \tau$. This implies that $\alloci[j]\lex{=}B_j$. By Proposition \ref{prop-lexiffd}, 
$B_j$ is the output of the $j$-th iteration of FFD$(\items, v,\tau)$. So $\alloci[j]$ is lexicographical equivalent to the $j$-th bundle  output by FFD.
\end{proof}


The following simple property of First Fit Valid tuples will be useful later.
\begin{lemma}
\label{lem-lex-larger-implies-cost-larger}

Let $(\items,\allocs,v,\tau)$ be a First Fit Valid tuple.
For any $k\in[n]$, 
and for any subset $C\subseteq \items \setminus 
\cup_{i<k} A_i$, 
if $C\lex{>}A_k$ then $v(C)>\tau$.
\end{lemma}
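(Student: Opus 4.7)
The plan is to prove this by contradiction, leveraging directly the definitions of benchmark bundle and First Fit Valid. Suppose toward contradiction that there is some $C \subseteq \items \setminus \bigcup_{i<k} A_i$ with $C \lex{>} A_k$ yet $v(C) \le \tau$.

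First I would invoke the definition of the $k$-th benchmark bundle $B_k$ of $\allocs$. Since $C$ is a subset of $\items \setminus \bigcup_{i<k} A_i$ satisfying $v(C) \le \tau$, the set $C$ belongs to the family over which lexicographic maximization defines $B_k$. Hence $B_k \lex{\ge} C$.

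Next I would apply the First Fit Valid hypothesis at index $k$, which gives $A_k \lex{\ge} B_k$. Chaining these two lexicographic inequalities, we obtain $A_k \lex{\ge} B_k \lex{\ge} C$, i.e.\ $A_k \lex{\ge} C$. This directly contradicts the assumption $C \lex{>} A_k$, completing the proof.

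There is no real obstacle here: the argument is essentially a one-step unfolding of the two defining inequalities, combined transitively. The only subtlety worth mentioning is that the lexicographic order $\lex{\le}$ is a total order on subsets (as noted in the definition), so the chain of comparisons is meaningful and does not require further justification. The whole proof fits in a few lines.
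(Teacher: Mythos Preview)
Your proof is correct and is essentially the same as the paper's: both argue by contradiction, use the lexicographic maximality of the benchmark bundle $B_k$ to get $B_k \lex{\ge} C$, and combine this with the First Fit Valid inequality $A_k \lex{\ge} B_k$ to reach a contradiction. The only cosmetic difference is that the paper chains to conclude $B_k \lex{>} A_k$ (contradicting First Fit Valid), whereas you chain to conclude $A_k \lex{\ge} C$ (contradicting $C \lex{>} A_k$); these are the same argument.
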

\begin{proof}
Suppose for contradiction that $v(C)\leq \tau$.
Since the $k$-th benchmark bundle $B_k$ is lexicographically maximal among the subsets of $\items \setminus 
\cup_{i<k} A_i$ with cost at most $\tau$, 
this implies 
$B_k \lex{\geq}C$.
By transitivity, 
$B_k \lex{>}A_k$.
But this contradicts the definition of First Fit Valid.
\end{proof}

We also show that any chore unallocated by HFFD cannot be too small. This was already proved in \cite{DBLP:conf/sigecom/HuangL21},
but we give a stand-alone proof below, using the First Fit Valid terminology. 
\begin{lemma}
\label{lem-nosmall}
Suppose that $(\items,\allocs, v,\tau)$ is First Fit Valid and $\mu := MMS(\items, v,n)$. 
If a chore $c^*$ is not in any bundle, that is $c^*\notin \bigcup_k\alloci[k]$, then $v(c^*)>\tau-\mu$.
\end{lemma}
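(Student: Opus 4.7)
The plan is to argue by contradiction, assuming $v(c^*) \le \tau - \mu$, and to use Lemma~\ref{lem-lex-larger-implies-cost-larger} once for each bundle to derive a lower bound on $v(\alloci[k])$, summing these bounds to contradict the fact that the total cost of chores is at most $n\mu$.

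\medskip

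\noindent\textbf{Step 1 (Local bound on each bundle).} Fix $k\in[n]$. Since $c^*$ is unallocated, $c^*\in \items\setminus\bigcup_{j<k}\alloci[j]$, and thus $\alloci[k]\cup\{c^*\}$ is also a subset of $\items\setminus\bigcup_{j<k}\alloci[j]$. Adding a chore with positive cost strictly increases a bundle in the lexicographic order, so $\alloci[k]\cup\{c^*\}\lex{>}\alloci[k]$. By Lemma~\ref{lem-lex-larger-implies-cost-larger}, this implies $v(\alloci[k]\cup\{c^*\})>\tau$, i.e., $v(\alloci[k])>\tau-v(c^*)$. Under the contradiction hypothesis $v(c^*)\le\tau-\mu$, we obtain $\tau-v(c^*)\ge\mu$, hence $v(\alloci[k])>\mu$ for every $k\in[n]$.

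\medskip

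\noindent\textbf{Step 2 (Global contradiction via MMS).} Summing over the $n$ (disjoint) bundles yields $\sum_{k=1}^n v(\alloci[k])>n\mu$. Since the $\alloci[k]$ are pairwise disjoint subsets of $\items$, we have $v(\items)\ge \sum_k v(\alloci[k])>n\mu$. On the other hand, by the definition of $\mu=\mms(\items,v,n)$, there exists a partition of $\items$ into $n$ parts each of cost at most $\mu$, so $v(\items)\le n\mu$. This is the desired contradiction, proving $v(c^*)>\tau-\mu$.

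\medskip

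\noindent\textbf{Comments.} The essential idea is compact: the FFV property, via Lemma~\ref{lem-lex-larger-implies-cost-larger}, tells us that if $c^*$ were cheap enough to be ``squeezed'' into any bundle without violating the threshold, then $c^*$ would have been swapped into that bundle by lexicographic maximality. The mild obstacle is to notice that we should apply this local obstruction to \emph{every} bundle simultaneously and then aggregate the lower bounds with the MMS upper bound on $v(\items)$; without summing over all $k$ one only gets a bound involving a single $v(\alloci[k])$, which need not be comparable with $\mu$. A minor subtlety (worth flagging but not central) is the edge case $v(c^*)=0$: strictly speaking $\alloci[k]\cup\{c^*\}\lex{=}\alloci[k]$ then, but this case is already handled since $\tau\ge\mu$ is assumed in the definition of First Fit Valid, so $v(c^*)=0\le\tau-\mu$ would force $\tau=\mu$ and an identical summation argument still applies with non-strict inequalities that become strict upon observing that the MMS partition must use all $n$ parts.
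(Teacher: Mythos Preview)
Your proof is correct and uses essentially the same ingredients as the paper's: the paper argues directly via pigeonhole (some $A_k$ has $v(A_k)\le\mu$, then Lemma~\ref{lem-lex-larger-implies-cost-larger} yields $v(c^*)>\tau-v(A_k)\ge\tau-\mu$), whereas you take the contrapositive and sum over all bundles, but the core idea---combining Lemma~\ref{lem-lex-larger-implies-cost-larger} with the bound $v(\items)\le n\mu$---is identical. Your edge-case remark about $v(c^*)=0$ is confused and unnecessary (the claim ``$0\le\tau-\mu$ forces $\tau=\mu$'' is false, and the lemma would genuinely fail for a zero-cost chore); simply rely on the paper's standing assumption that chore costs are positive.
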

\begin{proof}
Since the MMS is $\mu$, the sum of all chore costs is at most $n \mu$. By the pigeonhole principle, 
at least one of the bundles $A_k$ 
must have $v(A_k)\leq  \mu$.
Since $\alloci[k]\cup\{c^*\}\lex{>}\alloci[k]$,
Lemma \ref{lem-lex-larger-implies-cost-larger}
implies 
$v(\alloci[k]\cup\{c^*\})>\tau$.
By additivity, $v(c^*) > \tau-v(A_k) \geq  \tau-\mu$.
\end{proof}

\section{Reduction from First Fit Valid to FFD}
\label{sec:reduction}
In Section \ref{sec:FFabstraction}, we proved that First Fit Valid is a necessary condition for the output of HFFD algorithm. However, from the view of the last agent $\last$, 
in a First Fit Valid tuple 
$(\items, \allocs, v,\tau)$,
the allocation $\allocs$
might not be a possible output of FFD$(\items, v,\tau)$ (see Example \ref{exm:chores}).
In this section, we aim to ``reduce'' the First Fit Valid tuple to another First Fit Valid tuple in 
which the allocation \emph{is} a possible output of FFD, such that the unallocated chores and the threshold of the last agent remain the same. 

Intuitively, we focus on the view of the last agent $\last$ because the last agent participates in ``bidding" over all the previous $n-1$ bundles, so the costs of  all bundles are large for this agent.





\subsection{Tidy Up Tuple}

Previous work \cite{DBLP:journals/siamcomp/CoffmanGJ78} showed that a \emph{minimal instance} in which not all chores can be allocated has some nice  properties that can be used in the approximation ratio analysis. 
In a similar spirit, we show that, given a First Fit Valid tuple, we can construct a possibly smaller tuple with nice properties. We call this kind of tuple a \emph{tidy-up tuple}, 
and the procedure doing this construction the \emph{tidy-up procedure}.

Before we present the Tidy-Up Procedure, we introduce two related terms. They are helpful to identify redundant chores and bundles.
Intuitively, if a chore is allocated after the cost of the bundle reaches the maximin share, then we can remove it and this does not influence unallocated chores. These chores can be considered redundant. This motivates the following definition.

\begin{definition}
Given a cost function $v$, a threshold $\mu$ and a bundle $B$, a chore $c=B[p]$\footnote{Recall that each chore $c$ is uniquely identified by this notation and the universal ordering.} is called \emph{$\mu$-redundant} if $v\left(\bigcup_{q<p}B[q]\right)\ge \mu$
\end{definition}
To motivate this definition, suppose that  $B$ is a bundle constructed by FFD algorithm. A chore $B[q]$ is redundant if, before it was added to $B$, the total cost of $B$ was already at least the threshold $\mu$.


~~~
Next, we introduce the concept of \emph{domination} between sets of items. 
\begin{definition}\label{def-dominate}
Given a cost function $v$ and a set of chores $\items$, a set $A\subseteq\items$  \emph{dominates} a set $B\subseteq\items$ if there is a mapping $f:B\rightarrow A$ such that $v(f^{-1}(c'))\le v(c')$ for any chore $c'\in A$.
\end{definition}

Note that $f$ can be a many-to-one mapping. Here is an example (where the number in parentheses next to each chore is its cost). Let $A=\{c_0(1), c_1(2),c_2(6)\}$ and $B=\{c_3(2.5),c_4(3),c_5(1.5)\}$. The mapping $f:\{c_3\rightarrow c_2,c_4\rightarrow c_2, c_5\rightarrow c_1\}$ shows that $A$ dominates $B$, where $f^{-1}(c_0)=\emptyset,~ f^{-1}(c_1)=\{c_5\},~ f^{-1}(c_2)=\{c_3,c_4\}$.

The tidy-up procedure is described in Algorithm \ref{alg-tidyup}.
The procedure accepts as an input a First Fit Valid tuple with some unallocated chores, and simplifies it in several steps:
\begin{enumerate}
    \item Remove all unallocated chores except a single one. The rationale is that a single unallocated chore is sufficient for a counter-example.
    \item Remove all chores smaller than the single unallocated chore. The rationale is that these chores are processed after the unallocated chore, so removing them does not affect the unallocated chore.
    \item Remove all chores that are $\tau$-redundant, where $\tau$ is the threshold for First-Fit-Valid. This makes each bundle contain at most one chore that is ``illegal'' for an output of the FFD algorithm, which will simplify the future analysis.
    \item Finally, if any bundle in the original tuple dominates a bundle in the partition defining the MMS, then both bundles are removed, so the total number of bundles in both partitions drops by one. This last step is repeated until there are no more dominating bundles. 
\end{enumerate}
We emphasize that the purpose of the Tidy-Up Procedure is just to give a clear description of how the tidy up is done. Therefore, the run-time complexity of this procedure is not important. 

A tuple $(\items,\allocs, v,\tau)$ is called a \emph{Tidy-Up tuple} if it is the output of Tidy-Up Procedure on a First Fit Valid tuple.

\begin{algorithm}
\KwIn{A First Fit Valid tuple $(\items,\allocs, v,\tau)$ and an unallocated chore $c^* \in \items \setminus \cup_{k} A_k$.}
\KwOut{A Tidy Up tuple $(\items',\allocs',v,\tau)$.}
\BlankLine
\emph{Initialization}: 
Let $\mu := \text{MMS}(\items,v,n)$,
and let $\mathbf{P'}=(P'_1,\dots,P'_{n})$ be a partition of $\items$ such that $v(P'_i)\le \mu$ for all $i\in[n]$.
Let $\items':=\items$ and $\allocs':=\allocs$\;
Let $T:=\items'\setminus(\bigcup_{1\le i\le n}\alloci') = $ the set of unallocated chores in $(\items',\allocs',v,\tau)$\;
\tcp{When we remove a chore, it is removed from $\items'$, $\allocs'$ and $\mathbf{P'}$}
Remove all the chores in $T\setminus\{c^*\}$\;
Remove all chores $c$ with $v(c)<v(c^*)$\; 
Remove all $\mu$-redundant chores\;
\While{there exist $k,j$ such that $\alloci[k]'$ dominates $P_j'$}
{
    Let $f:P_j'\rightarrow\alloci[k]'$ be a domination mapping (Definition \ref{def-dominate})\;
    \For{$c'\in\alloci[k]'$}
    {
    Chore $c'$ belongs to some part in $\mathbf{P}'$; suppose that $c'\in P_r'$\;
    Modify $\mathbf{P}'$ by moving the chore $c'$ to $P_j'$ and moving the set  $f^{-1}(c')$ to $P_r'$\;
    \tcp{$f^{-1}(c')$ could be $\emptyset$}
    }
    \tcp{When the for loop completes, $\alloci[k]'=P_j'$.
    }
    Remove bundles $\alloci[k]'$ and $P_j'$\;
    \tcp{When we remove a bundle $\alloci[k]'$, the index of $\alloci[r]'$ becomes $r-1$ for all $r>k$. Also we remove  from $\items'$ all the chores in $\alloci[k]'$.}
}
\Return {$(\items',\allocs',v,\tau)$}
\caption{Tidy-Up Procedure}\label{alg-tidyup}	
\end{algorithm}

\begin{proposition}\label{prop-tidyup-FFV}
The outcome of the Tidy-Up Procedure is a First Fit Valid tuple. 
\end{proposition}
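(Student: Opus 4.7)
The plan is to show that each of the four operations in Algorithm \ref{alg-tidyup} transforms a First Fit Valid tuple into another First Fit Valid tuple, and then conclude by induction on the operations. For each operation I need to verify (a) the MMS bound $\tau \ge \text{MMS}(\items,v,n)$ for the new tuple, and (b) the lex-comparison $A_k \lex{\ge} B_k$ for every new bundle $A_k$ and its new benchmark $B_k$.

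For each of the first three operations (keeping only the unallocated chore $c^*$, removing chores of cost $<v(c^*)$, and removing $\mu$-redundant chores), the number of bundles $n$ does not change, and only chores are removed from $\items$. Condition (a) is immediate: restricting the original witness partition $\mathbf{P}$ to the surviving chores leaves every part with cost at most $\mu\le \tau$. For condition (b) the general template is that the new benchmark $B_k$ is a valid candidate for the old benchmark $B_k^{\mathrm{old}}$: it sits inside the old item set, it avoids every old preceding bundle $\bigcup_{j<k}A_j^{\mathrm{old}}$ (because the set difference with $\bigcup_{j<k}A_j$ consists of chores that were just removed from $\items$), and it has cost at most $\tau$. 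Hence $B_k^{\mathrm{old}} \lex{\ge} B_k$, and by FFV of the old tuple, $A_k^{\mathrm{old}} \lex{\ge} B_k$. In Step 1 we have $A_k=A_k^{\mathrm{old}}$ and we are done. In Steps 2 and 3 the bundle can shrink, but the removed chores always lie at the lex-tail of $A_k^{\mathrm{old}}$ (in Step 2 they are the only chores of cost $<v(c^*)$, in Step 3 they are the redundant suffix), so $A_k$ agrees with $A_k^{\mathrm{old}}$ on its first $|A_k|$ positions. Assuming for contradiction $A_k \lex{<} B_k$ at some first differing position $i$, a short case analysis transfers this into $A_k^{\mathrm{old}} \lex{<} B_k$, contradicting the inequality just derived. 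The delicate subcase is Step 3 at $i=|A_k|+1$: here $B_k$ agrees with $A_k$ on its first $s_k := |A_k|$ positions, which by the definition of redundancy sum to at least $\mu$, and it has an extra $(s_k{+}1)$-th entry of cost at least $v(c^*)$ (all smaller chores were removed in Step 2). By Lemma \ref{lem-nosmall} applied to the original input tuple, $v(c^*) > \tau - \mu$, which forces $v(B_k) > \tau$ and contradicts $v(B_k) \le \tau$.

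Finally, Step 4 removes one bundle $A_k$ and one part $P_j$ per iteration of the while loop, so $n$ decreases by one. For condition (a), I track the partition swaps: each part $P_r$ with $r\neq j$ exchanges each $c'\in A_k \cap P_r$ for the set $f^{-1}(c')$ of cost at most $v(c')$ by the domination assumption, so $v(P_r)$ does not increase, and the surviving $n-1$ parts form a valid MMS witness for the new item set. For condition (b), new bundles with index $k'<k$ are unchanged and their benchmarks only shrink (the template above applies); new bundles with index $k'\ge k$ equal $A_{k'+1}$ in the old indexing, and a direct computation shows that the items excluded from the new $k'$-th benchmark coincide with $\bigcup_{j\le k'}A_j^{\mathrm{old}}$, which are the items excluded from the old $(k'+1)$-th benchmark; thus FFV is inherited from the old tuple. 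The main obstacle is Step 3: its removed chores are neither small nor unallocated, so the contradiction cannot be obtained from a purely structural cost comparison and instead requires carefully combining the lower bound coming from redundancy ($\sum_{j\le s_k} v(A_k[j]) \ge \mu$) with the quantitative bound $v(c^*) > \tau - \mu$ from Lemma \ref{lem-nosmall} to force a strict violation of $v(B_k)\le \tau$.
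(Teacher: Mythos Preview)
Your proposal is correct and follows essentially the same strategy as the paper: verify that each of the four removal operations preserves First Fit Validity, using the fact that the new benchmark $B_k$ is always a feasible candidate for the old one (so $B_k^{\mathrm{old}}\lex{\ge}B_k$), and then handling the possible shrinkage of $A_k$ by a short case analysis. Your organization differs only in details: for Step~2 you transfer $A_k\lex{<}B_k$ directly to $A_k^{\mathrm{old}}\lex{<}B_k$ without first establishing $v(A_k'\cup\{c^*\})>\tau$ (a small simplification over the paper), and for Step~3 your ``delicate subcase'' is exactly the paper's Case~2 made explicit. One minor imprecision: in that subcase your claim that the first $s_k$ positions of $A_k$ sum to at least $\mu$ holds only when some chore was actually removed from $A_k^{\mathrm{old}}$; when none was removed you have $A_k=A_k^{\mathrm{old}}$ and the template inequality $A_k^{\mathrm{old}}\lex{\ge}B_k$ finishes immediately, so this is harmless.
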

\begin{proof}
We prove that all four removals in the Tidy-Up Procedure preserve the First Fit Valid property. 

In all proofs below, we denote by $B_k$ the $k$-th benchmark bundle of the instance before the removal, and by $B_k'$ the $k$-th benchmark bundle of the instance after the removal.
By definition, $B_k$ is a lexicographically-maximal subset of  $\items\setminus \left(\bigcup_{i<k}A_i\right)$,
and $B_k'$ is a lexicographically-maximal subset of $\items'\setminus \left(\bigcup_{i<k}A_i\right)$,
where $\items' \subset \items$.
Since
$B_k'$ is lexicographically maximal in a smaller set than $B_k$, we have  $B_k\lex{\ge} B'_k$.

\paragraph{1. Removing unallocated chores.}
We prove that, given any First Fit Valid tuple $(\items,\allocs, v,\tau)$, if we remove any unallocated chore 
$c'\in \items\setminus\cup_i A_i$ 
from $\items$, the remainder  is First Fit Valid.
Let $B_k$ be the  $k$-th benchmark bundle according to $\items$ and $B_k'$ be the
$k$-th
benchmark bundle according to $\items\setminus c'$. 
We have 
$\alloci[k]\lex{\ge}B_k$ by definition of First Fit Valid.
When an unallocated chore is removed, any bundle $A_k$ does not change. Since $B_k\lex{\ge} B_k'$, by transitivity, 
$\alloci[k]\lex{\ge}B_k'$, so the tuple after the removal is  First Fit Valid.

\paragraph{2. Removing chores smaller than an unallocated chore.}
Let $c^*$ be an unallocated chore. We prove that, if we remove all chores smaller than $c^*$ from a First Fit Valid tuple
$(\items, \allocs, v, \tau)$,
the remaining tuple, which we denote by
$(\items', \allocs', v, \tau)$,
is still First Fit Valid. 

For any $k\in[n]$,
let $p$ be the smallest integer such that $v(\alloci[k][p+1])<v(c^*)$.
Then we have $\alloci[k]'=\bigcup_{q\le p}\alloci[k][q]$. 
~
Note that $\alloci[k]'\cup \{c^*\} \lex{>} \alloci[k]$ since $c^*$ is larger than all chores in $\alloci[k]\setminus \alloci[k]'$.
Since $c^*$ is unallocated, 
$\alloci[k]'\cup \{c^*\} \subseteq \items \setminus \cup_{i<k} A_i$.
Therefore, by Lemma \ref{lem-lex-larger-implies-cost-larger},
$v(\alloci[k]'\cup \{c^*\})>\tau$.

Let $B_k'$ be the $k$-th benchmark bundle of $\allocs'$. 
Assume for contradiction 
that $B_k'\lex{>}\alloci[k]'$, and consider two cases.
\begin{itemize}
\item Case 1: $\alloci[k]'$ is not a subset of $B_k'$.
This means that there is some $q\leq p$ for which $\alloci[k]'[q] \neq B_k'[q]$ and
$v(\alloci[k]'[q]) < v(B_k'[q])$.
But $\alloci[k]'[q]=\alloci[k][q]$,
so this implies 
$B_k'\lex{>}\alloci[k]$ too.
Since $B_k \lex{\geq} B_k'$, by transitivity 
$B_k \lex{>}\alloci[k]$, which contradicts that the original tuple is First Fit Valid.
\item Case 2: $\alloci[k]'$ is a strict subset of $B_k'$.
Let $c'$ be a chore in $B_k'\setminus \alloci[k]'$.
Note that $v(c') \geq v(c^*)$, as $c^*$ is the smallest chore after the removal.
So $v(B_k')\ge v(\alloci[k]'\cup \{c^*\})$.
By transitivity, 
$v(B_k')>\tau$, which contradicts the definition of $B_k'$ as a benchmark bundle.
\end{itemize}
Therefore, we must have $\alloci[k]'\lex{\ge}B_k'$, so the tuple after the removal is First Fit Valid.  

\paragraph{3. Removing redundant chores.}

Let $\mu := \text{MMS}(\items,v,n)$. We prove that if we remove all $\mu$-redundant chores
from a First Fit Valid tuple
$(\items, \allocs, v, \tau)$,
the remaining tuple 
$(\items', \allocs', v, \tau)$
is First Fit Valid. 

Let $\alloci[k]$ be some bundle with $\mu$-redundant chores. 
Let $p$ be the smallest integer such that $v\left(\bigcup_{q\le p}\alloci[k][q]\right)\ge \mu$. We have $\alloci[k]'=\bigcup_{q\le p}\alloci[k][q]$.  By Lemma \ref{lem-nosmall}, 
for any unallocated chore $c^*$,
$v(c^*)>\tau-\mu$, so $\left(\alloci[k]'\cup\{c^*\}\right)>\tau$. 
Now we just need to repeat the argument in paragraph 2 above to prove that the remaining tuple is First Fit Valid.

\paragraph{4. Removing entire bundles.}
We prove that, given any First Fit Valid tuple $(\items,\allocs, v,\tau)$, if we remove one bundle $A_k$, the remainder $(\items\setminus \alloci[k],~ (A_1,\ldots,A_{k-1},A_{k+1},\ldots, A_n),~ v,\tau)$ is First Fit Valid.
Let us not change the index of any bundle at this moment. For all $j>k$, the $j$-th benchmark bundle $B_j$ is the same before and after the removal, and we have $\alloci[j]\lex{\ge}B_j$ by definition of First Fit Valid.
For $j<k$, we have $B_j\lex{\ge} B'_j$, 
so by transitivity $\alloci[j]\lex{\ge}B_j\lex{\ge}B'_j$. Therefore, the tuple after the removal
is First Fit Valid. 
\end{proof}


Now, we analyze the properties of a Tidy-Up tuple with an unallocated chore.
\begin{lemma}[Tidy-Up Lemma]
\label{lem-tidyup}
Given a First Fit Valid tuple $(\items,\allocs, v,\tau)$ such that MMS$(\items,v,n)=\mu$ and $\tau\geq \mu$ and there is an unallocated chore $c^*$, the Tidy-Up tuple $(\items',\allocs', v,\tau)$:=Tidy-Up-Procedure$(\items,\allocs, v,\tau,c^*)$ has the following properties
(where $\mathbf{P}'$ is the partition at the end of Algorithm \ref{alg-tidyup}):
\begin{enumerate}
\item 

Subset properties:
\begin{enumerate}
\item \label{cond:subset} 

For each $k\in[n']$, there exists some $j\in [n]$ such that $\alloci[k]'\subseteq\alloci[j]$.

\item \label{cond:no-redundant}  
Every bundle $\alloci[k]'$ has no $\mu$-redundant chores for all $k\in[n']$.
\item \label{cond:smallest} Chore $c^*$ is in $\items'$, is the smallest chore in  $\items'$, and is the only chore 
not in any bundle $\alloci[k]'$.

\item 
\label{cond:hz} 
All chores in $\items'$ have a cost larger than $\tau-\mu$.

\item \label{cond:maxvalue} We have
$v(P_k')\le \mu$ for every bundle $P_k'$;
therefore, MMS$(\items',v,n')\le \mu$. 
\end{enumerate}

\item
No domination properties:
\begin{enumerate}
\item
\label{cond:atleasttwochores} 
Bundle $\alloci[k]'$ contains at least two chores
for all $k\in[n']$.
\item \label{cond:twochores} In every bundle $\alloci[k]'$, the sum of costs of the two largest chores is less than $\mu$.
 
\item \label{cond:threechores} Every bundle $P_k'$ contains at least 3 chores.

\end{enumerate}
\end{enumerate}
\end{lemma}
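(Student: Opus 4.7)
The plan is to prove the five subset properties (1)(a)-(e) first, which follow mostly from inspecting the Tidy-Up Procedure (Algorithm~\ref{alg-tidyup}), and then the three no-domination properties (2)(a)-(c), which rely on the termination condition of the while loop combined with the First Fit Valid property established in Proposition~\ref{prop-tidyup-FFV}. Throughout, I will use that the procedure is applied to a First Fit Valid tuple with an unallocated chore $c^*$ and that $\tau\ge\mu$.

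For the subset properties: (1)(a) is immediate, since the procedure never inserts a chore into a bundle; it only removes chores or entire bundles. Property (1)(b) follows from step~3, which explicitly strips $\mu$-redundancies, together with the observation that removing chores from a bundle only decreases its prefix sums and therefore cannot make a previously non-redundant chore redundant. Property (1)(c) follows from steps~1 and~2: step~1 removes every unallocated chore other than $c^*$; step~2 removes every chore smaller than $c^*$; and the subsequent steps do not touch $c^*$ since $c^*$ never lies in any $\alloci[k]'$. Property (1)(d) is obtained by applying Lemma~\ref{lem-nosmall} to the output tuple, which is itself First Fit Valid by Proposition~\ref{prop-tidyup-FFV}, giving $v(c^*)>\tau-\mu$; by (1)(c), every other chore in $\items'$ is at least as large as $c^*$. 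Property (1)(e) requires tracking $\mathbf{P}'$ through the while loop: when chore $c'\in\alloci[k]'$ (originally in some $P_r'$) and $f^{-1}(c')\subseteq P_j'$ are swapped, the new value of $P_r'$ becomes $v(P_r')-v(c')+v(f^{-1}(c'))\le v(P_r')\le\mu$ by the domination condition $v(f^{-1}(c'))\le v(c')$; the part $P_j'$, whose value ends up equal to $v(\alloci[k]')$ and may exceed $\mu$, is removed together with $\alloci[k]'$ at the end of the iteration, so its cost need not be bounded. The MMS bound in (1)(e) then follows from the existence of the partition $\mathbf{P}'$.

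For the no-domination properties I would proceed in the order (2)(c), (2)(a), (2)(b). For (2)(c): suppose $|P_k'|\le 2$. If $P_k'=\{c_1\}$, then either $c_1=c^*$ (and any non-empty $\alloci[j]'$ dominates $\{c^*\}$ via its top chore by (1)(c)) or $c_1\in\alloci[j]'$ for some $j$ (and $\alloci[j]'$ dominates $\{c_1\}$ by identity), contradicting termination. If $P_k'=\{c_1,c_2\}$ with $v(c_1)\ge v(c_2)$, let $j_1$ be the first bundle containing either $c_1$ or $c_2$ (treating $c^*$ as unallocated) and apply FFV at step $j_1$ to the candidate $\{c_1,c_2\}$, which lies in the remaining items at time $j_1$ with cost at most $\mu\le\tau$: this forces $\alloci[j_1]'\lex{\ge}\{c_1,c_2\}$, and a short case split on whether the lex-comparison is strict at position~1 or~2 yields two distinct chores in $\alloci[j_1]'$ of values at least $v(c_1)$ and $v(c_2)$, giving a domination, contradicting termination. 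For (2)(a): suppose $\alloci[k]'=\{c\}$ is a singleton. By FFV, $B_k'$ is lex-equivalent to $\{c\}$, which forces every pair of remaining chores at step $k$ whose larger element has value at least $v(c)$ to have total cost exceeding $\tau$; combined with $v(P_j')\le\mu\le\tau$ and (2)(c), this severely constrains the $P_{j^*}'$ containing $c$, and a counting argument over $\mathbf{P}'$ then locates a $P_j'$ with $v(P_j')\le v(c)$ that $\{c\}$ dominates. For (2)(b): suppose $v(\alloci[k]'[1])+v(\alloci[k]'[2])\ge\mu$. By (2)(c) every $P_j'$ has at least three chores summing to at most $\mu$, so its top two chores sum to at most $\mu-v(c^*)$. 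Using FFV together with a pigeonhole over the top chores of earlier bundles, I would locate a $P_j'$ whose two largest chores have values at most $v(\alloci[k]'[1])$ and $v(\alloci[k]'[2])$, and then build a two-slot domination mapping into $\alloci[k]'$.

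The main obstacle will be property (2)(a) in the sub-case $v(c)<\mu$, where the singleton $\{c\}$ cannot dominate an arbitrary $P_j'$ by a raw sum comparison. Closing this gap requires combining the FFV constraint on $B_k'$ (which forces the chores of $P_{j^*}'$ other than $c$ to come entirely from $\bigcup_{i<k}\alloci[i]'$) with the partition-cost bound $v(P_j')\le\mu$ and (2)(c), in order to pin down an appropriate $P_j'$ that $\{c\}$ must dominate. An analogous, though less delicate, balance underlies (2)(b).
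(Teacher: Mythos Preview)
Your treatment of the subset properties (1)(a)--(e) is correct and essentially matches the paper's proof; the extra care you take in (1)(e) about how each swap affects the surviving $P_r'$ is exactly what is needed. Your argument for (2)(c) is also sound and close in spirit to the paper's.

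The gap is in (2)(a) and (2)(b), where you are working much harder than necessary, and the ``main obstacle'' you flag in (2)(a) is an artifact of the detour you take. The idea you are missing is this: do not hunt for an arbitrary $P_j'$ to be dominated via counting or pigeonhole; instead, pick the specific $P_j'$ that already contains a designated chore of $\alloci[k]'$, and exhibit the domination directly.

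For (2)(a) with $\alloci[k]'=\{c\}$: since $c^*$ is unallocated, $\{c,c^*\}\subseteq\items'\setminus\bigcup_{i<k}\alloci[i]'$ and $\{c,c^*\}\lex{>}\{c\}=\alloci[k]'$, so by First Fit Validity (Proposition~\ref{prop-tidyup-FFV}) and Lemma~\ref{lem-lex-larger-implies-cost-larger} we get $v(\{c,c^*\})>\tau\ge\mu$. Now let $P_{j^*}'$ be the part of $\mathbf{P}'$ containing $c$. Because $c^*$ is the smallest chore in $\items'$ (your (1)(c)), any second element of $P_{j^*}'$ would already force $v(P_{j^*}')\ge v(\{c,c^*\})>\mu$, violating (1)(e). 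Hence $P_{j^*}'=\{c\}$, which is trivially dominated by $\alloci[k]'=\{c\}$, contradicting termination of the while loop. The sub-case $v(c)<\mu$ never arises.

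For (2)(b) with $v(\alloci[k]'[1])+v(\alloci[k]'[2])\ge\mu$: let $P_{j^*}'$ be the part containing $\alloci[k]'[1]$ and define $f:P_{j^*}'\to\alloci[k]'$ by $f(\alloci[k]'[1])=\alloci[k]'[1]$ and $f(c)=\alloci[k]'[2]$ for every other $c\in P_{j^*}'$. Then
\[
v\bigl(f^{-1}(\alloci[k]'[2])\bigr)=v(P_{j^*}')-v(\alloci[k]'[1])\le\mu-v(\alloci[k]'[1])\le v(\alloci[k]'[2]),
\]
so $f$ witnesses that $\alloci[k]'$ dominates $P_{j^*}'$, again contradicting termination. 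No FFV or pigeonhole is needed here.

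With these direct arguments the paper's order (2)(a), (2)(b), (2)(c) is the natural one, since the paper's proof of (2)(c) in the sub-case $c_2=c^*$ invokes (2)(a) to supply a second chore in $\alloci[k_1]'$. Your reversed order also works, because your FFV comparison against $\{c_1,c_2\}$ recovers that second chore on its own; but the vague ``counting argument'' you propose for (2)(a) is both unnecessary and, as stated, not a proof.
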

\begin{proof}
Condition \ref{cond:subset} holds since Tidy-Up Procedure only removes some chores or bundles from $\allocs$.
Condition \ref{cond:no-redundant} holds since Tidy-Up Procedure removes all $\mu$-redundant chores.

Condition \ref{cond:smallest} holds since Tidy-Up Procedure removes all chores smaller than $c^*$,
and all unallocated chores except $c^*$,
and does not remove the chore $c^*$ itself.
Condition \ref{cond:hz}
follows from Condition \ref{cond:smallest} (all chores are at least as large as $c^*$)
and \Cref{lem-nosmall}
($v(c^*)>\tau-\mu$).

Condition \ref{cond:maxvalue} holds since the Tidy-Up Procedure does not increase the cost of any bundle in $\mathbf{P}'$ which is not removed. So at the end of the algorithm, the partition $\mathbf{P}'$ has $n'$ bundles and the cost of each bundle is no larger than $\mu$. Therefore, MMS$(\items',v,n')\le \mu$.


To prove Condition \ref{cond:atleasttwochores}, suppose for contradiction that $|\alloci[k]'|=1$. Let $c_k\in \alloci[k]'$ be the only chore in the bundle. 
So
$\{c_k,c^*\}\lex{>}\alloci[k]'$.
By Proposition \ref{prop-tidyup-FFV} the tuple $(\items',\allocs',v,\tau)$ is First Fit Valid, 
so by Lemma \ref{lem-lex-larger-implies-cost-larger}
we have $v(\{c_k,c^*\})>\tau$.
Let $P_j'$ be the bundle in $\mathbf{P}'$ that contains $c_k$.
The bundle $P_j'$ cannot contain another chore, because $c^*$ is the smallest chore and $v(\{c_k,c^*\})>\tau\geq \mu$,
while all bundles in $\mathbf{P}'$ have a cost of at most $\mu$ by 
Condition \ref{cond:maxvalue}.
So $\alloci[k]'$ dominates $P_j'$. 
But Tidy-Up Procedure removes all bundles such that $\alloci[k]'$ dominates $P_j'$, which is a contradiction.


To prove Condition \ref{cond:twochores}, suppose for contradiction that $v(\{\alloci[k]'[1],\alloci[k]'[2]\})\geq \mu$.
Let $P_j'$ be the bundle in $\mathbf{P}'$ that contains $\alloci[k]'[1]$.
Construct a mapping $f:P_j'\rightarrow\alloci[k]'$ such that $f(\alloci[k]'[1])=\alloci[k]'[1]$ and $f(c)=\alloci[k]'[2]$ for all $c\neq\alloci[k]'[1]$, so 
$f^{-1}(\alloci[k]'[2])) = P_j'\setminus \alloci[k]'[1]$.
We have 
\begin{align*}
v(
f^{-1}(\alloci[k]'[2]))
& =
v(P_j') - v(\alloci[k]'[1])
\\
& \leq 
\mu - v(\alloci[k]'[1]) && \text{since $v(P_j')\leq \mu$ by Condition \ref{cond:maxvalue},}
\\
& \leq v(\alloci[k]'[2])  && \text{by the contradiction assumption,}
\end{align*}
so
the function $f$ shows that $\alloci[k]'$ dominates $P_j'$. This contradicts the operation of Tidy-Up Procedure.


To prove Condition \ref{cond:threechores}, suppose that there is a bundle $P_j'$ such that $|P_j'|\le 2$. If $|P_j'|=1$, then it is dominated by the bundle $\alloci[k]'$ such that $P_j'\subseteq\alloci[k]'$, which is impossible. 
Suppose that $|P_j'|=2$ and $P_j'=\{c_1,c_2\}$. 

Suppose first that one of these chores, say $c_2$, is the unallocated chore $c^*$. There is only one unallocated chore, so $c_1$ is allocated in some bundle, say $c_1\in\alloci[k_1]'$.
By condition \ref{cond:atleasttwochores}, $\alloci[k_1]'$ contains at least one other chore $c_3$ besides $c_1$. Since $c^*$ is the smallest chore, $v(c_3)\ge c^*$. Therefore, $\alloci[k_1]'$ dominates $P_j'$, which is impossible.

Next, suppose that both $c_1$ and $c_2$ are allocated. 
Suppose that $c_1\in\alloci[k_1]'$ and $c_2\in\alloci[k_2]'$. 
Without loss of generality, we assume that $k_1\le k_2$. 
Since $\allocs'$ is First Fit Valid,
$A_{k_1}'$ is (weakly) leximin larger than all bundles of cost at most $\tau$, that are subsets of $\items' \setminus \cup_{i<k_1} A_i$.
In particular, $P_j'$ is a subset of 
$\items' \setminus \cup_{i<k_1} A_i$, and its cost 
is at most $\mu$, which is at most $\tau$ by assumption. Therefore, $A_{k_1}'\lex{\ge} P_j'$.
Since one chore (namely $c_1$) is common to both $A_{k_1}'$ and $P_j'$, there must be at least one other chore in 
$A_{k_1}'$ with cost at least as large as $c_2$.
Therefore, $\alloci[k_1]'$ dominates $P_j'$, which is impossible. 
\end{proof}

Finally, we relate the Tidy-Up tuple to the output of FFD algorithm. 

\begin{definition}[Last chore]
\label{def-lastchore}
Given any bundle $B$, we call the chore $c\in B$ the \emph{last chore} of bundle $B$ if chore $c=B[q]$, where $q=|B|$. We call all other chores \emph{non-last} chore of bundle $B$. 
\end{definition}

\begin{definition}\label{def-excessive}
Given a Tidy-Up tuple $(\items, \allocs, v,\tau)$, a chore $c\in\alloci[j]$ is called \emph{excessive} if 
$v(\alloci[j])>\tau$
and $c$ is the last chore of bundle $\alloci[j]$.
\end{definition}

\begin{corollary}
In a Tidy-Up tuple $(\items, \allocs, v,\tau)$, if no chores are excessive, then $FFD(\items,v,\tau) = \allocs$.
\end{corollary}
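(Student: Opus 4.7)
The plan is to deduce this statement directly from Corollary \ref{cor-ffdsame} by translating the hypothesis ``no chores are excessive'' into the premise needed by that corollary. Since Proposition \ref{prop-tidyup-FFV} guarantees that any Tidy-Up tuple is First Fit Valid, Corollary \ref{cor-ffdsame} becomes applicable the moment we establish that $v(\alloci[k])\le \tau$ for every $k\in[n]$.

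To verify this bound, I would argue by contraposition against Definition \ref{def-excessive}. A chore $c\in \alloci[j]$ is excessive exactly when $v(\alloci[j])>\tau$ and $c$ is the last chore of bundle $\alloci[j]$. By Condition \ref{cond:atleasttwochores} of Lemma \ref{lem-tidyup}, every bundle in a Tidy-Up tuple contains at least two chores, so each bundle has a well-defined last chore. Consequently, if $v(\alloci[j])>\tau$ held for any $j\in[n]$, then the last chore of $\alloci[j]$ would be excessive, contradicting the assumption that there are no excessive chores. Therefore $v(\alloci[k])\le \tau$ for every $k\in[n]$.

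Invoking Corollary \ref{cor-ffdsame} with $k=n$ then yields that the full output of $FFD(\items,v,\tau)$ is lexicographically equivalent to $(\alloci[1],\dots,\alloci[n])$. Because only a single cost function $v$ is in play, the paper's convention allows us to treat $\lex{=}$ as genuine equality. Moreover, both bundle collections cover the same set $\items\setminus\{c^*\}$: by additivity the total allocated cost is the same on both sides, and since FFD closes after filling $n$ bins the single unallocated chore $c^*$ of the Tidy-Up tuple remains unallocated by FFD. We conclude $FFD(\items,v,\tau)=\allocs$. The only real subtlety is reconciling lexicographic equivalence with the literal equality asserted in the statement, but the paper's single-cost-function convention handles this automatically, so there is no substantial obstacle beyond applying the existing machinery.
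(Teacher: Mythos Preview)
Your proof is correct and follows essentially the same approach as the paper: the paper's proof is simply the two lines ``\,`No excessive chores' means that $v(A_i)\le\tau$ for all $i\in[n]$. So the corollary follows directly from Corollary~\ref{cor-ffdsame}.'' Your version is more detailed---in particular your appeal to Condition~\ref{cond:atleasttwochores} to guarantee a last chore, and your discussion of $c^*$ and of $\lex{=}$ versus $=$, are all extra care the paper omits---but the core argument is identical.
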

\begin{proof}
``No excessive chores'' means that $v(A_i)\leq \tau$ for all $i\in[n]$. So the corollary follows directly from Corollary \ref{cor-ffdsame}.
\end{proof}
Since a Tidy-Up tuple has no 
$\mu$-redundant chores, removing an excessive chore brings the total bundle cost below $\mu$. 
Since $\mu\le \tau$, 
this implies $v(\alloci[j]\setminus \{c\})\leq \tau$.
This is a clue on how  to do the reduction: given an instance of HFFD, reduce the excessive chores one by one, and get an instance of FFD.

\subsection{The reduction}
\label{sub:reduction}

In this section, we show how to do the reduction from chores allocation to job scheduling. 
The reduction works only when, for every agent $i$, the threshold $h_i = \alpha\cdot \text{MMS}_i$, where $\alpha \geq 8/7$.
Given an instance of chores allocation such that HFFD with such thresholds cannot allocate all chores,
we construct a job-scheduling instance in which all jobs 
can fit into $n$ bins of size $MMS_{\last}$,
but FFD cannot allocate all jobs with bin capacity $h_{\last}$. 
This implies that if, for some $\alpha\ge\frac{8}{7}$, MultiFit attains an $\alpha$ approximation of the optimal  makespan, then HFFD attains an $\alpha$ approximation of the maximin share.

Our reduction is based on an induction proof. 
Given a Tidy-Up tuple, we  reduce the number of excessive chores by one at each step. To do this, we find 
the first bundle $A_k$ in the allocation that contains an excessive chore $c^*_k$, 
then we replace this chore by another chore $c^{\circ}_k$ with a smaller cost. 
We call this new chore  $c^{\circ}_k$ a \emph{reduced chore}. 
The cost of $c^{\circ}_k$ is determined such that, if it is allocated instead of the original chore $c^*_k$, it is not excessive any more. 
However, the reduced chore could be allocated to another bundle, and cause a butterfly effect on the whole allocation. 



To bound the  butterfly effect which is caused by the reduced chore, we require the reduced chore to have some special properties that we define below.

\noindent\fbox{\parbox{\textwidth}{%
\begin{definition}
\label{def-suitable-redu}
    {\bf Suitable reduced chore:} Let $(\items, \allocs, v, \tau)$ be a First Fit Valid tuple 
    with $\tau\geq \text{MMS}(\items, v, n)$, 
    with no $\tau$-redundant chores. 
    
    Let $k$ be the smallest integer such that there is an excessive chore $c_k^*\in\alloci[k]$ 
    (the smallest integer such that  $v(\alloci[k])>\tau$).
    Let $c_k^{\circ}$ be 
a chore with $v(c_k^{\circ})<v(c_k^*)$
and $\items'=\items\setminus\{c^*_k\}\cup\{c_k^{\circ}\}$. 
    
Let $\mathbf{D}$ be the output of $FFD(\items',v,\tau)$. The  chore $c_k^{\circ}$ is a \emph{suitable reduced chore} of $c_k^*$ if

 \begin{equation}
  \label{eq:suitable-redu}
     \bigcup_{j\le k}D_j=\left(\bigcup_{j\le k}A_j\right)\setminus\{c_k^*\}\cup\{c_k^{\circ}\}.
  \end{equation}
  
\end{definition}

}}

Intuitively, if a suitable reduced chore exists, then after replacing the first excessive chore $c_k^*$  with a suitable reduced chore $c_k^{\circ}$, FFD  allocates the same set of chores to the first $k$ bundles.
Comparing the bundle collection $(D_1,\dots,D_k,\alloci[k+1],\dots,\alloci[n])$  with bundle collection $\allocs$, the number of excessive chores decreases by at least 1. 
How to find a suitable reduced chore is the key to the whole reduction. 
\begin{lemma}
	\label{lem-suitable-reduced-chore}
	Let $(\items, \allocs, v, \tau)$ be a Tidy-Up tuple such that $\mu := \text{MMS}(\items,v,n)$ and 
	$\tau \ge \frac{8}{7} \mu$.
	If there is an excessive chore, then we can construct a suitable reduced chore. 
\end{lemma}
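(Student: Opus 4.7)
The plan is to choose $v(c_k^{\circ}) := \tau - v(A_k \setminus \{c_k^*\})$, so that $A_k' := A_k \setminus \{c_k^*\} \cup \{c_k^{\circ}\}$ has cost exactly $\tau$. I would first verify basic properties: $v(c_k^{\circ}) < v(c_k^*)$ follows from the excessiveness $v(A_k) > \tau$; and $v(c_k^{\circ}) > \tau - \mu$ follows from Condition~\ref{cond:no-redundant} of the Tidy-Up tuple (no $\mu$-redundant chores) combined with $c_k^*$ being the last chore of $A_k$, which together yield $v(A_k \setminus \{c_k^*\}) < \mu$. This places $c_k^{\circ}$ in the same cost range as the other chores of $\items'$.

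Set $\items' := \items \setminus \{c_k^*\} \cup \{c_k^{\circ}\}$ and $\allocs' := (A_1, \ldots, A_{k-1}, A_k', A_{k+1}, \ldots, A_n)$. The crux of the proof is to show that $(\items', \allocs', v, \tau)$ is First Fit Valid. Once this is established, since $v(A_j') \leq \tau$ for all $j \leq k$ (no excessive chores among the first $k$ bundles of $\allocs'$), Corollary~\ref{cor-ffdsame} ensures that the first $k$ bundles of $\mathrm{FFD}(\items', v, \tau)$ are lexicographically equivalent to $A_1, \ldots, A_{k-1}, A_k'$; under the paper's convention that equal-cost chores are interchangeable, this yields the required set equality $\bigcup_{j \leq k} D_j = (\bigcup_{j \leq k} A_j) \setminus \{c_k^*\} \cup \{c_k^{\circ}\}$.

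To verify FFV of $\allocs'$, I would check $A_i' \lex{\geq} B_i'$ for each $i \in [n]$, where $B_i'$ is the $i$-th benchmark of $\allocs'$ in $\items'$. For $i = k$, the argument shows $B_k' = A_k'$ directly: using the observation that $A_k[1]$ is the globally largest chore in $\items \setminus \cup_{j<k} A_j$ together with the FFV of the original instance and Condition~\ref{cond:twochores} (the two largest chores of $A_k$ sum to less than $\mu$), an inductive argument on positions forces $B_k'[p] = A_k[p]$ for $p \leq |A_k|-1$; the choice $v(c_k^{\circ}) = \tau - v(A_k \setminus \{c_k^*\})$ then ensures that $c_k^{\circ}$ is the largest chore that still fits in the remaining capacity, so $B_k'[|A_k|] = c_k^{\circ}$. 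For $i \neq k$ ($A_i' = A_i$), I argue by contradiction via an exchange: assuming $B_i' \lex{>} A_i$, if $c_k^{\circ} \notin B_i'$ then $B_i'$ is already a feasible subset of $\items \setminus \cup_{j<i} A_j$, directly contradicting FFV of $\allocs$; if $c_k^{\circ} \in B_i'$, swapping $c_k^{\circ}$ back to $c_k^*$ yields a lex-larger candidate $S$, which must be infeasible only when its cost overshoots $\tau$, in which case we trim small chores from $B_i'$ before swapping to restore feasibility while preserving the lex-dominance $S \lex{>} A_i$.

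The main obstacle is the trimming step in the case $i \neq k$ with $c_k^{\circ} \in B_i'$ and a direct swap overflowing $\tau$: the overshoot $v(c_k^*) - v(c_k^{\circ}) = v(A_k) - \tau$ can be as large as $\tfrac{6}{7}\mu$ (using the crude bound $v(A_k) < 2\mu$ together with $\tau \geq \tfrac{8}{7}\mu$), while Condition~\ref{cond:hz} only guarantees each chore exceeds $\tau - \mu \geq \tfrac{\mu}{7}$. The $\tau \geq \tfrac{8}{7}\mu$ threshold is precisely what allows a valid trim to exist without losing lex-dominance, and this is where the full strength of the hypothesis is used.
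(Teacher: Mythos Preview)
Your approach has a genuine gap: the claim that $(\items', \allocs', v, \tau)$ is First Fit Valid can fail for indices $i < k$, and the proposed trimming repair does not work.

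Here is a concrete obstruction. When the fit-in space $fs(k) := \tau - v(A_k \setminus \{c_k^*\}) > 2\gamma$ (where $\gamma = \tau - \mu$), the paper's own analysis of this case (Lemma~\ref{lem-suitable-with-large-fs}) shows that $A_k = \{A_k[1], A_k[2], c_k^*\}$ with both $A_k[1], A_k[2]$ regular, and that there can be a 2-regular bundle $A_j$ with $j < k$ satisfying $v(A_j[1]) + v(A_j[2]) = v(A_k[1]) + v(A_k[2])$ (Claim~\ref{claim:shift}). Then the set $\{A_j[1], A_j[2], c_k^\circ\}$ has cost exactly $\tau$ and is $\lex{\geq} A_j$; it is strictly lex-larger whenever $A_j$ has no third chore or $v(A_j[3]) < fs(k)$. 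Hence $B_j' \lex{>} A_j = A_j'$, and First Fit Validity of $\allocs'$ fails at index $j$.

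Your trimming step cannot fix this. With $B_j' = \{A_j[1], A_j[2], c_k^\circ\}$, swapping $c_k^\circ$ for $c_k^*$ yields cost $v(A_j[1]) + v(A_j[2]) + v(c_k^*) = v(A_k) > \tau$. To trim back to cost $\leq \tau$ you must remove $A_j[1]$ or $A_j[2]$; but since $A_j[2]$ is regular, $v(A_j[2]) \geq v(A_k[1]) \geq v(c_k^*)$, so any set obtained by dropping one of $A_j[1], A_j[2]$ and keeping $c_k^*$ is $\lex{\leq} A_j$, yielding no contradiction with the original FFV. The ``overshoot at most $\tfrac{6}{7}\mu$ versus chores above $\tfrac{1}{7}\mu$'' heuristic you sketch does not capture this structural obstruction: the problem is not the \emph{amount} of trimming needed but that the only available trims destroy the lex-dominance.

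The paper's proof is different in kind. It does not claim $\allocs'$ is First Fit Valid; instead it analyzes $\mathrm{FFD}(\items', v, \tau)$ directly, splitting into three cases according to whether $fs(k) < v(c^*)$, $v(c^*) \leq fs(k) \leq 2\gamma$, or $fs(k) > 2\gamma$, and in the middle case choosing $v(c_k^\circ) = \min\{fs(k), \min_{c \in NT} v(c)\}$ rather than $fs(k)$. The key point is that FFD may well rearrange chores among bundles $1, \ldots, k$ (a cyclic exchange of last chores in the medium case, a shift of fallback chores in the large case), yet the \emph{union} $\bigcup_{j \leq k} D_j$ is preserved---which is all Definition~\ref{def-suitable-redu} requires. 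Your attempt to pin down each $D_j = A_j$ individually is strictly stronger than what is needed and, as the scenario above shows, is false in general.
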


The proof is in Appendix \ref{app:reduced-chore}.
Based on Lemma \ref{lem-suitable-reduced-chore}, 
the following Algorithm \ref{alg-reduction} describes the procedure of removing all excessive chores from a First Fit Valid tuple. 
While there is an excessive chore in the tuple, we take the first one (by the order of bundles in the tuple), and replace it with a suitable reduced chore. 
After that, we know that the first bundles in the tuple (up to and including the bundle with the suitable reduced chore) do not contain any excessive chore, so we can replace them by the bundles that would be generated by FFD. 

At the end of the whole reduction, we get a First Fit Valid tuple that it is an output of FFD algorithm, with the same threshold and with an unallocated chore.


\begin{algorithm}
\KwIn{A First Fit Valid tuple $(\items,\allocs, v, \tau)$ with an unallocated chore $c^*$ }
\KwOut{A First Fit Valid tuple $(\items',\allocs',v,\tau)$ with an unallocated chore $c^*$ and no excessive chores.}
 \BlankLine
$(\items^*,\allocs^*,v, \tau)$=Tidy-Up$(\items,\allocs, v,\tau,c^*)$\;
    \While{there exists an excessive chore in $\allocs^*$}
    {
        Let $k$ be the smallest index such that $\alloci[k]^*$ has an excessive chore\;
        Let $c_k^*$ be the excessive chore in bundle $\alloci[k]^*$\;
        Let $c_k^{\circ}$ be a suitable reduced chore of $c_k^*$ (which exists by Lemma \ref{lem-suitable-reduced-chore})\;\label{inalg-suitablereduce}
        Update $\items':=\items^*\setminus\{c_k^*\}\cup\{c_k^{\circ}\}$\;
        Let $(D_1',\dots,D_k')$ be the first $k$ bundles output by $FFD(\items',v,\tau)$\;
        Let $n^*=|\allocs^*|$\;
        Update $\allocs':=(D_1',\dots,D_k',\alloci[k+1]^*,\dots,\alloci[n^*]^*)$\;
		$(\items^*,\allocs^*,v,\tau)$=Tidy-Up$(\items',\allocs', v,\tau,c^*)$\;
    }
	\Return {$(\items^*,\allocs^*,v,\tau)$
}
\caption{Reduction of excessive chores}\label{alg-reduction}	
\end{algorithm}


We use $Reduction(\items,\allocs,v, \tau, c^*)$ to denote the output of the Reduction algorithm. Then we prove the following lemma.

\begin{lemma}\label{lem-reduction}
Suppose that $(\items,\allocs, v,\tau)$ is First Fit Valid and there is a chore $c^*$ such that $c^*\notin \bigcup_k\alloci[k]$. Let $n=|\allocs|$. Suppose that MMS$(\items,v,n)=\mu$ and
$\tau\geq \frac{8}{7} \mu$.
Let $(\items',\allocs',v,\tau)=Reduction(\items,\allocs, v,\tau, c^*)$ and  $n'=|\allocs'|$. The tuple $(\items',\allocs',v,\tau)$ has the following properties:
\begin{enumerate}
	\item \label{proper:first-fit-valid} It is First Fit Valid.
    \item\label{proper:numberofbundle} The number of bundles $n'\le n$.
    \item\label{proper:maximin} The maximin share MMS$(\items',v,n')\le \mu$.
    \item\label{proper:unallocated} For chore $c^*$, we have $c^*\in \items'$ and $c^*\notin \bigcup_k\alloci[k]'$.
    \item\label{proper:noexcessive} For each $k$, we have $v(\alloci[k]')\le \tau$.
\end{enumerate}

\end{lemma}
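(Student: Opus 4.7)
I would prove \Cref{lem-reduction} by induction on the iterations of the while loop in \Cref{alg-reduction}, showing that a loop invariant implying properties \ref{proper:first-fit-valid}--\ref{proper:unallocated} is maintained, and that the number of excessive chores strictly decreases per iteration. Concretely, the invariant is that at the start of each iteration the current tuple $(\items^*,\allocs^*,v,\tau)$ is a Tidy-Up tuple with $\text{MMS}(\items^*,v,n^*)\le\mu$ and $c^*$ as its unique unallocated chore. The base case is immediate: after the initial call to Tidy-Up, \Cref{prop-tidyup-FFV} gives First Fit Validity, the procedure only removes bundles so $n^*\le n$, Condition \ref{cond:maxvalue} of the Tidy-Up Lemma gives $\text{MMS}\le\mu$, and Condition \ref{cond:smallest} keeps $c^*$ as the unique unallocated chore.

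For the inductive step, since $\text{MMS}(\items^*,v,n^*)\le\mu$ and $\tau\ge\tfrac{8}{7}\mu$, the hypothesis of \Cref{lem-suitable-reduced-chore} is satisfied and a suitable reduced chore $c_k^{\circ}$ exists. The key First Fit Valid preservation step uses \Cref{def-suitable-redu}: $\bigcup_{j\le k} D_j' = \bigl(\bigcup_{j\le k} A_j^*\bigr)\setminus\{c_k^*\}\cup\{c_k^{\circ}\}$. Hence for any $j>k$ the set of items available after the first $j-1$ new bundles are removed equals $\items^*\setminus \bigcup_{i<j} A_i^*$, which is exactly the set used to define the old benchmark $B_j$. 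Therefore the new benchmark satisfies $B_j'\lex{=}B_j$, and $A_j^*\lex{\ge}B_j\lex{=}B_j'$ still holds. For $j\le k$, $D_j'$ is itself an output of FFD on the new instance, so by \Cref{prop-lexiffd} it is lexicographically equal to the new benchmark $B_j'$. Thus $\allocs'$ is First Fit Valid, and the subsequent Tidy-Up call preserves this by \Cref{prop-tidyup-FFV}.

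The remaining invariants are straightforward to track. Replacing $c_k^*$ with the smaller $c_k^{\circ}$ inside any partition witnessing $\text{MMS}\le\mu$ only decreases bundle costs, so the MMS does not grow; Tidy-Up removes matched pairs of bundles and preserves $\text{MMS}\le\mu$. The replacement does not touch $c^*$ (since $c_k^*\neq c^*$ because $c_k^*$ is allocated), and Tidy-Up retains $c^*$ as the unique unallocated chore. The bundle count is non-increasing throughout. For termination, after the replacement the bundles $D_1',\dots,D_k'$ are FFD outputs and thus each have cost $\le\tau$, i.e.\ contain no excessive chores, while the suffix bundles $A_{k+1}^*,\dots,A_{n^*}^*$ are unchanged; Tidy-Up only removes chores and so cannot create excessive chores. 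Hence the number of excessive chores strictly decreases per iteration, the loop terminates, and at termination property \ref{proper:noexcessive} holds.

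The main obstacle, as anticipated by the design of the suitable reduced chore, is preserving First Fit Validity after splicing the FFD output into the suffix of the previous allocation: the identity in \Cref{def-suitable-redu} is precisely what makes the benchmark bundles for $j>k$ agree before and after the replacement, which is the nontrivial content of this step. Everything else in the argument is routine bookkeeping on top of \Cref{lem-suitable-reduced-chore} and the properties established in \Cref{lem-tidyup}.
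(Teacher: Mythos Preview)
Your proof is correct and takes essentially the same approach as the paper's: both arguments establish First Fit Validity of the spliced allocation by splitting into $j\le k$ (where the $D_j'$ are FFD outputs and hence equal their benchmarks by \Cref{prop-lexiffd}) and $j>k$ (where the suitable-reduced-chore identity makes the set of remaining items, and hence the benchmark, unchanged), and both track the MMS, bundle count, and $c^*$ through the replacement and Tidy-Up steps in the obvious way. Your framing as an explicit loop invariant with a strictly decreasing excessive-chore count is slightly more careful about termination than the paper's ``after at most $n$ iterations,'' but the substance is identical.
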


\begin{proof}
~
\paragraph{1. First Fit Valid.}
By Proposition \ref{prop-tidyup-FFV}, the Tidy-Up Procedure outputs a First Fit Valid tuple when the input is First Fit Valid,
so  $(\items^*,\allocs^*,v, \tau)$
is First Fit Valid.
It remains to prove that $\mathbf{A'} = (D_1,\dots,D_k,\alloci[k+1]^*,\dots,\alloci[n']^*)$ satisfies the First Fit Valid definition for all $j\in[n']$. 
\begin{itemize}
\item For $j\leq k$, the $j$-th benchmark bundle of $\mathbf{A'}$ is the same as of $\mathbf{D}$.
 As the bundles $D_1,\dots,D_k$ are output by the FFD algorithm, all these bundles are lexicographically equal to the corresponding benchmark bundles of $\mathbf{D}$ by Proposition \ref{prop-lexiffd}.
\item For $j>k$, 
Since $c_k^{\circ}$ is a suitable reduced chore, we have $\items'\setminus\left(\bigcup_{j\le k}D_k\right)=\items^*\setminus\left(\bigcup_{j\le k}\alloci[j]^*\right)$
by Definition \ref{def-suitable-redu}. So the
$j$-th benchmark bundle of $\mathbf{A'}$ is the same as of $\mathbf{A}^*$, which is First Fit Valid.

\end{itemize} 
We can conclude that $(\items',\allocs',v, \tau)$ is First Fit Valid.

\paragraph{2. Number of bundles.}
The only step that could change the number of bundles is the Tidy-Up Procedure. This procedure never increases the number of bundles. So we have $n'\le n$.

\paragraph{3. Maximin Share.}
By Condition \ref{cond:maxvalue} of the Tidy-Up Lemma, the maximin share does not increase after the Tidy-Up Procedure. By the definition of reduced chore, we have $v(c_k^{\circ})\le v(c_k^*)$. So when we replace chore $c_k^*$ with chore $c_k^{\circ}$, the maximin share does not increase either, so MMS$(\items',v,n')\le \mu$. 

\paragraph{4. Unallocated chore.}
The chore $c^*$ is unallocated in $\mathbf{A}$.
It is never allocated by the Tidy-Up procedure, so it is unallocated in $\mathbf{A}^*$ too.
Since all reduced chores are suitable, we have $\cup_{i\in N} D_i = \cup_{i\in N} A^*_i$, 
so $c^*$ is unallocated by FFD too.
So $c^*\in \items'$ and $c^*\notin \bigcup_k\alloci[k]'$ must hold. 

\paragraph{5. No excessive chores.}
After at most $n$ iterations of the Reduction algorithm, there is no excessive chore in $\allocs'$. Therefore, $v(\alloci[k]')\le \tau$ for all $k\in[n]$.
\end{proof}

Finally, we prove our first main theorem  (\Cref{thm-approximation-intro} from the introduction):
\begin{theorem}\label{thm-tightapprox}
For any $n\geq 2$, the worst-case approximation ratio of HFFD for $n$ agents 
is the same as 
the worst-case approximation ratio 
of FFD for $n$ bins,
as shown in Table \ref{table:approxratio}.
\end{theorem}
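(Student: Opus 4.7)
The plan is to split the equality into two matching inequalities. For the easy direction $\alpha_{\mathrm{HFFD}}^{(n)} \ge \alpha^{(n)}$, I would observe that FFD is literally the special case of HFFD in which all $n$ agents share a single cost function $v$ and a single threshold $\tau$. Any worst-case FFD instance on $n$ bins therefore extends, by duplicating the single valuation across all agents (so that every $\mms_i$ equals the common MMS value $\mu$), to an HFFD instance on which the two algorithms generate identical outputs; if FFD leaves a chore unallocated at threshold $\tau = \alpha \mu$, so does HFFD at the thresholds $h_i = \alpha \cdot \mms_i = \tau$, giving an HFFD lower bound at least as bad.

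For the main direction $\alpha_{\mathrm{HFFD}}^{(n)} \le \alpha^{(n)}$, I would argue by contradiction. Assume HFFD with thresholds $h_i := \alpha^{(n)} \cdot \mms_i$ fails on some IDO instance. Let $\last$ be the last agent, set $\mu := \mms_{\last}$, and pick any unallocated chore $c^*$. By \Cref{prop-HFFD-first-fit}, the tuple $(\items, \allocs, v_{\last}, h_{\last})$ is First Fit Valid; and since every entry of \Cref{table:approxratio} is at least $8/7$, we have $h_{\last} = \alpha^{(n)} \mu \ge \tfrac{8}{7}\mu$, meeting the hypothesis of \Cref{lem-reduction}. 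That lemma then produces a First Fit Valid tuple $(\items', \allocs', v_{\last}, h_{\last})$ with no excessive chores, with $n' \le n$ bundles, with $\mms(\items', v_{\last}, n') \le \mu$, and still missing $c^*$. \Cref{cor-ffdsame} identifies $\allocs'$ with the output of $FFD(\items', v_{\last}, h_{\last})$, so FFD itself fails on the reduced single-agent instance at threshold $h_{\last}$. But using monotonicity of $\alpha^{(n)}$ in $n$ (also visible from \Cref{table:approxratio}) together with $\mms(\items', v_{\last}, n') \le \mu$, I obtain
\[
h_{\last} \;=\; \alpha^{(n)} \mu \;\ge\; \alpha^{(n')} \cdot \mms(\items', v_{\last}, n'),
\]
which by the defining property of $\alpha^{(n')}$ forces FFD to allocate all chores on that reduced instance, contradicting the previous sentence.

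All of the real technical content already lives inside \Cref{lem-reduction} (and, through it, the suitable-reduced-chore construction \Cref{lem-suitable-reduced-chore}, deferred to the appendix); the theorem itself is a clean assembly of those tools. The most delicate bookkeeping step is the passage from $n$ to $n'$, since the Tidy-Up Procedure may shrink the bundle count, and this is handled cleanly by combining the monotonicity of the $\alpha^{(n)}$ sequence with the MMS-non-increase guaranteed by \Cref{lem-reduction}. The only other item worth double-checking is that the hypothesis $h_{\last} \ge \tfrac{8}{7} \mu$ of that lemma is automatic for every $n \ge 2$, which is immediate from \Cref{table:approxratio}; this is precisely the reason the whole chain of reductions is restricted to approximation ratios at least $8/7$.
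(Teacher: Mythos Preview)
Your proposal is correct and follows essentially the same route as the paper's proof: assume an HFFD failure at thresholds $\alpha^{(n)}\cdot\mms_i$, invoke \Cref{prop-HFFD-first-fit} to get a First Fit Valid tuple for the last agent, feed it through \Cref{lem-reduction} (using $\alpha^{(n)}\ge 8/7$), apply \Cref{cor-ffdsame} to recognize the reduced allocation as an FFD output, and derive a contradiction via $\alpha^{(n')}\le\alpha^{(n)}$. Your explicit treatment of the easy direction $\alpha_{\mathrm{HFFD}}^{(n)}\ge\alpha^{(n)}$ (duplicating a tight FFD instance across identical agents) is a small addition the paper leaves implicit.
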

\begin{proof}
Let $\alpha^{(n)}$ be the approximation ratio in Table \ref{table:approxratio} when the number of agents is $n$.

Suppose for contradiction that there is an instance such that $HFFD(\items,\Vec{v},\Vec{\tau})$ cannot allocate all chores, where 
the MMS of all agents is $\mu$ and $h_i=\alpha^{(n)}\cdot \mu$ for all $i \in [n]$. 
Then, by Proposition \ref{prop-HFFD-first-fit}, we have a First Fit Valid tuple $(\items,\allocs,v, \tau)$,
where $v := v_{\last} = $ the valuation of the last agent who receives a bundle in the HFFD run,
and $\tau := h_{\last} = \alpha^{(n)}\cdot \mu$,
and there is an unallocated chore $c^*\notin\bigcup_{k\le n}\alloci[k]$. As $\alpha^{(n)}\ge\frac{8}{7}$
for all $n$ in Table \ref{table:approxratio}, by Lemma \ref{lem-reduction} we can construct a First Fit Valid tuple $(\items',\allocs',v, \tau)$ that satisfies all properties 1--5 in the lemma statement.

Let $n'=|\allocs'|$ be the number of bundles. By Property \ref{proper:noexcessive}, the cost of every bundle in $\mathbf{A}'$ is not larger than $\tau = \alpha^{(n)} \mu$. By  Corollary \ref{cor-ffdsame}, these bundles are the same as the first $n'$ bundle output by $FFD(\items',v, \tau)$. 
By Property \ref{proper:maximin}, the maximin share is not larger than $\mu$, which means the threshold in $FFD(\items',v, \tau)$ is at least $\alpha^{(n)}$ times the maximin share of $v$. By Property \ref{proper:unallocated}, $FFD(\items',v, \tau)$ cannot allocate all chores. 
By Property \ref{proper:numberofbundle}, $n'\leq n$, 
so
from Table \ref{table:approxratio}, we have $\alpha^{(n')}\le\alpha^{(n)}$. So we get an instance in which  FFD cannot attain a $\alpha^{(n')}$ approximation of the minimum makespan for $n'$ bins. 
This contradicts what was already proved for job scheduling. 

In conclusion, the relationship between the number of agents and the approximation ratio in the worst case for HFFD algorithm must be the same as in Table \ref{table:approxratio}.
\end{proof}

\section{A FPTAS for nearly optimal approximation}
\label{sec:fptas}
In this section, we will demonstrate a FPTAS for the approximation of $(1+\epsilon)\frac{13}{11}$, when all costs are integers. The previous work \cite{DBLP:conf/sigecom/HuangL21} has the following observations: (1) It is very inefficient to use the PTAS for the scheduling problem to compute an approximate maximin share for each agent, and then scale them as the input for HFFD; 
(2) Due to the non-monotonicity of the MultiFit algorithm, we cannot directly use binary search (by identical costs for HFFD) to compute a ratio for each agent as the input for HFFD. 
We will show that, by a slight modification, the binary search as used by the MultiFit algorithm would yield an FPTAS for the $\frac{13}{11}\cdot \text{MMS}$.

\begin{algorithm}[H]
\KwIn{An instance $\inst$, an agent $i$, lower bound integer $l$, upper bound integer $u$}
\KwOut{A threshold for agent $i$}
\BlankLine
    \While{$l<u$}
    {
        Let $\tau :=\lfloor (l+u)/2\rfloor$\;
        Run $FFD(n, v_i, \tau)$\;
        \lIf{FFD succeeded in allocating all chores to the $n$ agents}
        {
            $u:=\tau$
        }
        \lElse{$l:=\tau+1$}
    }
	\Return {$u$}
\caption{Binary Search for One Agent's Threshold}\label{alg-binarysearch}	
\end{algorithm}

\begin{lemma}
\label{lem:binarysearch}
For any $i\in\agents$, let $h_i:=$Binary-search$(i, l_i, u_i=v_i(\items))$.

Then $MMS_i\le h_i\le \max(l_i, \alpha^{(n)}\cdot MMS_i)$.
\end{lemma}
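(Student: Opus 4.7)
The plan is to prove the two bounds separately.

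For $MMS_i \le h_i$: I would show that the binary search maintains the invariant that FFD succeeds at threshold $u$. It holds initially because setting $u = u_i = v_i(\items)$ lets FFD pack every chore into bin $1$ with bins $2,\dots,n$ empty, and it is preserved because $u$ is overwritten by $\tau$ only after FFD has just been verified to succeed at $\tau$. When the while loop exits with $l = u$, the returned value $h_i = u$ is a certified success threshold for FFD, so FFD outputs an $n$-partition of $\items$ in which every bundle has cost at most $h_i$. By the definition of MMS, $MMS_i \le h_i$.

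For $h_i \le \max(l_i, \alpha^{(n)} \cdot MMS_i)$: Let $\tau^{*} := \alpha^{(n)} \cdot MMS_i$; by the definition of $\alpha^{(n)}$, FFD succeeds at threshold $\tau^{*}$ on agent $i$'s instance $(\items, v_i, n)$. I would argue by induction on the length $u - l$ of the current search interval that the value returned satisfies $h_i \le \max(l, \tau^{*})$. The base case $l = u$ is immediate since $h_i = u = l$. In the inductive step with midpoint $\tau := \lfloor (l+u)/2 \rfloor$, if $\tau < \tau^{*}$ then in both the success and failure branches the new lower endpoint is at most $\tau + 1 \le \tau^{*}$, so induction on the shorter interval gives $h_i \le \max(l', \tau^{*}) = \tau^{*}$; if $\tau \ge \tau^{*}$ then FFD succeeds at $\tau$ and $u$ is updated to $\tau$, so induction on $[l, \tau]$ gives $h_i \le \max(l, \tau^{*})$. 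Since $l$ is non-decreasing across iterations and starts at $l_i$, this yields $h_i \le \max(l_i, \tau^{*})$.

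The main obstacle is the non-monotonicity of FFD emphasized elsewhere in the paper: the step ``$\tau \ge \tau^{*}$ implies FFD succeeds at $\tau$'' does not follow directly from the definition of $\alpha^{(n)}$, which only certifies success at $\tau = \tau^{*}$ itself. I expect to close this gap either by invoking a strengthened monotonicity property of MultiFit at thresholds above the worst-case ratio, or by enriching the induction to track the current $u$ as a previously certified success threshold and arguing that even when a non-monotone failure causes $l$ to jump past $\tau^{*}$, the surviving $u$ (set by an earlier successful test) is already at most $\max(l_i, \tau^{*})$. Making this coupled invariant precise is the technical heart of the proof.
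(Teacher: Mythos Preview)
Your lower-bound argument (the invariant that FFD succeeds at the current $u$) is exactly what the paper does. For the upper bound, your induction is correct \emph{once} the key fact ``$\tau \ge \tau^{*}$ implies FFD succeeds at $\tau$'' is available, and that fact is not an obstacle at all: it is precisely the statement of the MultiFit approximation theorem that the paper cites \cite{yue1990exact}. Those proofs proceed by assuming FFD fails at capacity $C$ and deriving $C < \alpha^{(n)}\cdot\mathrm{OPT}$; the contrapositive gives success at \emph{every} $C \ge \alpha^{(n)}\cdot\mathrm{OPT}$. The non-monotonicity discussed in Section~\ref{sec:monotonicity} is a phenomenon only in the range strictly below $\alpha^{(n)}\cdot MMS_i$, so it does not interfere here. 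Thus your first proposed resolution is the correct one and is exactly what the paper invokes.

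Two remarks on the differences. First, the paper's argument is shorter than your induction: once you know FFD succeeds whenever $\tau \ge \alpha^{(n)}\cdot MMS_i$, you simply note that as long as $u > \alpha^{(n)}\cdot MMS_i$ and $u > l$, the midpoint $\tau$ either also exceeds $\alpha^{(n)}\cdot MMS_i$ (so $u$ drops) or is below it (so either $u$ drops or $l$ rises but stays $\le \lceil\alpha^{(n)}\cdot MMS_i\rceil$); hence termination forces $u \le \max(l_i,\alpha^{(n)}\cdot MMS_i)$. Your induction repackages the same reasoning step by step. Second, your alternative resolution---tracking earlier certified successes without assuming success above $\tau^{*}$---would not work on its own: if FFD could (hypothetically) fail at some integer in $(\tau^{*}, u)$, a sequence of such failures after $u$ has dropped to, say, $25$ could push $l$ up to $25$ while $\tau^{*}=10$ and $l_i=1$, yielding $h_i = 25 > \max(l_i,\tau^{*})$. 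So the MultiFit ``success everywhere above the ratio'' fact is genuinely needed, not merely convenient.
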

\begin{proof}
Binary-search returns a number $h_i$ for which there exists a partition of the chores into $n$ subsets with total cost at most $h_i$ (the allocation returned by FFD). By MMS definition, this implies $MMS_i\le h_i$.

By the proof of the MultiFit approximation ratio \cite{yue1990exact},
FFD with threshold at least $\alpha^{(n)}\cdot MMS_i$ always succeeds in packing all chores into $n$ subsets.
Therefore, as long as $u> \alpha^{(n)}\cdot MMS_i$ and $u>l_i$, the algorithm will continue to another iteration;
the algorithm will stop only when 
either $u\leq  \alpha^{(n)}\cdot MMS_i$ or $u\leq l_i$.
Therefore, $h_i\le \max(l_i, \alpha^{(n)}\cdot MMS_i)$.
\end{proof}

Now, we combine the binary search with the HFFD Algorithm \ref{alg-hffd}.

\begin{algorithm}[H]
\KwIn{An IDO instance with integer costs $\inst$, an approximation parameter  $\epsilon>0$}
\KwOut{An allocation $\allocs$}
\BlankLine
    Let $R=\items$  and $\allocs=(\emptyset,\dots,\emptyset)$\;
     For all $i\in\agents$, let $h_i:=$Binary search$(i, l_i=1,u_i=v_i(\items))$\label{step:binarysearch}\;
    \While{$R\neq \emptyset$}
    {
        Run $\allocs=$HFFD$(v_1,\ldots,v_n, h_1,\ldots,h_n)$ \;
        $R=\items\setminus\allocs=$ the chores that remained  unallocated\;
        \If{$R\neq\emptyset$}
        {
            Let $\last$ be the index of the agent who is the last one getting a bundle when we run HFFD in this round of while-loop\; 
            \tcp{Notice that here $\last$ could be different in different rounds}
            Change $h_{\last}:=$Binary search$(\last, l_i = \lceil(1+\epsilon)h_{\last}\rceil, u_i = v_i(\items)$)\;\label{algline-update}
        }
    }
	\Return {$\allocs$}
\caption{A FPTAS for Approximate Maximin Share}\label{alg-ptas}	
\end{algorithm}

\begin{theorem}
\label{thm-fptas}
With integer costs, Algorithm \ref{alg-ptas} runs in time $O(\frac{1}{\epsilon}n^2m\log(V_{\text{max}}))$ and output a $(1+\epsilon)\cdot \alpha^{(n)}$-approximate MMS allocation, where $\alpha^{(n)}$ is a constant given in Table \ref{table:approxratio}.
\end{theorem}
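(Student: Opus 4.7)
The plan is to establish three things: (i) an invariant showing that whenever Binary-Search returns a value $h_i$, we have $h_i\le\lceil(1+\epsilon)\alpha^{(n)}\mms_i\rceil$; (ii) termination of the main \textbf{while} loop; and (iii) the runtime bound. The approximation ratio then follows directly from the invariant, since once the loop exits, HFFD has succeeded and each agent $i$ receives a bundle of cost at most $h_i\le(1+\epsilon)\alpha^{(n)}\mms_i$.

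I would prove the invariant by induction on the number of Binary-Search calls. For the base case, line~\ref{step:binarysearch} calls Binary-Search with $l_i=1$, and Lemma~\ref{lem:binarysearch} yields $h_i\le\max(1,\alpha^{(n)}\mms_i)\le\alpha^{(n)}\mms_i$ whenever $\mms_i\ge 1$ (the degenerate case $\mms_i=0$ is trivial). For the inductive step, suppose the main loop detects HFFD failure with the current thresholds, and let $\last$ denote the last agent chosen by that particular HFFD run. The key observation is that the proof of Theorem~\ref{thm-tightapprox}, read in contrapositive form, gives $h_\last<\alpha^{(n)}\mms_\last$: the reduction in Section~\ref{sec:reduction} uses only the valuation and threshold of the actual last agent, so failure of HFFD implies that $FFD(\items,v_\last,h_\last)$ fails to allocate all chores, which by Table~\ref{table:approxratio} forces $h_\last<\alpha^{(n)}\mms_\last$. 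Consequently, Binary-Search is re-invoked in line~\ref{algline-update} with $l_i=\lceil(1+\epsilon)h_\last\rceil<\lceil(1+\epsilon)\alpha^{(n)}\mms_\last\rceil$, and Lemma~\ref{lem:binarysearch} gives a new value $h_\last\le\max(l_i,\alpha^{(n)}\mms_\last)\le\lceil(1+\epsilon)\alpha^{(n)}\mms_\last\rceil$, preserving the invariant.

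For termination, every failed HFFD call strictly increases some $h_\last$ by a factor of at least $(1+\epsilon)$, while the invariant caps $h_\last$ by $O(\alpha^{(n)}V_{\max})$. Hence each agent's threshold can be updated at most $O(\log_{1+\epsilon}V_{\max})=O(\log V_{\max}/\epsilon)$ times, so the total number of main-loop iterations is at most $O(n\log V_{\max}/\epsilon)$; once every $h_i$ reaches $\alpha^{(n)}\mms_i$, Theorem~\ref{thm-tightapprox} forces HFFD to succeed and the loop exits. For runtime, the initial binary searches in line~\ref{step:binarysearch} cost $O(n^2m\log V_{\max})$ in total (each FFD call is $O(nm)$, and there are $O(\log V_{\max})$ calls per agent), and each main-loop iteration performs one HFFD call, taking $O(nm)$, plus one Binary-Search call whose work I would amortize across successive narrowing lower bounds for each fixed agent $i$ to obtain the claimed total $O(\tfrac{1}{\epsilon}n^2m\log V_{\max})$.

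\textbf{Main obstacle.} The central technical step is the heterogeneous-threshold extension of Theorem~\ref{thm-tightapprox}, namely the implication ``HFFD failed $\Rightarrow h_\last<\alpha^{(n)}\mms_\last$''. I expect this to be essentially free because the Tidy-Up and Reduction procedures of Section~\ref{sec:reduction} never reference any agent other than the last one, so the existing argument carries over verbatim if one tracks $\mms_\last$ in place of the common $\mu$. A secondary subtlety is the $O(1)$ slack introduced by the integer ceilings in the invariant, which does not affect the multiplicative $(1+\epsilon)\alpha^{(n)}$ approximation guarantee.
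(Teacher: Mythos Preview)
Your approximation-ratio argument is correct and matches the paper's: the invariant $h_i\le(1+\epsilon)\alpha^{(n)}\mms_i$, maintained via the contrapositive ``HFFD failed $\Rightarrow h_\last<\alpha^{(n)}\mms_\last$'', is exactly what the paper does, and your reading of why this heterogeneous-threshold implication follows from the reduction in Section~\ref{sec:reduction} is right.

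The gap is in the runtime. Your bound of $O(\log V_{\max}/\epsilon)$ updates per agent is too loose, and the amortization you gesture at does not recover the missing $\log V_{\max}$ factor: each Binary-Search call restarts from $u=v_i(\items)$, so the successive ranges do not telescope. With your iteration count the loop costs $O\big(\tfrac{n\log V_{\max}}{\epsilon}\cdot nm\log V_{\max}\big)=O\big(\tfrac{1}{\epsilon}n^2m\log^2 V_{\max}\big)$, which is off by a factor of $\log V_{\max}$ from the theorem.

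The fix is to sharpen your termination observation from the global statement ``once \emph{every} $h_i$ reaches $\alpha^{(n)}\mms_i$, HFFD succeeds'' to the per-agent statement ``once a particular $h_i\ge\alpha^{(n)}\mms_i$, that agent $i$ can never again be the last agent in a failed HFFD run'' (this is immediate from the same contrapositive you already use). Combined with the lower bound $h_i\ge\mms_i$ from Lemma~\ref{lem:binarysearch}, each $h_i$ can grow by a factor $(1+\epsilon)$ at most $\log_{1+\epsilon}\alpha^{(n)}=O(1/\epsilon)$ times before it stops being updated. This yields $O(n/\epsilon)$ main-loop iterations, and then the straightforward count---each iteration costing $O(nm\log V_{\max})$ for the Binary-Search---gives the claimed bound without any amortization.
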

\begin{proof}
~
\paragraph{1. Run-time analysis.}
Before the while-loop, each threshold $h_i$ is at least $MMS_i$ by Lemma \ref{lem:binarysearch}.
Once the threshold of agent $i$ becomes larger than $\alpha^{(n)}\cdot MMS_i$, this agent will never be the last one if there are remaining chores (by \Cref{thm-tightapprox}),
so $h_i$ will not be updated again.
So each threshold $h_i$ can be updated at most $\log_{1+\epsilon}\alpha^{(n)}$ times, which is $O(\frac{1}{\epsilon})$ times since $\alpha^{(n)}$ is a constant.
This means the while-loop must stop after  $O(\frac{n}{\epsilon})$ rounds. 

The binary search runs the HFFD algorithm $O(\log V_{\text{max}})$ times. The running time of HFFD algorithm can be bounded by $O(n m)$. So the total running time is $O(\frac{1}{\epsilon}n^2m\log(V_{\text{max}}))$.

\paragraph{2. Approximation ratio.}
We  prove that the threshold $h_i\le (1+\epsilon)\cdot \alpha^{(n)}\cdot MMS_i$ holds for each agent $i$ throughout the algorithm.
Initially, $h_i\le \alpha^{(n)}\cdot MMS_i$ by Lemma \ref{lem:binarysearch}.

In the while-loop, if $R\neq\emptyset$ i.e. there are some unallocated chores, then we know that $h_{\last}<\alpha^{(n)}\cdot MMS_{\last}$ holds for agent $\last$ by \Cref{thm-tightapprox}. 
The new lower bound to the binary search is $(1+\epsilon)h_{\last}< (1+\epsilon)\cdot \alpha^{(n)}\cdot MMS_{\last}$.
By Lemma \ref{lem:binarysearch},
the return value of the binary search would not be greater than $(1+\epsilon)\alpha^{(n)}\cdot MMS_{\last}$. This means that, for each agent $i$, the threshold will never be greater than $ (1+\epsilon)\cdot \alpha^{(n)}\cdot MMS_i$. Therefore, the output of Algorithm \ref{alg-ptas} is a $(1+\epsilon)\cdot \alpha^{(n)}$ approximation to the maximin share allocation.
\end{proof}

\section{Monotonicity in special cases}
\label{sec:monotonicity}
When the threshold to the FFD algorithm increases, it is natural to expect the allocation process will become easier. However, FFD might behave counter-intuitively in this respect: when it gets a larger threshold, it might allocate fewer chores. Examples were given already in \cite{DBLP:journals/siamcomp/CoffmanGJ78}.

In this section, we discuss this interesting phenomena, which we call \emph{non-monotonicity}. 
As we prove below, 
non-monotonicity is what blocks us from designing a polynomial-time algorithm for exactly $\frac{13}{11}$ approximation.
We also prove monotonicity for some special cases.

The monotonicity of FFD can be defined based on a single parameter --- the bin-size. We call this property ``weak monotonicity'', since immediately afterwards we define a stronger property.
\begin{definition}
Let $\alpha\geq 1$ be a real number and $n\geq 2$ be the number of bundles.
We say that FFD is \emph{weakly-monotone with respect to $\alpha$ and $n$} if the following holds:
\begin{quote}
For any cost function $v$ and chore set $\items$
with MMS$(\items,v,n)=\mu$, 
if FFD$(\items,v, \alpha \mu, n)$ allocates all chores, then for all $\beta\geq \alpha$, 
FFD$(\items,v,\beta \mu,n)$ allocates all chores.
\end{quote}
\end{definition}
When we consider the HFFD algorithm, the situation is more complicated since there are $n$ different thresholds.
To handle these situations, we use the notions developed in Section \ref{sec:FFabstraction} to define a stronger monotonicity property.

\begin{definition}\label{def-monotone}
Let $\alpha\geq 1$ be a real number and $n$ be the number of bundles. We say that \emph{FFD is monotone with respect to $\alpha$ and $n$} if the following holds:
\begin{quote}
For any cost function $v$ and chore set $\items$
with MMS$(\items,v,n)=\mu$, 
if FFD$(\items,v, \alpha \mu, n)$ allocates all chores, then for all First-Fit-Valid tuples $(\items,\allocs,v, \alpha \mu)$ such that $|\allocs|=n$, we have $\bigcup_{i}\alloci=\items$, that is, all chores are allocated.
\end{quote}
\end{definition}

To justify the term, we prove that it is indeed stronger than ``weak monotonicity''.
\begin{lemma}
For any $\alpha\geq 1$ and $n\geq 2$, if FFD is monotone w.r.t. $\alpha$ and $n$, then 
FFD is weakly-monotone w.r.t. $\alpha$ and $n$.
\end{lemma}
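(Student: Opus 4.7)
Assume FFD is monotone with respect to $\alpha$ and $n$. Let $v$ be a cost function and $\items$ a chore set with $\text{MMS}(\items,v,n)=\mu$, and suppose $FFD(\items,v,\alpha\mu,n)$ allocates all chores. Fix any $\beta\ge\alpha$; the goal is to show that $FFD(\items,v,\beta\mu,n)$ also allocates all chores. Let $\allocs := FFD(\items,v,\beta\mu,n)$, which consists of exactly $n$ bundles (possibly some empty). The key step is to verify that $(\items,\allocs,v,\alpha\mu)$ is a First Fit Valid tuple; once that is established, the monotonicity hypothesis applied to $\allocs$ immediately yields $\bigcup_i \alloci = \items$, as desired.

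\textbf{Verifying First Fit Valid.} Since $\alpha\ge 1$, the threshold $\alpha\mu \ge \mu = \text{MMS}(\items,v,n)$ satisfies the preliminary condition in Definition \ref{def-first-fit-valid}. For each $k\in[n]$, let $B_k^{(\alpha\mu)}$ and $B_k^{(\beta\mu)}$ denote the $k$-th benchmark bundles of $\allocs$ at thresholds $\alpha\mu$ and $\beta\mu$ respectively. Both are lexicographically maximal subsets of the \emph{same} set $\items\setminus\bigcup_{j<k}\alloci[j]$, differing only in the cost cap. Because $\alpha\mu\le\beta\mu$, every subset feasible at cap $\alpha\mu$ is also feasible at cap $\beta\mu$, hence $B_k^{(\beta\mu)} \lex{\ge} B_k^{(\alpha\mu)}$. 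By Proposition \ref{prop-lexiffd}, $\alloci[k] \lex{=} B_k^{(\beta\mu)}$, and transitivity gives $\alloci[k] \lex{\ge} B_k^{(\alpha\mu)}$. This is precisely the First Fit Valid condition of $(\items,\allocs,v,\alpha\mu)$.

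\textbf{Concluding step.} Applying the monotonicity hypothesis to the First Fit Valid tuple $(\items,\allocs,v,\alpha\mu)$ with $|\allocs|=n$ shows that $\bigcup_i \alloci = \items$, so $FFD(\items,v,\beta\mu,n)$ allocates all chores.

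\textbf{Expected difficulty.} There is no real obstacle here. The entire argument rests on a single elementary observation: enlarging the threshold only enlarges the family of admissible subsets, so benchmark bundles grow lexicographically as the threshold grows. The only point that warrants care is bookkeeping the fact that FFD here operates with a fixed number of bins $n$ (as stipulated in the Preliminaries), which guarantees $|\allocs| = n$ and lets us invoke the monotonicity hypothesis without modification.
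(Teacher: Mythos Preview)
Your proof is correct and follows essentially the same route as the paper's: run FFD at the larger threshold $\beta\mu$, observe that the resulting bundles dominate the $\alpha\mu$-benchmark bundles (since the feasible set only grows), conclude that the tuple is First Fit Valid at threshold $\alpha\mu$, and invoke the monotonicity hypothesis. The paper phrases the comparison directly from the definition of First Fit Valid rather than via explicitly named benchmarks $B_k^{(\alpha\mu)}$ and $B_k^{(\beta\mu)}$, but the argument is identical.
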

\begin{proof}
Fix a chores set $\items$ and a cost function $v$ and let $\mu := \text{MMS}(\items,v,n)$.
Suppose that FFD allocates all chores into $n$ bins of size at most $\alpha \mu$, and fix some $\beta\geq \alpha$.
Let $\mathbf{D}:= \text{FFD}(\items, v, \beta \mu)$.
Then the tuple $(\items, \mathbf{D}, v,\beta \mu)$ is First-Fit-Valid. This means that, for all $j\in[n]$, $D_j$ is (weakly) lexicographically larger than all subsets in $\items\setminus\cup_{i<j}A_i$ with a cost of at most $\beta \mu$. 
Since $\alpha\leq \beta$, 
$D_j$ is (weakly) lexicographically larger than all subsets in $\items\setminus\cup_{i<j}A_i$ with a cost of at most $\alpha \mu$. 
Therefore, the tuple $(\items, \mathbf{D}, v, \alpha \mu)$ is First-Fit-Valid too.
By definition of monotonicity, this implies $\bigcup_{i}D_i=\items$. This means that FFD allocates all chores into the $n$ bundles in $\mathbf{D}$, which are of size $\beta \mu$. So FFD is weakly-monotone w.r.t. $\alpha$ and $n$.
\end{proof}

The following lemma explains why monotonicity is useful from an algorithmic perspective.
\begin{lemma}
\label{lem-monotonicity-implies-exact-mms}
For any $n\geq 2$ and constant $r < 13/11$, if FFD is monotone w.r.t. $n$ and all $\alpha\geq r$, 
then \Cref{alg-ptas} 
with  any $\epsilon \leq (\frac{13}{11}/r)-1$
outputs a $\frac{13}{11}$ MMS allocation.
\end{lemma}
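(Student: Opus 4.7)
The plan is to refine the approximation analysis of Algorithm \ref{alg-ptas} carried out in \Cref{thm-fptas}: the monotonicity hypothesis will let me replace the $(1+\epsilon)\cdot\alpha^{(n)}$ factor of that theorem with exactly $\alpha^{(n)}=13/11$. The central invariant I would maintain is that $h_i \leq \tfrac{13}{11} MMS_i$ holds for every agent $i$ throughout the algorithm's execution. The base case is the initial binary search on line \ref{step:binarysearch}: with $l_i = 1$, \Cref{lem:binarysearch} immediately gives $h_i \leq \max(1,\alpha^{(n)}\cdot MMS_i) = \tfrac{13}{11} MMS_i$.

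The key new structural claim --- the place where monotonicity is used --- is that if HFFD fails in some round of the while-loop (leaving some chore unallocated), then the last agent $\last$ of that HFFD invocation must satisfy $h_\last < r \cdot MMS_\last$ \emph{strictly}. Indeed, by \Cref{prop-HFFD-first-fit} the HFFD output is a First-Fit-Valid tuple with $n$ bundles for $(v_\last, h_\last)$, and FFD$(\items, v_\last, h_\last)$ succeeds because $h_\last$ was returned by a binary search over FFD. Hence, if $h_\last/MMS_\last \geq r$, monotonicity at $\alpha = h_\last/MMS_\last$ would force that FFV tuple to allocate every chore, contradicting the supposed HFFD failure. Combining this claim with the choice $(1+\epsilon)\, r \leq 13/11$, the update on line \ref{algline-update} yields
\[
  (1+\epsilon)\,h_\last \;<\; (1+\epsilon)\,r \cdot MMS_\last \;\leq\; \tfrac{13}{11}\cdot MMS_\last,
\]
so the new lower bound $l = \lceil(1+\epsilon)h_\last\rceil$ is at most $\tfrac{13}{11} MMS_\last$, and applying \Cref{lem:binarysearch} to the updated binary-search call preserves the invariant.

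Termination is then immediate: each update strictly multiplies $h_\last$ by a factor of at least $1+\epsilon$, so after at most $\log_{1+\epsilon}(13/(11r)) = O(1)$ updates per agent we must reach $h_\last \geq r\cdot MMS_\last$; by the structural claim above, HFFD cannot fail at that round and the while-loop exits. Since the invariant holds throughout, the returned allocation is a $\tfrac{13}{11}$-MMS allocation. The main subtle point I expect is the handling of the ceiling $\lceil(1+\epsilon)h_\last\rceil$: a priori it could round up across the value $\tfrac{13}{11}MMS_\last$ when that value is non-integer. Closing this gap uses the strict inequalities $h_\last < r\cdot MMS_\last$ (with $h_\last$ integer) and $r < 13/11$ together, which provide enough quantitative slack to absorb the rounding; beyond this the argument is routine bookkeeping on top of the FPTAS proof of \Cref{thm-fptas}.
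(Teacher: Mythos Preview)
Your proposal is correct and follows essentially the same argument as the paper: maintain the invariant $h_i \le \tfrac{13}{11}\,MMS_i$, use monotonicity together with \Cref{prop-HFFD-first-fit} to deduce $h_\last < r\cdot MMS_\last$ whenever HFFD fails, and then bound the updated lower bound by $(1+\epsilon)r\cdot MMS_\last \le \tfrac{13}{11}\,MMS_\last$ before reinvoking \Cref{lem:binarysearch}. You are slightly more careful than the paper in flagging the ceiling in line~\ref{algline-update} and in spelling out termination, but the route is the same.
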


\begin{proof}
Similarly to the proof of Theorem \ref{thm-fptas}, 
it is sufficient to prove that $h_i\le \frac{13}{11}MMS_i$ holds for each agent $i$ throughout the algorithm.
The binary search in line \ref{step:binarysearch} gives an initial $h_i$ for each agent $i$ such that $MMS_i\le h_i\le
 \alpha^{(n)}MMS_i
 \le \frac{13}{11}MMS_i$
 by 
Lemma \ref{lem:binarysearch}.

For every agent $i$, the threshold $h_i$ is computed such that FFD$(\items, v_i, h_i)$ allocates all chores.
By definition of monotonicity, if $h_i\geq r\cdot MMS_i$, then in every First Fit Valid tuple $(\items, \allocs, v_i, h_i)$, all chores are allocated.
Conversely, if HFFD in the while-loop does \emph{not} allocate all chores, then 
$h_{\last}<r\cdot MMS_{\last}$ must hold, 
since  $(\items, \allocs, v_{\last}, h_{\last})$ is First Fit Valid by \Cref{prop-HFFD-first-fit}.

The new lower bound to the binary search is $
(1+\epsilon)\cdot h_{\last}
<
(1+\epsilon)\cdot r \cdot MMS_{\last}
\leq
\frac{13}{11} MMS_{\last}
$, by the condition on $\epsilon$.
During the binary search, FFD always succeeds when 
$h_{\last}\geq \frac{13}{11}MMS_{\last}$.
So the return value of the binary search would be at most $\frac{13}{11}MMS_{\last}$. 
\end{proof}

For every fixed constant $r<\frac{13}{11}$, 
the 
$\epsilon := (\frac{13}{11}/r)-1$ is constant too, so the run-time 
of \Cref{alg-ptas} is 
in $O(n^2 m \log(V_{\text{max}}))$.
So monotonicity would imply a polynomial-time algorithm for $\frac{13}{11}$-MMS chore allocation.

\Cref{exm:chores} shows that FFD is not monotone for $n=4$ and $\alpha=1$.
Does the proof of Lemma \ref{lem-reduction} imply that FFD is monotone for all $\alpha \geq \frac{8}{7}$? 
The answer is no. 
Recall that our proof is by contradiction:
given 
an instance in which HFFD
with unallocated chores ($\bigcup_{i}\alloci\subsetneq\items$)
we can construct a new instance for which FFD with the same threshold leaves some chores unallocated. 
This argument does not work for proving monotonicity, 
because we do not know anything about the behavior of FFD in the original instance,
so we can say nothing about monotonicity. 


Though we still have technical difficulties in proving monotonicity, the threshold in all counter-examples we could find is very close to $1$. Therefore, we make the following conjecture. 

\begin{conjecture}\label{conj:montone}
FFD is monotone with respect to any $n\geq 2$ and any $\alpha\ge\frac{8}{7}$.
\end{conjecture}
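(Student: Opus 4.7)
The plan is to proceed by contradiction using a minimal counter-example. Suppose there exist $n\geq 2$, $\alpha\geq 8/7$, a cost function $v$, and a chore set $\items$ with $\mu:=\mms(\items,v,n)$, such that FFD$(\items,v,\alpha\mu,n)$ allocates all chores but some First-Fit-Valid tuple $(\items,\allocs,v,\alpha\mu)$ with $|\allocs|=n$ has an unallocated chore $c^*$. Choose such a counter-example that is minimal, say lex-first in $(|\items|,\sum_c v(c))$.

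First I would apply the Tidy-Up Procedure and Algorithm~\ref{alg-reduction} to this tuple, obtaining a First-Fit-Valid tuple $(\items',\allocs',v,\alpha\mu)$ with $c^*$ still unallocated, $|\allocs'|=n'\leq n$, $\mms(\items',v,n')\leq\mu$, and no excessive chores. By Corollary~\ref{cor-ffdsame}, $\allocs'$ equals FFD$(\items',v,\alpha\mu,n')$, so FFD fails on this reduced instance. Note that the proof of Theorem~\ref{thm-tightapprox} stops here --- it only needs the existence of \emph{some} instance on which FFD fails. For monotonicity, however, one must show that FFD also fails on the \emph{original} $\items$, contradicting the starting assumption.

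To bridge this gap, the core of the plan is to augment each step of Tidy-Up and of the excessive-chore reduction with a proof that it preserves the status ``FFD with threshold $\alpha\mu$ allocates all chores.'' Removing $c^*$, chores smaller than $c^*$, and $\mu$-redundant chores should be relatively benign, because none of these chores affects the placement of any earlier chore in FFD; the bundle-domination step in Tidy-Up is more delicate and would need to be tracked by an explicit simulation of FFD's order of bin fillings. The hardest step is replacing an excessive $c_k^*$ by a suitable reduced chore $c_k^{\circ}$: I would strengthen Lemma~\ref{lem-suitable-reduced-chore} by showing that, with $\alpha\geq 8/7$, the reduced chore can be chosen so close to $v(c_k^*)$ that FFD's first $k$ bundles are unchanged \emph{as sets} (not merely lex-equivalent), so FFD's overall success status is preserved across the swap.

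The main obstacle I anticipate is exactly the non-monotonicity phenomenon the paper flags: shrinking a single chore can in principle slide it forward in FFD's packing and ripple through all subsequent bundles. The quantitative slack to exploit is Lemma~\ref{lem-nosmall}: every chore has cost greater than $\tau-\mu\geq (1/7)\mu$, so no bundle $D_1,\dots,D_{k-1}$ has the $(1/7)\mu$ of spare capacity needed to absorb an extra chore without violating $\tau$. Turning this bin-occupancy bound into control over FFD's lexicographic priorities for all of $D_{k+1},\dots,D_n$ is what makes the general case genuinely hard. For $n\leq 3$, the bundle sizes forced by Lemma~\ref{lem-tidyup} are small enough to enumerate configurations by hand; for general $n$, I expect the proof to require a new global invariant --- perhaps a potential function tracking ``bundles close to the threshold'' --- capturing the intuition that once $\alpha\geq 8/7$, FFD has enough slack in every bundle to behave monotonically.
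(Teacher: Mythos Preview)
The statement you are attempting to prove is labelled as a \emph{Conjecture} in the paper, and the paper does not provide a proof of it. Immediately before the conjecture the authors write that they ``still have technical difficulties in proving monotonicity,'' and immediately after it they prove only the special cases $n=2$ and $n=3$ (\Cref{thm-ffd-is-monotone}) by a direct case analysis that has nothing to do with the Tidy-Up/reduction machinery. So there is no paper proof to compare your proposal against; you are sketching an attack on an open problem.

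As for the proposal itself, the overall architecture is reasonable but the steps you flag as ``delicate'' are in fact the entire difficulty, and the paper already explains why the reduction route stalls. Two concrete gaps:
\begin{itemize}
\item The bundle-domination step of Tidy-Up removes a bundle $A_k'$ and the matching MMS part $P_j'$, dropping from $n$ bins to $n-1$ while deleting the chores of $A_k'$. There is no mechanism in the paper, and you do not supply one, for transferring ``FFD succeeds on $\items$ with $n$ bins of size $\alpha\mu$'' to ``FFD succeeds on $\items\setminus A_k'$ with $n-1$ bins of size $\alpha\mu$.'' FFD is not known to behave well under this kind of surgery; in fact the non-monotonicity examples show it can fail when the instance becomes \emph{easier}.
\item Your proposed strengthening of Lemma~\ref{lem-suitable-reduced-chore} --- choosing $c_k^{\circ}$ so close to $c_k^*$ that FFD's first $k$ bundles are unchanged as sets --- cannot work as stated: the whole point of reducing $c_k^*$ is to bring $v(A_k)$ from above $\tau$ down to at most $\tau$, so $v(c_k^{\circ})$ must drop by at least $v(A_k)-\tau>0$, and the paper's own construction (Appendix~\ref{app:reduced-chore}) shows that this drop generically triggers a cascade of last-chore swaps among $D_1,\dots,D_k$. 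Controlling what happens in $D_{k+1},\dots,D_n$ after such a cascade is exactly what the authors could not do.
\end{itemize}
Your closing paragraph already concedes that a ``new global invariant'' would be needed for general $n$; that is an honest assessment, but it means the proposal is a research programme rather than a proof.
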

Conjecture \ref{conj:montone} combined with \Cref{lem-monotonicity-implies-exact-mms}
would imply that running \Cref{alg-ptas}
with $\epsilon \approx 0.034$ is sufficient for attaining a $\frac{13}{11}$-MMS approximation.


To shed the light on Conjecture \ref{conj:montone}, we prove the following theorem. 
\begin{theorem}
\label{thm-ffd-is-monotone}
FFD is monotone in the following cases:

(a) With respect to $n=2$ and any $\alpha\geq 1$.

(b) With respect to $n=3$ and any $\alpha\geq \frac{10}{9}$. 
\end{theorem}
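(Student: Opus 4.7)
My plan is to prove both parts by contradiction: assume a First-Fit-Valid tuple $(\items, \allocs, v, \tau)$ exists with $|\allocs| = n$ and some chore $c^*$ not in any $A_k$. Write $a_k = v(A_k)$ and $e = v(c^*)$. Since $c^* \in \items \setminus \bigcup_{i<k} A_i$ and $A_k \cup \{c^*\} \lex{>} A_k$, Lemma \ref{lem-lex-larger-implies-cost-larger} gives $a_k + e > \tau$ for each $k$, while the MMS bound gives $\sum_k a_k + e \leq n\mu$.

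For part (a), $n=2$, I would split on whether $A_1 = B_1$. If so, then $A_1 = D_1$ by Proposition \ref{prop-lexiffd}; since FFD allocates all chores, $\items \setminus A_1 = D_2$ with $v(D_2) \leq \tau$, so $B_2 = D_2$, and combining $A_2 \subseteq \items \setminus A_1 = D_2$ with $A_2 \lex{\geq} B_2 = D_2$ yields $A_2 = D_2$, contradicting $c^*$ being unallocated. Otherwise $A_1 \lex{>} B_1$, forcing $a_1 > \tau$; together with $a_2 + e > \tau$ and $a_1 + a_2 + e \leq 2\mu$, this gives $2\tau < 2\mu$, contradicting $\alpha \geq 1$.

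For part (b), $n=3$, call $A_k$ \emph{heavy} if $a_k > \tau$ (equivalently $A_k \lex{>} B_k$). I would first show exactly one of $A_1, A_2$ is heavy. If two or more bundles are heavy then $\sum_k a_k + e > 3\tau \geq 3\mu$ (using $\alpha \geq 10/9 > 1$), contradicting the MMS bound. If none are heavy, an inductive application of Proposition \ref{prop-lexiffd} gives $A_k = D_k$ for all $k$, which allocates all chores. If only $A_3$ is heavy, then $A_1 = D_1$ and $A_2 = D_2$ force $A_3 \subseteq D_3$, contradicting $a_3 > \tau \geq v(D_3)$. So exactly one of $A_1, A_2$ is heavy, and combining the heaviness $a_k > \tau$ with $a_j + e > \tau$ for the other two bundles and the MMS bound yields $e > 3(\alpha-1)\mu$, which is at least $\mu/3$ when $\alpha \geq 10/9$.

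The finishing step is to apply the Tidy-Up Procedure (Algorithm \ref{alg-tidyup}) to obtain a Tidy-Up tuple $(\items', \allocs', v, \tau)$. By Lemma \ref{lem-tidyup}, $c^*$ remains the smallest chore in $\items'$, every bundle $P_k'$ in the associated MMS partition contains at least $3$ chores, and $v(P_k') \leq \mu$. Since every chore in $\items'$ has cost at least $e > \mu/3$, each $P_k'$ has cost at least $3e > \mu$, contradicting $v(P_k') \leq \mu$. The main obstacle is the clean case analysis in part (b) and recognizing that $\alpha \geq 10/9$ is precisely the threshold making $3(\alpha-1) \geq 1/3$, tying the derived lower bound on $e$ to the upper bound forced by the Tidy-Up conditions.
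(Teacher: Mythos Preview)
Your proof is correct. Part~(a) is essentially the paper's Lemma~\ref{lem-2bundle}, with the two cases $A_1\lex{=}B_1$ and $A_1\lex{>}B_1$ made explicit; the paper folds them together by observing $v(A_1)\ge v(D_1)$ in either case.

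Part~(b) takes a genuinely different finish. Both arguments first reach $v(c^*)>\mu/3$ from the heaviness structure and the MMS bound. The paper then restricts attention to chores of cost at least $v(c^*)$ via Lemma~\ref{lem-benchmark}, notes that at most six such chores exist (each MMS bundle holds at most two chores larger than $\mu/3$), and runs a short counting argument on the restricted bundles to reach a contradiction. You instead invoke the Tidy-Up Procedure as a black box: Lemma~\ref{lem-tidyup} guarantees that each part $P_k'$ of the modified MMS partition has at least three chores (condition~\ref{cond:threechores}) and total cost at most $\mu$ (condition~\ref{cond:maxvalue}), while condition~\ref{cond:smallest} ensures every surviving chore has cost at least $v(c^*)>\mu/3$, an immediate contradiction. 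Your route is shorter and makes the threshold $\alpha\ge 10/9$ (equivalently $3(\alpha-1)\ge 1/3$) very transparent, at the price of relying on the heavier Tidy-Up machinery; the paper's route is more elementary and self-contained. One minor simplification: your case analysis treats ``only $A_2$ heavy'' separately, but once $A_1$ is not heavy you have $A_1\lex{=}D_1$, and part~(a) applied to $(\items\setminus A_1,(A_2,A_3),v,\tau)$ finishes immediately, as the paper does.
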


By arguments analogous to the proof of \Cref{lem-monotonicity-implies-exact-mms},
the theorem implies that, if we change $\frac{13}{11}$ to $\alpha^{(n)}$ in 
\Cref{alg-ptas}, then \Cref{alg-ptas} can find an exact $\alpha^{(n)}$-approximation in polynomial time for $n=2,3$ if all costs are integers, which proves \Cref{thm-monotonicity-intro} from the introduction.

We prove \Cref{thm-ffd-is-monotone} in Appendix \ref{app:benchmark}.


\section{Future Work}

By reducing the chore allocation problem to the job scheduling problem, we give a tight analysis of HFFD algorithm. This opens the door to look at the chore allocation problem from the angle of the job scheduling problem. An interesting direction for the future work is to explore more connections between these two problems. As the job scheduling problem is well studied, it is possible we can find other algorithm of the job scheduling algorithm that we can apply to chores  allocation problems. 

We leave an unsolved problem about the monotonicity of FFD for $n\geq 4$ to the reader. A positive answer to this question could improve the performance of \Cref{alg-ptas}.
As a step towards that goal, it may be useful to develop a concept a minimal counter-example, or a ``tidy-up tuple", for monotonicity.

Interestingly, we do not even know if the monotonicity property itself is monotone: 
suppose that FFD is monotone with respect to $\alpha$ and $n$ --- does this imply that FFD is monotone with respect to $\beta$ and $n$ for any $\beta>\alpha$?

\section*{Acknowledgements}
Xin Huang is supported in part at the Technion by an Aly Kaufman Fellowship.
Erel Segal-Halevi is supported by the Israel Science Foundation (grant no. 712/20).
We are grateful to the reviewers in EC 2023 for their helpful comments.


\bibliographystyle{alpha}
\bibliography{sample}

\newpage

\appendix
\section*{APPENDIX}

\section{Finding a suitable reduced chore}
\label{app:reduced-chore}

In this section we prove 
Lemma \ref{lem-suitable-reduced-chore}
from Subsection  \ref{sub:reduction}.
Let $(\items, \allocs, v,\tau)$ be a Tidy-Up tuple with $\mu := \text{MMS}(\items,v,n)$
and $\tau \ge \frac{8}{7}\mu$,
and at least one bundle has an excessive chore. We show that we can construct a suitable reduced chore (\Cref{def-suitable-redu}). 
Throughout the proof we denote $\gamma := \tau-\mu$, so $\gamma \geq \frac{1}{7}\mu$.

To simplify the reduction, we introduce a value called \emph{fit-in space}. 
 It is  the largest cost that a reduced chore can have such that it is not an excessive chore. 

\begin{definition}[Fit-in space]
Given an index $k$ and a threshold $\tau$, let $c'_k$ be the last chore in bundle $\alloci[k]$.  The \emph{fit-in space} of bundle $\alloci[k]$ is defined as $fs(k)=\tau-v(\alloci[k]\setminus c'_k)$.
\end{definition}
Note that, in a Tidy-Up tuple with
MMS $=\mu\le \tau$,
the fit-in space is always at least 0, since there are no $\mu$-redundant
chores, so removing any single chore brings the bundle cost below $\tau$.

\begin{lemma}
\label{lem:less-than-fs}
Let $(\items, \allocs, v, \tau)$ be a First-Fit-Valid tuple. 
Let $i\in[k]$ and $c_i$ any chore in bundle $A_i$.
Let $j>i$ and $c_j$ any chore in bundle $A_j$.
If $v(c_j)\leq fs(i)$,
then $v(c_j)\leq v(c_i)$.
\end{lemma}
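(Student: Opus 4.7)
The plan is to prove the contrapositive-flavored statement by contradiction via Lemma~\ref{lem-lex-larger-implies-cost-larger}. Since $c_i$ ranges over arbitrary chores in $A_i$, the conclusion $v(c_j)\le v(c_i)$ is equivalent to $v(c_j)\le v(c'_i)$, where $c'_i$ denotes the last (smallest) chore of $A_i$. So I will assume for contradiction that $v(c_j)>v(c'_i)$.

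I will then construct the candidate bundle $B := (A_i\setminus\{c'_i\})\cup\{c_j\}$ and verify three things. First, $B\subseteq \items\setminus\bigcup_{l<i}A_l$: this is immediate because every chore of $A_i\setminus\{c'_i\}$ lies in $A_i$, and $c_j\in A_j$ with $j>i$ so $c_j$ has not been allocated to any earlier bundle. Second, $v(B)\le\tau$: directly from the definition of fit-in space,
\[
v(B) \;=\; v(A_i\setminus\{c'_i\}) + v(c_j) \;\le\; v(A_i\setminus\{c'_i\}) + fs(i) \;=\; \tau .
\]
Third, $B\lex{>}A_i$: the bundle $B$ is obtained from $A_i$ by removing the smallest element $c'_i$ and inserting a strictly larger element $c_j$. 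Sorting both in decreasing order, let $p$ be the smallest index where $v(A_i[p])<v(c_j)$ (which exists since $v(c_j)>v(c'_i)$). For $q<p$ the two sorted sequences agree, while at position $p$ the bundle $B$ has an element of value $v(c_j)$ and $A_i$ has an element of strictly smaller value, giving $B\lex{>}A_i$.

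Combining the three facts contradicts Lemma~\ref{lem-lex-larger-implies-cost-larger}, which applied to the First-Fit-Valid tuple and the $i$-th bundle says that any subset of $\items\setminus\bigcup_{l<i}A_l$ that is lexicographically larger than $A_i$ must have cost exceeding $\tau$. Hence the assumption fails and $v(c_j)\le v(c'_i)\le v(c_i)$. The only mildly subtle step is the lexicographic comparison; the rest is a direct unpacking of the definition of $fs(i)$ together with the First-Fit-Valid machinery already set up in the section.
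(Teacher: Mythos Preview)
Your proof is correct and follows essentially the same approach as the paper: both construct the swapped bundle $A_i\setminus\{c_i\}\cup\{c_j\}$, bound its cost by $\tau$ via the definition of $fs(i)$, and then invoke First-Fit-Validity (the paper directly, you via Lemma~\ref{lem-lex-larger-implies-cost-larger}) to force $v(c_j)\le v(c_i)$. The only cosmetic difference is that you first reduce to the last chore $c'_i$ and argue by contradiction, whereas the paper swaps out the arbitrary $c_i$ directly and reads off the inequality from $A_i\lex{\ge}A_i'$.
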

\begin{proof}
Define $A_i' := A_i \setminus \{c_i\} \cup \{c_j\}$.
The condition  $v(c_j)\leq fs(i)$ implies that $v(A_i')\leq \tau$.

By First-Fit-Valid definition, this implies $A_i\lex{\ge} A_i'$.
Since the only difference between $A_i$ and $A_i'$ is the replacement of $c_i$ by $c_j$,  This implies $v(c_i)\geq v(c_j)$.
\end{proof}

We prove now that we can find a suitable reduced chore. We consider three cases, depending on the size of the fit-in space $fs(k)$.

\subsection{Small fit-in space}
We first consider the case in which the fit-in space is smaller than the non-allocated chore.

\begin{lemma}
\label{lem-suitable-with-small-fs}
Let $(\items, \allocs, v, \tau)$ be a 
First Fit Valid tuple
with an unallocated chore $c^*$,
where there are no $\tau$-redundant chores 
and no chores smaller than $c^*$.
Let $k$ be the smallest integer such that there is an excessive chore $c_k^*\in\alloci[k]$. 

If  $fs(k) < v(c^*)$, then a chore $c_k^{\circ}$ with cost $0$ is a suitable reduced chore for $c_k^*$.
\end{lemma}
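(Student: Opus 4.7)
The plan is to show that replacing the excessive chore $c_k^*$ by a cost-$0$ chore $c_k^{\circ}$ yields an instance on which FFD produces exactly the prescribed allocation. I will first analyze FFD on the intermediate instance $\items \setminus \{c_k^*\}$, and then account for the effect of adding $c_k^{\circ}$. The key observation is that $c_k^{\circ}$, having cost $0$, is the smallest chore of $\items'$, so in FFD it is processed last within each bundle's inner loop and trivially fits into the first bundle without perturbing any earlier decisions.

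Concretely, define $\tilde{\mathbf{A}} := (A_1, \ldots, A_{k-1}, A_k \setminus \{c_k^*\}, A_{k+1}, \ldots, A_n)$ and prove that $(\items \setminus \{c_k^*\}, \tilde{\mathbf{A}}, v, \tau)$ is First Fit Valid. For indices $j \ne k$ this is routine: the $j$-th benchmark bundle over the smaller chore set is lex-smaller than (or equal to) the corresponding benchmark over $\items$, so the original First Fit Valid property of $\mathbf{A}$ transfers by transitivity. Once First Fit Validity is established, and noting that $v(\tilde{A}_j) \le \tau$ for all $j \le k$ (because $k$ is the smallest excessive index and there are no $\tau$-redundant chores, so $v(A_k \setminus \{c_k^*\}) < \tau$), \Cref{cor-ffdsame} yields that the first $k$ bundles of $\mathrm{FFD}(\items \setminus \{c_k^*\}, v, \tau)$ are lex-equivalent to $A_1, \ldots, A_{k-1}, A_k \setminus \{c_k^*\}$. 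Since $c_k^{\circ}$ has cost $0$, when it is added to form $\items'$ it is processed last and fits into the first bundle, without altering any other FFD decision; this yields $\bigcup_{j \le k} D_j = (\bigcup_{j \le k} A_j) \setminus \{c_k^*\} \cup \{c_k^{\circ}\}$ as required.

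The main obstacle is the $j = k$ case of the First Fit Valid check, namely showing $A_k \setminus \{c_k^*\} \lex{\ge} B_k^{\mathrm{new}}$, where $B_k^{\mathrm{new}}$ is the lex-maximum subset of $(A_k \setminus \{c_k^*\}) \cup A_{k+1} \cup \ldots \cup A_n \cup U$ of cost at most $\tau$ (with $U$ the unallocated chores). I plan to argue by contradiction: assuming $B_k^{\mathrm{new}} \lex{>} A_k \setminus \{c_k^*\}$, the First Fit Valid property of $\mathbf{A}$ in $\items$ gives $A_k \lex{\ge} B_k^{\mathrm{new}} \lex{>} A_k \setminus \{c_k^*\}$. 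Since $A_k$ and $A_k \setminus \{c_k^*\}$ agree on positions $1, \ldots, |A_k|-1$, a short position-by-position analysis forces $B_k^{\mathrm{new}}$ to agree there too and to contain an extra chore at position $|A_k|$ of cost at most $\tau - v(A_k \setminus \{c_k^*\}) = fs(k)$. Because this extra chore lies in $\items \setminus \{c_k^*\}$, its cost is at least $v(c^*)$, contradicting the hypothesis $fs(k) < v(c^*)$.
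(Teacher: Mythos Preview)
Your proof is correct and rests on the same key observation as the paper's: once the first $k-1$ bundles are seen to coincide with $A_1,\dots,A_{k-1}$ (and $c_k^{\circ}$, being cost $0$, lands harmlessly in $D_1$), the $k$-th FFD bundle must equal $A_k\setminus\{c_k^*\}$ because any further chore would have cost at least $v(c^*)>fs(k)$. The only difference is organizational: you route this through establishing First Fit Validity of the auxiliary tuple $(\items\setminus\{c_k^*\},\tilde{\mathbf A},v,\tau)$ at index $k$ via the sandwich $A_k\lex{\ge}B_k^{\mathrm{new}}\lex{>}A_k\setminus\{c_k^*\}$ and then invoke \Cref{cor-ffdsame}, whereas the paper argues directly (and more tersely) about the FFD run on $\items'$.
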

\begin{proof}

Since the cost of $c_k^{\circ}$ is $0$, it is placed in bundle $D_1$, so we have $D_1=\alloci[1]\cup\{c_k^{\circ}\}$.
By \Cref{cor-ffdsame}, bundles $\alloci[1],\dots, \alloci[k-1]$ are the same as the output of FFD algorithm on $\alloci[1]\cup \cdots \cup \alloci[k-1]$. So for the bundles $D_1,\dots, D_{k-1}$, we have
$\bigcup_{j<k}D_j=\bigcup_{j<k}\alloci[j]\cup\{c_k^{\circ}\}$.
When FFD constructs bundle $D_k$,
After allocating chores $\alloci[k]\setminus\{c_k^*\}$, 
the remaining space  is $fs(k)$. 
However, as $fs(k)<v(c^*)$, 
and $c^*$ is the smallest chore in $\items$ by the lemma assumption,
no remaining chore can be allocated to $D_k$. This implies $D_k=\alloci[k]\setminus\{c_k^*\}$, so Definition \ref{def-suitable-redu} holds.
\end{proof}

\subsection{Medium fit-in space}
Next, we consider the case in which the fit-in space is larger than the unallocated chore, but smaller than $2 \gamma$
(recall that $\gamma := \tau-\mu$).

\begin{lemma}
\label{lem-suitable-with-medium-fs}
Let $(\items, \allocs, v, \tau)$ be a First-Fit-Valid tuple
with an unallocated chore $c^*$,
where there are no $\tau$-redundant chores and no chores smaller than $c^*$.
Let $k$ be the smallest integer such that there is an excessive chore $c_k^*\in\alloci[k]$. 
If  $v(c^*)\le fs(k)\le 2\gamma$, then we can construct for $c_k^*$ a suitable reduced chore $c_k^{\circ}$.
\end{lemma}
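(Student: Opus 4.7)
The plan is to set $v(c_k^\circ) := fs(k)$. Since $c_k^*$ is excessive, $v(c_k^*) > fs(k)$, so $v(c_k^\circ) < v(c_k^*)$ as required for a reduced chore, and $v(c_k^\circ) \in [v(c^*), 2\gamma]$ by the case hypothesis. Define $A_k' := (A_k \setminus \{c_k^*\}) \cup \{c_k^\circ\}$, which has cost exactly $\tau$, and let $\allocs'$ be obtained from $\allocs$ by replacing $A_k$ with $A_k'$. The core technical step is to prove that $(\items', \allocs', v, \tau)$ is First Fit Valid. Once this is established, every bundle of $\allocs'$ through index $k$ has cost at most $\tau$, so Corollary \ref{cor-ffdsame} yields that the first $k$ bundles of $FFD(\items', v, \tau)$ are lex-equivalent to $A_1, \ldots, A_{k-1}, A_k'$, giving $\bigcup_{j \leq k} D_j = (\bigcup_{j \leq k} A_j) \setminus \{c_k^*\} \cup \{c_k^\circ\}$, which is precisely the suitable-reduced-chore condition in Definition \ref{def-suitable-redu}.

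To verify FFV of $\allocs'$, I split by bundle index $j$. For $j > k$, the remaining set $\items' \setminus \bigcup_{i<j}\allocs'_i$ coincides with $\items \setminus \bigcup_{i<j} A_i$ because the two $c_k^* \leftrightarrow c_k^\circ$ swaps cancel, so FFV transfers from the original tuple. For $j = k$, I would show that no $S \subseteq \items' \setminus \bigcup_{i<k} A_i$ with $v(S) \leq \tau$ is lex-larger than $A_k'$ by case analysis on the first differing position $r$: at $r \leq |A_k|-1$, since $v(c_k^\circ) = fs(k) < v(A_k[r])$, the chore $c_k^\circ$ (if present in $S$) must lie beyond position $r$, so removing it yields a subset of $\items$ still lex-larger than $A_k$, contradicting $A_k \lex{\ge} B_k$; at $r = |A_k|$ the cost constraint $v(S[r]) > fs(k)$ forces $v(S) > \tau$; and at $r > |A_k|$ the positivity of every chore cost (Condition \ref{cond:hz}) precludes extending $S$ further while keeping $v(S) \leq \tau$.

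The hardest case is $j < k$: for $S \subseteq \items' \setminus \bigcup_{i<j} A_i$ with $v(S) \leq \tau$ containing $c_k^\circ$, the natural swap $S^* := (S \setminus \{c_k^\circ\}) \cup \{c_k^*\}$ handles the situation when $v(S^*) \leq \tau$ immediately via FFV plus monotonicity of the lex order under cost increases (giving $A_j \lex{\geq} S^* \lex{\geq} S$). The main obstacle is the sub-case $v(S^*) > \tau$: here I expect to invoke the key inequality $v(A_j) > \tau - v(c^*)$ (which follows from $c^* \notin A_j$ together with the lex-maximality of $A_j$), combined with the bound $fs(k) \leq 2\gamma$ and potentially the alternative swap $(S \setminus \{c_k^\circ\}) \cup \{c^*\}$, to either derive a contradiction with $v(S) \leq \tau$ or produce a valid FFV comparison that yields $A_j \lex{\geq} S$. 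This sub-case is precisely where the hypothesis $fs(k) \leq 2\gamma$ should be essential, as it controls the slack created when exchanging $c_k^\circ$ for $c^*$ relative to the cost gap $v(c_k^*) - fs(k)$.
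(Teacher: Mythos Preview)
Your approach has a genuine gap in the $j<k$ case, and it is not the kind that can be patched: the claim that $(\items',\allocs',v,\tau)$ is First Fit Valid is generally \emph{false}. Concretely, take $j<k$ and consider the bundle $A_j$; nothing in the hypotheses prevents the last chore of $A_j$ (call it $c$) from satisfying $v(c)<v(c_k^\circ)=fs(k)$ while there is still enough room in $A_j$ after the non-last chores to accept $c_k^\circ$. In that situation the set $S$ obtained from $A_j$ by swapping $c$ for $c_k^\circ$ is a subset of $\items'\setminus\bigcup_{i<j}A_i$ with $v(S)\le\tau$ and $S\lex{>}A_j$. This is exactly the sub-case you flagged as uncertain ($v(S^*)>\tau$, since re-inserting $c_k^*$ would overflow), and neither the bound $fs(k)\le 2\gamma$ nor the alternative swap with $c^*$ resolves it: the swap $S\setminus\{c_k^\circ\}\cup\{c^*\}$ gives a set that is lex-\emph{smaller} than $S$, so the implication goes the wrong way. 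In fact the stronger conclusion you aim for (that the first $k$ bundles of $FFD(\items',v,\tau)$ are lex-equivalent to $A_1,\dots,A_{k-1},A_k'$) is false in general, so FFV of $\allocs'$ cannot hold either.

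The paper does not attempt to show $\allocs'$ is FFV; it argues directly about the run of FFD on $\items'$ and proves only the weaker union condition in Definition~\ref{def-suitable-redu}. Two differences matter. First, the paper takes $v(c_k^\circ):=\min\{fs(k),\min_{c'\in NT}v(c')\}$, where $NT$ is the set of non-last chores of $A_1,\dots,A_k$; this guarantees $c_k^\circ$ is processed by FFD only after every non-last chore, so those chores are allocated to the same bundles as before. Second, for the last chores the paper explicitly tracks the resulting \emph{cascade}: $c_k^\circ$ may land in some $D_{j_1}$ with $j_1<k$, displacing its last chore $c^{(1)}$, which in turn displaces a later last chore, and so on; the chain is shown to terminate inside $D_k$ using the comparison of each displaced chore with the set $FT:=\{c\notin\bigcup_{i\le k}A_i: v(c)\le fs(k)\}$ (Claim~\ref{claim-cklowerbound}) together with $fs(k)\le 2\gamma$, which leaves no room for a second chore in $D_k$ after the cascade ends there. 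The individual bundles $D_j$ genuinely differ from $A_j$, but their union over $j\le k$ is preserved, which is all that is required.
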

\begin{proof}
To define the cost of the reduced chore $c_k^{\circ}$, we introduce two disjoint sets. The first set contains all non-last chores allocated before the excessive chore $c_k^*$. Formally:
\begin{align*}
NT:=\{c'\in\bigcup_{j\le k}\alloci[j]\mid c'\text{ is a non-last chore}\}
\end{align*}
The second set contains the chores that have not been allocated yet but they can compete for the fit-in space in bundle $\alloci[k]$. Formally:
\begin{align*}
FT:=\{c\in\items\setminus(\bigcup_{j\le k}\alloci[j])\mid v(c)\le fs(k)\}
\end{align*}

Before we define the cost of reduced chore, we informally present the two properties that it should satisfy. Then we give a precise definition and prove  these two properties hold.
\begin{itemize}
\item First, we need $v(c_k^{\circ})\le \min_{c'\in NT}\{v(c')\}$, so that the allocation of $\items'$ would not influence any non-last chore allocated before chore $c_k^*$.
\item Second, we need $v(c_k^{\circ})\ge\max_{c\in FT}\{v(c)\}$, so that  any chore in the set $FT$ would not influence the first $k$ bundles of the allocation of  $\items'$.
\end{itemize} 

Let us first prove that it is possible to satisfy both properties at once.
\begin{claim}
\label{claim-cklowerbound}
$\min_{c'\in NT}\{v(c')\} \geq \max_{c\in FT}\{v(c)\}$.
\end{claim}

\begin{proof}
We prove that $v(c')\geq v(c)$ for every $c'\in NT$ and $c\in FT$.
We consider two cases.

If $c'$ is a non-last chore of bundle $\alloci[k]$,
then $v(c')\geq v(c^*_k)$ since $c^*_k$ is the last chore of 
$\alloci[k]$,
and $v(c^*_k)>fs(k)$ since $c^*_k$ is an excessive chore,
and $fs(k)\geq v(c)$ by definition of $FT$. By transitivity, $v(c')\geq v(c)$.

Otherwise, $c'$ is a non-last chore of a bundle $\alloci[j]$, where $j< k$.
Since $A_k$ is the first bundle with an excessive chore, $A_j$ has no excessive chore, so $v(A_j)\leq \tau$. We now construct from $A_j$ a new bundle $A_j'$ as follows:
\begin{itemize}
\item Remove from $A_j$ the chore $c'$ and the last chore of $A_j$.  
Each chore cost is larger than 
$v(c^*)$ by the assumption of 
Lemma \ref{lem-suitable-with-medium-fs},
which is larger than $\gamma$
by Lemma \ref{lem-nosmall}.
Therefore, the cost after the removal is smaller than $\tau-2\gamma$.
\item Add the chore $c$. Since $c\in FT$, $v(c)\leq fs(k)$. By assumption, $fs(k)\leq 2\gamma$. Therefore, the cost after the addition is at most $\tau$.
\end{itemize} 
Since the tuple is First Fit Valid, $A_j$ is lexicographically-maximal among all bundles with cost at most $\tau$ in $\items\setminus \cup_{i<j}A_i$. In particular, 
$A_j \lex{\geq} A_j'$. 
Note that $A_j$ and $A_j'$ are identical, except that $c'$ and a smaller chore are replaced with $c$. Hence, $A_j \lex{\geq} A_j'$ requires that $v(c')\geq v(c)$.
\end{proof}

We now define the cost of the reduced chore as
\begin{align*}
v(c_k^{\circ}):=\min\{fs(k),\min_{c\in NT}\{v(c)\} \}.
\end{align*} 
It satisfies the first property by definition, and satisfies the second property by Claim \ref{claim-cklowerbound} and the definition of FT.
We prove below that it is a suitable reduced chore.

Let $\items'=\items\setminus\{c^*_k\}\cup\{c_k^{\circ}\}$. 
Let $\mathbf{D}$ be the output of $FFD(\items',v, \tau)$. 
We now prove that the first $k$ bundles of $\mathbf{D}$ are almost the same as the first $k$ bundles of $\mathbf{A}$: they can differ only in the last chores. Formally:
\begin{claim}
\label{claim-non-last-chores}
For any $j\in\{1,\ldots, k\}$, all non-last chores of $A_j$ are allocated to $D_j$.

\end{claim}
\begin{proof}
By Corollary \ref{cor-ffdsame}, 
since $(\items,\allocs,v, \tau)$ is First Fit Valid, the bundles $(\alloci[1],\dots,\alloci[k-1])$ are the first $k-1$ bundles of the output of $FFD(\items,v, \tau)$. 
Moreover, $FFD(\items,v, \tau)$ allocates all non-last chores in $A_k$ to the $k$-th bundle.
Therefore, we can compare the non-last chores in $A_1,\ldots,A_k$ with those in $D_1,\ldots,D_k$ by comparing the two runs of FFD. These two runs differ only due to the removal of $c_k^*$ and the addition of $c_k^{\circ}$.

The removal of $c_k^*$ does not affect bundles $1,\ldots, k-1$ at all, since $c_k^*$ was not allocated to them by $FFD(\items,v, \tau)$. 
It does not affect the non-last chores in $A_k$, since all of them were processed before $c_k^*$ by $FFD(\items,v, \tau)$.

The addition of $c_k^{\circ}$ does not affect the non-last chores in $A_1,\ldots, A_{k-1}$, since these chores are in NT, and $c_k^{\circ}$ is smaller than them by construction, so FFD processes $c_k^{\circ}$ after all these chores have already been allocated.
The non-last chores in $A_k$ are larger than $c_k^{\circ}$, so they too are processed before $c_k^{\circ}$.
\end{proof}

\begin{figure}
    \centering
    \includegraphics[width=16cm]{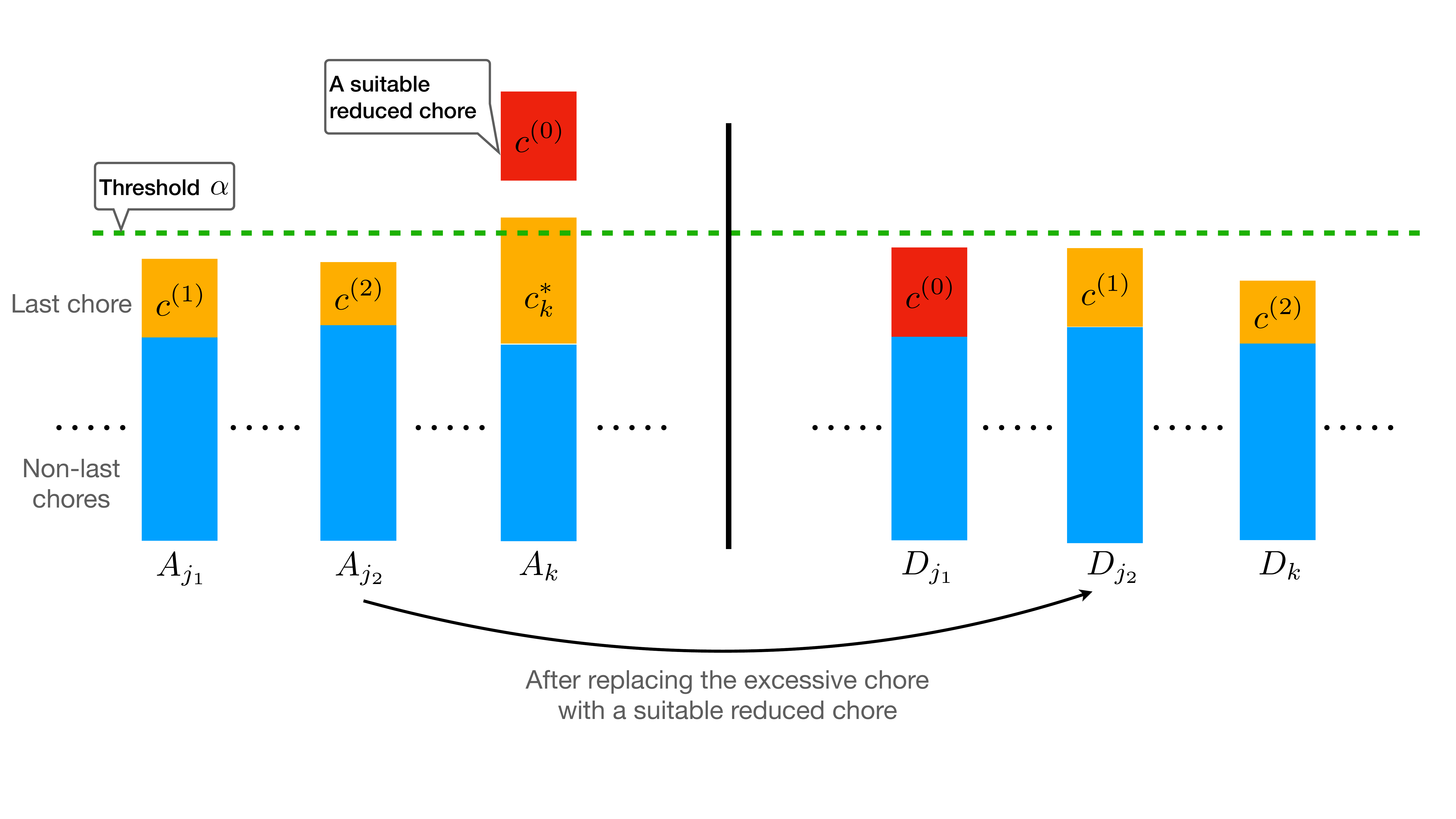}
    \caption{The sequence of reallocation after replacing the excessive chore $c_k^*$  with a suitable reduced chore $c^{\circ}$, in the proof of Claim \ref{claim:fs(k)>v(c*)} (for a medium fit-in space). Here we have $t=2$.}
    \label{fig:sequence}
\end{figure}
\begin{claim}
\label{claim:fs(k)>v(c*)}
When $v(c^*)\leq fs(k) \leq 2\gamma$, the equation 
in Definition \ref{def-suitable-redu} holds true, so chore $c_k^{\circ}$ is a suitable reduced chore.
\end{claim}
\begin{proof}
By Claim \ref{claim-non-last-chores}, the non-last chores in bundles $A_1,\ldots,A_k$ are allocated to the same bundles in $D_1,\ldots,D_k$.
We prove that the only possible effect of the reduction is a cyclic exchange of last chores
among bundles $1,\ldots,k$. 
Please refer to Figure \ref{fig:sequence} for an illustration of such a cyclic exchange, involving three  chores.



Denote $c^{(0)}:=c_k^{\circ}$.
While $c^{(0)}$ is not allocated, 
the bundles $D_1,D_2,\ldots$ are equal to the bundles $A_1,A_2,\ldots$. 
If $c^{(0)}$ is not allocated before bundle $D_k$, then there is room for it there since $v(c^{(0)})\leq fs(k)$.
Moreover, by Claim \ref{claim-cklowerbound} we have 
$v(c^{(0)})\geq \max_{c\in FT}\{v(c)\}$, so 
$v(c^{(0)})$ is processed before any chore that could compete with it on $fs(k)$.
Therefore, FFD allocates $c^{(0)}$ into $D_k$.

Otherwise, $c^{(0)}$ is  allocated to $D_{j_1}$ for some $j_1\in\{1,\ldots,k-1\}$.
We denote by $c^{(1)}$ the last chore in bundle $A_{j_1}$.
As non-last chores are not affected by the reduction, only two cases are possible: 1. $D_{j_1}=\alloci[j_1]\cup\{c^{(0)}\}$; 2. $D_{j_1}=\alloci[j_1]\setminus\{c^{(1)}\}\cup \{c^{(0)}\}$. 
In fact, as $v(c^{(0)})\ge v(c^*)$ and the tuple $(\items,\allocs,v, \tau)$ is First Fit Valid, we have $v(\alloci[j_1]\cup\{c^{(0)}\})
\geq
v(\alloci[j_1]\cup\{c^*\})
>\tau$, so 
case 1 is not possible --- the only possible case is that  $D_{j_1}=\alloci[j_1]\setminus\{c^{(1)}\}\cup \{c^{(0)}\}$, that is, chore $c^{(0)}$ ``pushes'' chore $c^{(1)}$ out of its bundle. 
This implies that FFD processes $c^{(0)}$ before $c^{(1)}$,
so $v(c^{(0)})\geq v(c^{(1)})$.

Now, we apply similar reasoning to $c^{(1)}$.
While $c^{(1)}$ is not allocated, 
the bundles $D_{j_1+1},D_{j_1+2},\ldots$ are equal to the bundles $A_{j_1+1},A_{j_1+2},\ldots$. 
If $c^{(1)}$ is not allocated before  $D_k$, then there is room for it there since $v(c^{(0)})\geq v(c^{(1)})$.
Moreover,
since $c^{(0)}$ replaced $c^{(1)}$ in bundle $D_{j_1}$, 
We know that 
the fit-in space of bundle $A_{j_1}$ must satisfy $fs(j_1)\geq v(c^{(0)})\geq \max_{c\in FT} v(c)$,
so $v(c)\leq fs(j_1)$ for any chore  $c\in FT$.
\Cref{lem:less-than-fs} implies that $v(c)\leq v(c^{(1)})$
for any $c\in FT$,
so $v(c^{(1)})\geq \max_{c\in FT}\{v(c)\}$.
This also implies that, when FFD constructs $D_k$,
$c^{(1)}$ is processed before any chore that could compete with it on $fs(k)$.
Therefore, FFD allocates $c^{(1)}$ into $D_k$, and we are done  --- we have completed an exchange cycle of length 2.

Otherwise, $c^{(1)}$ is necessarily allocated to $D_{j_2}$ for some $j_2\in\{j_1+1,\ldots,k-1\}$.
We denote by $c^{(2)}$ the last chore in bundle $A_{j_2}$.
As non-last chores are not affected by the reduction, only two cases are possible: 1. $D_{j_2}=\alloci[j_2]\cup\{c^{(1)}\}$; 2. $D_{j_2}=\alloci[j_2]\setminus\{c^{(2)}\}\cup \{c^{(1)}\}$. 
In fact, as $v(c^{(1)})\ge v(c^*)$ and the tuple $(\items,\allocs,v, \tau)$ is First Fit Valid, we have $v(\alloci[j_1]\cup\{c^{(1)}\})
\geq
v(\alloci[j_1]\cup\{c^*\})
>\tau$, so 
case 1 is not possible --- the only possible case is that  $D_{j_2}=\alloci[j_2]\setminus\{c^{(2)}\}\cup \{c^{(1)}\}$, that is, chore $c^{(1)}$ ``pushes'' chore $c^{(2)}$ out of its bundle, which implies $v(c^{(1)})\geq v(c^{(2)})$.

By a similar argument, we can define $j_3<j_4<\dots< j_t<k$ and $c^{(3)},c^{(4)},\dots,c^{(t)}$ with 
$v(c^{(1)}) \geq v(c^{(2)}) \geq \cdots \dots \geq v(c^{(t)})\geq
\max_{c\in FT} v(c)\geq
 v(c^*)$. 
 Here, $t$ is the integer for which chore $c^{(t)}$ is allocated into $D_k$. Note that $t=0$ is possible.
 
In summary,
for every $i$ in $1,\ldots, t$, 
we have $D_{j_i} = A_{j_i} \setminus \{c^{(i)}\}
\cup \{c^{(i-1)}\}$,
and $c^{(t)}$ is allocated to $D_k$.
Finally, as the fit-in space $fs(k)\le 2\gamma$, and all chore costs are larger than $\gamma$, there is no space to allocate another chore into $D_k$ after allocating  $c^{(t)}$. So we have $D_k=A_k\setminus \{c_k^*\}\cup \{c^{(t)}\}$,  
 and Definition \ref{def-suitable-redu} is satisfied.
\end{proof}

This completes the proof of Lemma \ref{lem-suitable-with-medium-fs}.
\end{proof}

\subsection{Large fit-in space}
The case in which the fit-in space is large requires a different proof, since in this case the non-last chores might also be affected.

To prove this case, following  \cite{DBLP:journals/siamcomp/CoffmanGJ78},
we classify the allocated chores as follows. Given a Tidy-Up tuple $(\items,\allocs,v, \tau)$, for any  chore $c\in \alloci[k]$ for some $k\in[n]$:
\begin{itemize}
\item Chore $c$ is called an \emph{excessive chore} if it satisfies Definition \ref{def-excessive}, that is, it is the last (smallest) chore in $A_k$, and $v(A_k)>\alpha$.
~~~
In Example \ref{exm:chores}, running HFFD with threshold 75 and taking the cost function of the type A agents, chore 6 in bundle $A_1$ is  excessive.

\item Otherwise, $c$ is called a \emph{regular} chore if its cost is greater than or equal to the largest chore in the next bundle,
that is, $\vai[]{c}\ge\vai[]{\alloci[k+1][1]}$.
Note that, by First-Fit-Valid properties, this implies that a regular chore is larger than \emph{all} chores in \emph{all} next bundles.
~~~
In Example \ref{exm:chores}, the regular chores are $c_1$ (in $A_1$), $c_2,c_3$ (in $A_2$), $c_4,c_5,c_8$ (in $A_3$), and all chores in $A_4$.
\item 
 Otherwise, $c$ is called a \emph{fallback} chore. 
 In Example \ref{exm:chores}, the only fallback chore is $c_7$ in $A_2$.
\end{itemize}
For any integer $r\geq 1$, a bundle $\alloci[k]$ containing exactly $r$ regular chores
is called \emph{$r$-regular}. 
\cite{DBLP:journals/siamcomp/CoffmanGJ78} prove the following lemma for the output of FFD; we extend the proof to Tidy-Up tuples:
\begin{lemma}
\label{lem-regular-bundles}
In any Tidy-Up tuple $(\items,\allocs,v, \tau)$,
for every integer $r\geq 1$, all the $r$-regular bundles in $\mathbf{A}$ appear before all the $r+1$-regular bundles.
\end{lemma}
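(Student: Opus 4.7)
The statement is equivalent to: in any Tidy-Up tuple, the number of regular chores $r_k$ in bundle $A_k$ is non-decreasing in $k$. I would prove this by contradiction: assume some consecutive indices satisfy $r_k = r+1 > s = r_{k+1}$ (a non-decreasing failure at any pair of indices reduces to a consecutive failure; the edge case $k+1=n$ can be handled similarly using an unallocated chore in place of $A_{k+2}[1]$). Two consequences of First-Fit-Valid will be used. First, $v(A_j[1]) \geq v(A_{j+1}[1])$ for every $j$: each single chore has cost less than $\mu\leq\tau$ by Condition \ref{cond:twochores}, so every singleton is a valid subset in the benchmark comparison, forcing $v(B_j[1])$ to equal the largest remaining chore cost, whence $v(A_j[1])\geq v(B_j[1])\geq v(A_{j+1}[1])$. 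Second, combining this with the $(r+1)$-regularity of $A_k$ yields $v(A_k[i])\geq v(A_{k+1}[1])\geq v(A_{k+1}[i])$ for $i\leq s$ and $v(A_k[s+1])\geq v(A_{k+1}[1])\geq v(A_{k+2}[1])$.

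\textbf{Exchange argument.}
My plan is to exhibit a subset
\[
S := \{A_{k+1}[1],\ldots,A_{k+1}[s],\, A_{k+2}[1]\}
\]
of $\items\setminus\bigcup_{i\leq k} A_i$ that is lexicographically strictly larger than $A_{k+1}$ but has cost at most $\tau$. This would contradict \Cref{def-first-fit-valid}, since the benchmark bundle $B_{k+1}$ is lex-maximal among subsets of cost $\leq\tau$, and would then satisfy $B_{k+1}\lex{\geq}S\lex{>}A_{k+1}$. The lexicographic comparison is immediate: $A_{k+1}[s+1]$ is either absent or non-regular (so $v(A_{k+1}[s+1])<v(A_{k+2}[1])$), so the first lex-difference occurs at position $s+1$, where $S[s+1]=v(A_{k+2}[1])>v(A_{k+1}[s+1])$. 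The per-position inequalities above then yield the cost bound $v(S)\leq \sum_{i=1}^{s+1} v(A_k[i])$.

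\textbf{Case analysis on $|A_k|$ and main obstacle.}
If $|A_k|\geq s+2$, Condition \ref{cond:no-redundant} gives $\sum_{i=1}^{s+1}v(A_k[i])\leq \sum_{i=1}^{|A_k|-1}v(A_k[i])<\mu\leq\tau$. If $|A_k|=s+1$ (so $s=r$ and every chore of $A_k$ is regular) and $A_k$ is not excessive, then $v(S)\leq v(A_k)\leq\tau$ directly. The main obstacle is the remaining sub-case, $|A_k|=s+1$ with $A_k$ excessive, in which $v(A_k)>\tau$ and the naive bound fails. Here I plan to invoke the no-domination property enforced by Algorithm \ref{alg-tidyup}: in this sub-case all $s+1$ chores of $A_k$ are regular (hence each at least $v(A_{k+1}[1])$), the top two sum to less than $\mu$ (Condition \ref{cond:twochores}), each chore exceeds $\tau-\mu$ (Condition \ref{cond:hz}), and each MMS-partition part has at least three chores (Condition \ref{cond:threechores}). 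These structural constraints should force $A_k$ to dominate the MMS-partition part containing the excessive chore, contradicting that Tidy-Up removed all dominating pairs. Carefully constructing this domination map is the technical heart of extending the FFD-only result of \cite{DBLP:journals/siamcomp/CoffmanGJ78} to general Tidy-Up tuples.
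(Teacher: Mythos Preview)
Your proof has two issues—a phantom obstacle and a real gap—and the paper's argument is considerably simpler.

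\textbf{The phantom obstacle.} Your ``main obstacle'' sub-case ($|A_k|=s+1$ with $A_k$ excessive) is vacuous. If $|A_k|=s+1$ then, since $r_k=r+1\ge s+1$ and $|A_k|\ge r_k$ always, you get $s=r$ and $|A_k|=r_k$, so every chore of $A_k$ is regular. But regular chores are by definition not excessive, hence $v(A_k)\le\tau$. The domination argument you sketch for this case is never needed.

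\textbf{The real gap.} Your parenthetical ``non-regular, so $v(A_{k+1}[s+1])<v(A_{k+2}[1])$'' is not justified: $A_{k+1}[s+1]$ could be \emph{excessive} rather than fallback, and an excessive chore may well satisfy $v(A_{k+1}[s+1])\ge v(A_{k+2}[1])$. In that case your set $S$ is not lex-strictly-larger than $A_{k+1}$ and the contradiction evaporates. This sub-case can be ruled out separately (e.g.\ show that $A_{k+1}$ must then coincide with the $s{+}1$ largest remaining chores, whose total cost is at most $(s{+}1)\tau/(r{+}1)\le\tau$, contradicting excessiveness), but you have to actually do it.

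\textbf{The paper's approach.} The paper argues directly and avoids all of this. Given that $A_j$ is $r$-regular, the $r$ regular chores sum to at most $\tau$ (they exclude any excessive chore, and Tidy-Up gives the bound), so the smallest one has cost at most $\tau/r$. Since that chore is regular, every remaining chore costs at most $\tau/r$; hence the set $L_r$ of the $r$ largest chores in $\items\setminus\bigcup_{i\le j}A_i$ satisfies $v(L_r)\le\tau$. By First-Fit-Valid, $A_{j+1}\lex{\ge}L_r$, which forces $A_{j+1}$ to contain $L_r$ as its first $r$ chores; these are then all regular in $A_{j+1}$ (either $|A_{j+1}|>r$ so none is last, or $A_{j+1}=L_r$ and $v(A_{j+1})\le\tau$). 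No exchange argument or case analysis on $|A_k|$ is required.
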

\begin{proof}
Suppose some bundle $A_j$ is $r$-regular. We prove that $A_{j+1}$ contains at least $r$ regular chores.

Since regular chores are not excessive, the sum of regular chores in every bundle is at most $\tau$.
Therefore, $v(A_j[r])\leq \tau/r$, where 
$A_j[r]$ is the smallest regular chore in $A_j$.
Since $A_j[r]$ is regular, its cost is at least as large as all chores in $\items\setminus \cup_{i\leq j}A_{i}$.
Let $L_r$ be the set of $r$ largest chores in 
$\items\setminus \cup_{i\leq j}A_{i}$.
Since all these chores' costs are at most $\tau/r$, we have
$v(L_r)\leq r\cdot (\tau/r) = \tau$.
By First Fit Vaild properties, 
$A_{j+1}$ is (weakly) lexicographically larger 
than all subsets of $\items\setminus \cup_{i\leq j}A_{i}$ with cost at most $\tau$.
Therefore, $A_{j+1}$ must contain $L_r$.
Note that there must be indeed at least $r$ chores in $\items\setminus \cup_{i\leq j}A_{i}$, since there is an unallocated chore.
All the chores in $L_r$ are regular in $A_{j+1}$, which completes the proof.
\end{proof}

\begin{lemma}
\label{lem-suitable-with-large-fs}
Let $(\items, \allocs, v, \tau)$ be a Tidy-Up tuple (where the MMS is $\mu$). Let $\gamma := \tau - \mu$, and assume $\gamma \geq \mu/7$.
Let $k$ be the smallest integer such that there is an excessive chore $c_k^*\in\alloci[k]$. If the fit-in space  $fs(k)> 2\gamma$,
then a chore $c_k^{\circ}$ with cost
$v(c_k^{\circ}):=fs(k)$ is a 
suitable reduced chore for $c_k^*$.
\end{lemma}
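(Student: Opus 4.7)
The plan is to establish the defining equation \eqref{eq:suitable-redu} by tracking the execution of $FFD(\items', v, \tau)$ step by step, exploiting strong structural constraints on $\alloci[k]$ that the hypothesis $fs(k) > 2\gamma$ forces. First, I extract these constraints. Since $c_k^*$ is excessive, $v(c_k^*) > \tau - v(\alloci[k] \setminus \{c_k^*\}) = fs(k) > 2\gamma$, and each non-last chore of $\alloci[k]$ has value at least $v(c_k^*)$. Hence the number $r$ of non-last chores satisfies $r \cdot 2\gamma < v(\alloci[k] \setminus \{c_k^*\}) = \tau - fs(k) < \tau - 2\gamma$, i.e., $r < \tau/(2\gamma) - 1$. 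For every admissible ratio $\alpha^{(n)} \in \{8/7, 15/13, 20/17, 13/11\}$ (with $\gamma = (\alpha^{(n)} - 1)\mu$), a direct computation yields $r \leq 2$. Combined with $|\alloci[k]| \geq 3$ (which follows from excessiveness and Tidy-Up condition \ref{cond:twochores}, since $|\alloci[k]| = 2$ would give $v(\alloci[k]) < \mu \leq \tau$, contradicting excessiveness), we conclude $r = 2$ exactly. Thus $\alloci[k] = \{a_1, a_2, c_k^*\}$ with $v(a_1) + v(a_2) = \tau - fs(k)$ and $v(c_k^\circ) = fs(k)$.

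Next, I compare the runs $FFD(\items, v, \tau)$ and $FFD(\items', v, \tau)$. By \Cref{cor-ffdsame}, the first $k-1$ bundles of $FFD(\items, v, \tau)$ coincide with $\alloci[1], \dots, \alloci[k-1]$ (each of cost at most $\tau$ by minimality of $k$). The two runs process identical chores in the same decreasing order, except that $c_k^*$ is absent in the $\items'$ run and $c_k^\circ$ (strictly smaller) is inserted later. I then track where $c_k^\circ$ is absorbed in $FFD(\items', v, \tau)$. One of two things happens: (i) $c_k^\circ$ is added directly to the partially-built $D_k$, which at that moment equals $\{a_1, a_2\}$, completing it to cost exactly $\tau$ so that $D_k = \{a_1, a_2, c_k^\circ\}$; or (ii) $c_k^\circ$ is placed into some earlier bundle $D_j$ with $j < k$, conceptually displacing an incumbent chore that then cascades forward.

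Case (i) immediately yields $\bigcup_{j \le k} D_j = \bigcup_{j \le k} \alloci[j] \setminus \{c_k^*\} \cup \{c_k^\circ\}$, because the first $k-1$ bundles remain untouched by $c_k^\circ$. Case (ii) is handled via a cyclic-exchange pattern analogous to the medium fit-in-space case (\Cref{lem-suitable-with-medium-fs}): when $c_k^\circ$ pushes a chore $c^{(1)}$ out of $D_j$, the displaced $c^{(1)}$ is absorbed into a later bundle, possibly triggering further displacements $c^{(1)} \rightarrow c^{(2)} \rightarrow \cdots$, until the final displaced chore lands in $D_k$ and closes it at cost $\tau$. The structural constraint $r = 2$ derived above, combined with Tidy-Up condition \ref{cond:hz} (every chore has value $> \gamma$), guarantees that every displaced chore in the cascade has value at least $v(c_k^\circ) = fs(k)$, so the cascade proceeds in strictly increasing bundle index and terminates cleanly at $D_k$. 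The net effect on the union is precisely the swap $c_k^* \leftrightarrow c_k^\circ$, establishing \eqref{eq:suitable-redu}.

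The main obstacle is the detailed cascade analysis in case (ii): showing that displacements move forward in strictly increasing bundle index, that each replacement is consistent with First-Fit-Validity (invoking \Cref{lem:less-than-fs} to compare fit-in spaces), and that the cascade truly terminates at $D_k$ rather than overflowing past it. The restrictive structure $|\alloci[k]| = 3$ is essential here, as it rules out pathological cascades in which many non-last chores of $\alloci[k]$ participate and controls the total shift in bundle contents induced by $c_k^\circ$.
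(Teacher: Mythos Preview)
Your structural analysis of $\alloci[k]$ is correct and matches the paper: $|\alloci[k]|=3$ with the two non-last chores summing to $\tau-fs(k)$. Case (i) is also fine. The gap is in case (ii).

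The cyclic-exchange argument from \Cref{lem-suitable-with-medium-fs} does not transfer. In the medium case the reduced chore was deliberately chosen with cost at most $\min_{c\in NT}v(c)$, so it is smaller than \emph{every} non-last chore in bundles $1,\ldots,k$; that is precisely what guarantees only last chores can be displaced, one at a time. Here $v(c_k^{\circ})=fs(k)>2\gamma$, and nothing prevents a fallback (non-last, non-regular) chore in some earlier $\alloci[j]$ from having cost below $fs(k)$. So when $c_k^{\circ}$ enters $D_j$ it may push out several fallback chores simultaneously, not a single last chore. Your assertion that ``every displaced chore in the cascade has value at least $v(c_k^{\circ})=fs(k)$'' is not justified by $r=2$ (which concerns only $\alloci[k]$, not $\alloci[j]$) and is false in general; the appeal to \Cref{lem:less-than-fs} does not give this either.

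The paper's route is genuinely different. It introduces the regular/fallback classification and first proves that $\alloci[k]$ is $2$-regular (its two non-last chores are \emph{regular}, not merely present). It then argues separately that $c_k^{\circ}$ cannot land in any $1$-regular bundle: for such $\alloci[j]$, $v(\alloci[j][1])>\tfrac{\mu+\gamma}{2}$ and $v(\alloci[j][2])\geq v(c_k^*)>fs(k)$, so after these two chores the remaining room is below $2\gamma<fs(k)$. Hence $c_k^{\circ}$ lands in some $2$-regular $\alloci[j]$, and the key equality $v(\{\alloci[i][1],\alloci[i][2]\})=\tau-fs(k)$ holds for \emph{every} $i\in\{j,\ldots,k\}$. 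This forces $D_j=\{\alloci[j][1],\alloci[j][2],c_k^{\circ}\}$ with cost exactly $\tau$, and thereafter the entire block of fallback chores of $\alloci[i-1]$ shifts into $D_i$ (because the remaining space after the two regular chores is identical at every step). This is a block shift, not a one-chore cascade, and it depends essentially on the $1$-regular exclusion argument that your proposal omits.
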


\begin{proof}
Throughout the proof, we denote by $\mathbf{P}$ the partition constructed by the Tidy-Up procedure (where it was denoted by $\mathbf{P'}$).

First, we bound the cost of the largest chore in a 1-regular bundle. 
\begin{claim}\label{claim-1st-item-value}
Suppose that bundle $\alloci[j]$ is 1-regular. Then $\frac{\mu+\gamma}{2}<v(\alloci[j][1])<\mu-2\gamma$.
\end{claim}
\begin{proof}
We first prove that $v(\alloci[j][1])>\frac{\mu+\gamma}{2}$.
By Condition \ref{cond:atleasttwochores} in the Tidy-Up Lemma,  $\alloci[j]$ contains at least two chores. 
Because bundle $\alloci[j]$ is 1-regular,  $\alloci[j][2]$ is not a regular chore. So there is a chore  $c$ not allocated in first $j$ bundles with  $v(c) > v(\alloci[j][2])$. 
So the bundle $\{ \alloci[j][1], c\}$ is leximin-larger than $\alloci[j]$.
By \Cref{lem-lex-larger-implies-cost-larger}, this implies
$v(\{ \alloci[j][1], c\})>\tau = \mu+\gamma$.
Since $\alloci[j][1]$ is a regular chore, the cost $v(\alloci[j][1])\ge v(c)$. Therefore,  $v(\alloci[j][1])>\frac{\mu+\gamma}{2}$.

Next, we prove that $v(\alloci[j][1])<\mu-2\gamma$. Consider the partition $\mathbf{P}$
constructed by the Tidy-Up procedure,
and let $P_i$ be the bundle that contains chore 
$\alloci[j][1]$. 
By Condition \ref{cond:threechores} and \ref{cond:maxvalue} in the Tidy-Up Lemma, 
we have $|P_i|\ge 3$ and $v(P_i)\le \mu$. 
By Condition \ref{cond:hz}, all chores are larger than  $\gamma$. 
So  $v(\alloci[j][1])<v(P_i)-2\gamma \le \mu-2\gamma$.
\end{proof}


In the next claim, we show that the bundle $\alloci[k]$ 
(the first bundle with an excessive chore) must be 2-regular and not contain fallback chores.
\begin{claim}\label{claim-ak-2-regular}
For bundle $\alloci[k]$, we have
$\alloci[k]=\{c_1,c_2,c_k^*\}$, where $c_1$ and $c_2$ are regular chores and $c_k^*$ is an excessive chore.
\end{claim}
\begin{proof}
First, we prove that  $|\alloci[k]|\le 3$. 

Since $c_k^*$ is excessive, its cost is larger than the fit-in space $fs(k)$, which is larger than $2\gamma$ by the assumption of \Cref{lem-suitable-with-large-fs}.
As $c_k^*$ is the smallest chore in the bundle, the cost of every other chore in $A_k$ is larger than $2\gamma$ too.
By definition of fit-in space,
$v(A_k\setminus c_k^*) + fs(k) = \tau = \mu+\gamma$,
so 
$v(A_k\setminus \{c_k^*\}) = \mu+\gamma-fs(k) < \mu+\gamma-2\gamma = \mu-\gamma$.
Since $\gamma\geq \mu/7$, 
$A_k\setminus \{c_k^*\}$ can contain at most two chores.
So we have $|\alloci[k]|\le 3$.

It is also impossible that $|\alloci[k]|\leq  2$, since by Condition \ref{cond:twochores} of the Tidy-Up Lemma, the sum of costs of the two largest chores in $\alloci[k]$ is less than $\mu$, while $v(A_k)>\tau$ since $A_k$ has an excessive chore.
Therefore, we have $|\alloci[k]|=3$. 

Since $c_k^*$ is an excessive chore, bundle $\alloci[k]$ can only be 2-regular or 1-regular.  
Suppose for contradiction that it is 1-regular.  
By Claim \ref{claim-1st-item-value}, the cost $v(\alloci[k][1])>\frac{\mu+\gamma}{2}$. 
Consider the partition $\mathbf{P}$ generated at the end of  the Tidy-Up Procedure.
Let $P_k$ be the bundle that contains the chore $\alloci[k][1]$. We first prove that $|P_k|=3$. The remaining space in bundle $P_k$ is 
at most
$\mu-v(\alloci[k][1])<\frac{\mu-\gamma}{2}$. As $\gamma\ge\frac{\mu}{7}$, we have $\frac{\mu-\gamma}{2}\le 3\gamma$. This implies that there are at most two more chores in bundle $P_k$ except chore $\alloci[k][1]$, because the cost of every chore is greater than $\gamma$. This implies $|P_k|\le 3$. By Condition \ref{cond:threechores} in the Tidy-Up Lemma $|P_k|\ge 3$, so  $|P_k|=3$. 

The cost of 
every
chore in $P_k$ except $\alloci[k][1]$ is at most $\mu-[v(\alloci[k][1])+\gamma]$, which is smaller than $2\gamma$. As $c_k^*$ is the smallest chore in bundle $A_k$, and
$v(c_k^*)>fs(k)>2\gamma$,
the cost of every
chore in $A_k$ except $\alloci[k][1]$ is larger than $2\gamma$.
Therefore,  
bundle $\alloci[k]$ dominates bundle $P_k$, 
which contradicts the operation of the Tidy-Up Procedure.

The only remaining case is that $\alloci[k]$ is 2-regular, so $c_1$ and $c_2$ are regular chores.
\end{proof}

\begin{claim}
\label{claim-smaller-3gamma}
The second-highest chore cost in $A_k$ is smaller than $3\gamma$, 
$v(\alloci[k][2])< 3\gamma$.
\end{claim}
\begin{proof}
By Claim \ref{claim-ak-2-regular} and definition of fit-in space, $v(\{\alloci[k][1], \alloci[k][2]\})=\mu+\gamma-fs(k)$.
So $v(\{\alloci[k][1], \alloci[k][2]\})<\mu-\gamma$.
So $v(\alloci[k][2])<\frac{1-\gamma}{2}$. When $\gamma\ge\frac{1}{7}$, we have $\frac{1-\gamma}{2}\le 3\gamma$. 
By transitivity, $v(\alloci[k][2])< 3\gamma$. 
\end{proof}

Next we show a relationship between the second largest chore in a 1-regular bundle and the excessive chore.
\begin{claim}
\label{claim-2nd-item-value}
Suppose that bundle $\alloci[j]$ is 1-regular. 
Then 
$v(\alloci[j][2])\ge v(c_k^*)$.
\end{claim}

\begin{proof}
Since $v(A_k[2])\geq v(c_k^*)$, it is sufficient to prove $v(\alloci[j][2])\ge v(\alloci[k][2])$. 

By Claim \ref{claim-1st-item-value},  $v(\alloci[j][1])<\mu-2\gamma$. 
By Claim \ref{claim-smaller-3gamma},
$v(\alloci[k][2])<3\gamma$.
Therefore, the bundle $A_j' := \{A_j[1], A_k[2]\}$ has cost smaller than $\mu+\gamma = \tau$.
Since $A_k$ is 2-regular, $j<k$ by \Cref{lem-regular-bundles}.
Therefore, by First-Fit-Valid properties, $A_j \lex{\geq} A_j'$.
Since $A_j[1]$ occurs in both bundles, we must have $v(A_j[2])\geq v(A_k[2])$.
\end{proof}

Recall that we define the cost of the reduced chore as $v(c_k^{\circ}):=fs(k)$. 
So $v(\alloci[k]\setminus \{c_k^*\}\cup\{c_k^{\circ}\})=\tau$. Let $\items'=\items\setminus\{c_k^*\}\cup\{c_k^{\circ}\}$ and $\mathbf{D}=FFD(\items',v, \tau)$.

\begin{claim}\label{claim-notin-1regular}
For any $j<k$ for which $A_j$ is 1-regular, FFD will not allocate $c_k^{\circ}$ to $D_j$.
\end{claim}
\begin{proof}
Denote by $t$ the number of 1-regular bundles in $\mathbf{A}$.
We prove the claim by induction on $t$.
For $t=0$ the claim holds vacuously.
Otherwise, 
By \Cref{lem-regular-bundles}, 
the 1-regular bundles in 
$\mathbf{A}$ appear first, so these are exactly the bundles $A_1,\ldots,A_t$.
Suppose the claim holds for $t-1$ 1-regular bundles. By the induction assumption, FFD does not allocate $c_k^{\circ}$ to bundles 
$D_1,\ldots,D_{t-1}$, so
they are equal to $A_1,\ldots, A_{t-1}$.
We prove the same holds for $D_t$.

By Claim \ref{claim-2nd-item-value},
The cost of the reduced chore $c_k^{\circ}$ is less than chore $\alloci[t][2]$,
so it does not influence the allocation of chore $\alloci[t][2]$ or any earlier chore.
So the first two chores in $D_t$ are $\alloci[t][1]$ and $\alloci[t][2]$. 

On the other hand, we prove that after allocating the chore $\alloci[t][2]$ to $D_t$, there is no space to allocate the reduced chore $c_k^{\circ}$. By Claim \ref{claim-1st-item-value}, $v(\alloci[t][1])>\frac{\mu+\gamma}{2}$. 
By Claim \ref{claim-2nd-item-value}, 
$v(\alloci[t][2])\ge v(c_k^*)>v(c_k^{\circ})=fs(k)>2\gamma$. 
So $v(\{\alloci[t][1], \alloci[t][2]\})>0.5 \mu+2.5 \gamma$. The remaining space in $D_t$ after allocating first two chores is $\mu+\gamma-v(\{\alloci[t][1], \alloci[t][2]\})<0.5 \mu-1.5\gamma$. As $\gamma\ge\frac{\mu}{7}$, we have $3.5\gamma \geq 0.5 \mu$, so $0.5z-1.5\gamma\le 2\gamma$. Since $v(c_k^{\circ})= fs(k) >2\gamma$, 
the reduced chore cannot be allocated to $D_t$.
\end{proof}

Suppose that FFD allocates the reduced chore $c_k^{\circ}$ to a bundle $D_j$. We must have  $j\leq k$. Moreover, 
from Proposition \ref{lem-regular-bundles},
Claim \ref{claim-ak-2-regular} and 
Claim \ref{claim-notin-1regular},
it follows that $A_j$ must be 2-regular,
and moreover, all bundles $A_i$ for $i\in\{j,\ldots,k\}$ must be 2-regular.
We prove that the sums of costs of the two regular chores, in all bundles $\alloci[j],\ldots,\alloci[k]$, are equal. Please refer \Cref{fig:2-regular} for this situation. Formally:
\begin{claim}
\label{claim:shift}
If FFD allocates the reduced chore to bundle $A_j$, 
then for all $i\in\{j,\ldots,k\}$, 
$v(\{\alloci[i][1], \alloci[i][2]\})=v(\{\alloci[k][1], \alloci[k][2]\})$.
\end{claim}
\begin{proof}
By definition of regular chores, the cost of regular chores is not increasing with $i$.
So $v(\{\alloci[k][1], \alloci[k][2]\})\le v(\{\alloci[i][1], \alloci[i][2]\})\le v(\{\alloci[j][1], \alloci[j][2]\})$ for any $j\le i\le k$. 
So it is sufficient to prove that $v(\{\alloci[j][1], \alloci[j][2]\})
\leq v(\{\alloci[k][1], \alloci[k][2]\})$. 

By Claim \ref{claim-ak-2-regular}, the cost $v(c_k^{\circ})=fs(k)=\tau-v(\{\alloci[k][1], \alloci[k][2]\})$. As chores $\alloci[j][1]$ and $\alloci[j][2]$ are regular chores, they are larger than $c_k^{\circ}$, so their allocation  would not be influenced by the chore $c_k^{\circ}$. Therefore, we have $\alloci[j][1],\alloci[j][2]\in D_j$. 
Since FFD allocates $c_k^{\circ}$ to $D_j$,
$v(c_k^{\circ})\le \tau-v(\{\alloci[j][1], \alloci[j][2]\})$. 
So
$v(\{\alloci[j][1], \alloci[j][2]\})\le \tau-v(c_k^{\circ}) = v(\{\alloci[k][1], \alloci[k][2]\})$.
As $v(\{\alloci[k][1], \alloci[k][2]\})\le v(\{\alloci[j][1], \alloci[j][2]\})$, we have $v(\{\alloci[k][1], \alloci[k][2]\})=v(\{\alloci[j][1], \alloci[j][2]\})$.
\end{proof}


\begin{figure}
    \centering
    \includegraphics[width=16cm]{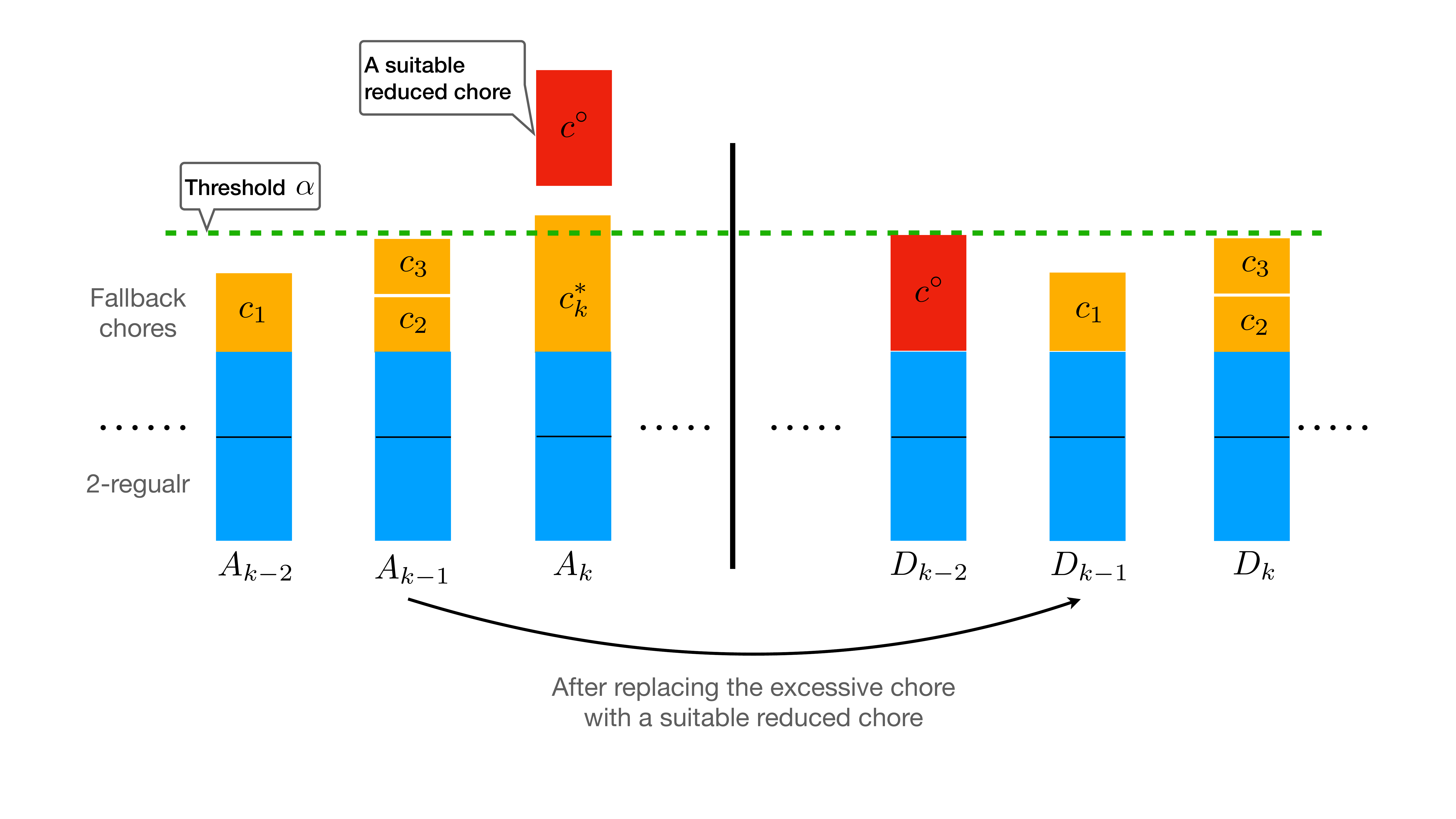}
    \caption{The shift of fallback chores after replacing the excessive chore, in the proof of Claim \ref{claim:shift} (for a large fit-in space).}
    \label{fig:2-regular}
\end{figure}

As $v(\{\alloci[k][1], \alloci[k][2]\})
=
v(\{\alloci[j][1],\alloci[j][2]\})$,
after FFD allocates $c_k^{\circ}$ to bundle $D_j$, 
we have
\begin{align*}
D_j=\{\alloci[j][1],\alloci[j][2],c_k^{\circ}\}
\end{align*}
so $v(D_j)=\tau$ and there is no room in $D_j$ for any other chore. 
This means that, when FFD constructs the bundle $D_{j+1}$, 
all the non-regular chores in $A_j$ are available. To see what happens in this case, refer to \Cref{fig:2-regular}.
FFD first puts into $D_{j+1}$ the two regular chores $A_{j+1}[1]$ and $A_{j+1}[2]$; then, the situation is exactly the same as when FFD constructed $A_j$ in the original run: the remaining space is the same (as $v(\{\alloci[j+1][1], \alloci[j+1][2]\})=v(\{\alloci[j][1], \alloci[j][2]\})$) and the set of remaining chores is the same (as all non-regular chores in $A_j$ are available). Therefore, 
\begin{align*}
D_{j+1} = \{\alloci[j+1][1],\alloci[j+1][2]\}\cup(\alloci[j]\setminus\{\alloci[j][1],\alloci[j][2]\}),
\end{align*}
that is, $D_{j+1}$ contains the regular chores of $A_{j+1}$ and the non-regular chores of $A_j$.
By similar arguments, a similar pattern occurs for all $i\in\{j+1,\ldots,k\}$:  $D_i=\{\alloci[i][1],\alloci[i][2]\}\cup(\alloci[i-1]\setminus\{\alloci[i-1][1],\alloci[i-1][2]\})$: the non-regular chores shift from the bundle with index $i-1$ to the bundle with index $i$. Finally, we have
\begin{align*}
D_{k} = \{\alloci[k][1],\alloci[k][2]\}\cup(\alloci[k-1]\setminus\{\alloci[k-1][1],\alloci[k-1][2]\}),
\end{align*}
The reason for this pattern is that when we have the same space left with same set of chores to compete, FFD will give the same allocation. 
so the equation in Definition \ref{def-suitable-redu} holds, and Lemma \ref{lem-suitable-with-large-fs}  is proved.
\end{proof}

Lemmas \ref{lem-suitable-with-small-fs},
\ref{lem-suitable-with-medium-fs} 
and
\ref{lem-suitable-with-large-fs} together imply Lemma \ref{lem-suitable-reduced-chore}.

\newpage
\section{Proof of \Cref{thm-ffd-is-monotone}}
\label{app:benchmark}
In this section we prove \Cref{thm-ffd-is-monotone} from Section \ref{sec:monotonicity}.

To simplify the proof, we introduce a new notation. Given any bundle $B$, we use  $B[\ge\gamma]$ to denote the set of chores in bundle $B$ with cost larger than $\gamma$. Formally, we have $B[\ge\gamma]=\{c\in B\mid v(c)\ge\gamma\}$.
\begin{lemma}
\label{lem-benchmark}
Given a First Fit Valid $(\items, \allocs, v, \tau)$ and any number $\gamma\ge0$, let $B_k$ be the $k$-th benchmark bundle of $\allocs$. We have the following:

(a) $\alloci[k][\ge\gamma]\lex{\ge}B_k[\ge\gamma]$.

(b) If $\alloci[k][\ge\gamma]\lex{>}\bundle_k[\ge\gamma]$, then $v(\alloci[k][\ge\gamma])>\tau$.
\end{lemma}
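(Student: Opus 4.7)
The plan is to exploit the observation that $B[\ge\gamma]$ is simply the decreasing-order prefix of $B$ consisting of those chores whose cost is at least $\gamma$: concretely, $v(B[\ge\gamma][p]) = v(B[p])$ when $v(B[p])\ge \gamma$, and $v(B[\ge\gamma][p])=0$ otherwise. Once this is stated, both parts reduce to routine case analyses on lexicographic comparisons.

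For (a), I would start from the First Fit Valid hypothesis $\alloci[k]\lex{\ge}B_k$. If $\alloci[k]\lex{=}B_k$ the claim is immediate, so let $q$ be the smallest index where the costs differ; then $v(\alloci[k][q])>v(B_k[q])$. For $p<q$, the values of $\alloci[k]$ and $B_k$ agree and are non-increasing in $p$, so at these positions either both appear in the $[\ge\gamma]$ prefixes or both are truncated. I would then split on $v(\alloci[k][q])$: if $v(\alloci[k][q])\ge\gamma$, then $v(\alloci[k][\ge\gamma][q])=v(\alloci[k][q])>v(B_k[q])\ge v(B_k[\ge\gamma][q])$, giving $\alloci[k][\ge\gamma]\lex{>}B_k[\ge\gamma]$; if $v(\alloci[k][q])<\gamma$, then $v(B_k[q])<\gamma$ as well, so both prefixes terminate at position $q-1$ with identical entries, and $\alloci[k][\ge\gamma]\lex{=}B_k[\ge\gamma]$.

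For (b), my approach is a proof by contradiction leveraging the lex-maximality of $B_k$ among subsets of $\items\setminus\bigcup_{i<k}A_i$ with cost at most $\tau$. Assume $\alloci[k][\ge\gamma]\lex{>}B_k[\ge\gamma]$ but $v(\alloci[k][\ge\gamma])\le \tau$. Since $\alloci[k][\ge\gamma]$ is then a feasible competitor against $B_k$, we get $B_k\lex{\ge}\alloci[k][\ge\gamma]$. The key intermediate step is to upgrade the hypothesis $\alloci[k][\ge\gamma]\lex{>}B_k[\ge\gamma]$ into $\alloci[k][\ge\gamma]\lex{>}B_k$, which is the desired contradiction. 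Let $q$ be the first position of difference between $\alloci[k][\ge\gamma]$ and $B_k[\ge\gamma]$; then $v(\alloci[k][\ge\gamma][q])>v(B_k[\ge\gamma][q])\ge 0$ forces $v(\alloci[k][\ge\gamma][q])\ge \gamma$. At the same position in $B_k$, either $v(B_k[q])\ge\gamma$ (so $v(B_k[q])=v(B_k[\ge\gamma][q])$) or $v(B_k[q])<\gamma\le v(\alloci[k][\ge\gamma][q])$; either way $v(\alloci[k][\ge\gamma][q])>v(B_k[q])$. Moreover for $p<q$ the prefixes agree and their common values are $\ge\gamma$, hence equal to $v(B_k[p])$ as well, completing the lex comparison.

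The main obstacle is the bookkeeping in part (b): I must handle carefully the length mismatches between $B_k$ and its prefix $B_k[\ge\gamma]$, and rule out the scenario in which small chores in $B_k\setminus B_k[\ge\gamma]$ somehow compensate and make $B_k$ lex-larger than $\alloci[k][\ge\gamma]$ at some position where the latter still has a chore of cost $\ge\gamma$. The observation that $v(\alloci[k][\ge\gamma][q])\ge\gamma$ at the first differing position is exactly what closes this gap, since any entry of $B_k$ of cost $<\gamma$ is automatically smaller than it.
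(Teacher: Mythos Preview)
Your proposal is correct and follows essentially the same approach as the paper. For part (a) you argue directly from $A_k\lex{\ge}B_k$ to $A_k[\ge\gamma]\lex{\ge}B_k[\ge\gamma]$, whereas the paper argues the contrapositive (assuming $A_k[\ge\gamma]\lex{<}B_k[\ge\gamma]$ and deriving $A_k\lex{<}B_k$), but the case analysis on the first differing position is the same. For part (b) both you and the paper establish the key intermediate step $A_k[\ge\gamma]\lex{>}B_k$ via the same position-by-position argument; you then invoke the lex-maximality of $B_k$ directly to get the contradiction, while the paper appeals to Lemma~\ref{lem-lex-larger-implies-cost-larger}, whose content is precisely that lex-maximality step.
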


\begin{proof}
To make the proof more readable, let $\alloci[k]'=\alloci[k][\ge\gamma]$ and $B_k'=B_k[\ge\gamma]$.

\paragraph{(a)}
To prove $\alloci[k]'\lex{\ge}B_k'$, 
we suppose the contrary that $\alloci[k]'\lex{<}B_k'$. Then we prove that this implies $\alloci[k]\lex{<}B_k$. This would contradict the assumption that the tuple $(\items, \allocs, v, \tau)$ is First Fit Valid. 

Let $p$ be the smallest integer such that $v(\alloci[k]'[p])\neq v(B_k'[p])$.  When  $\alloci[k]'\lex{<}B_k'$, we have $v(\alloci[k]'[p])< v(B_k'[p])$. We have two cases here.

Case 1: $p\le |\alloci[k]'|$.
Then 
all $A_k'[p] = A_k[p]$, so 
$v(\alloci[k][p])<v(B_k[p])$, whereas $v(\alloci[k][q])=v(B_k[q])$ for all $q<p$. So $\alloci[k]\lex{<}B_k$.

Case 2: $p>|\alloci[k]'|$.
Then, 
for $q\le |A_k'|$, we have $v(\alloci[k][q])=v(B_k[q])$. 
For $|A_k'|<q\le p$, we have $v(\alloci[k][q])<v(B_k[q])$,
as $v(B_k[q])=v(B_k'[q])\ge\gamma$ whereas $v(\alloci[k][q])< \gamma$ by the fact that $q>|A_k'|$.
This implies $\alloci[k]\lex{<}B_k$. 

In both cases, we get a contradiction.
So we must have  $\alloci[k]'\lex{\ge}B_k'$.

\paragraph{(b)}
If $\alloci[k]'\lex{>}B_k'$, we prove that $\alloci[k]'\lex{>}B_k$. Let $p$ be the smallest integer such that $v(\alloci[k]'[p])\neq v(B_k'[p])$. We have $v(\alloci[k]'[p])>v(B_k'[p])$. 
For any $q\le p$, we have  $B_k[q]=B_k'[q]$ or $v(B_k[q]) < \gamma$. 
This implies
 $v(\alloci[k]'[p])>v(B_k[p])$ and $ v(\alloci[k]'[q])\ge v(B_k[q])$ for any $q\le p$. Therefore, $\alloci[k]'\lex{>}B_k$. 
Since $A_k'\subseteq A_k$,
by Lemma \ref{lem-lex-larger-implies-cost-larger} we have $v(\alloci[k]')>\tau$.
\end{proof}

Next, we prove a lemma on the case of two bundles.

\begin{lemma}\label{lem-2bundle}
Suppose $FFD(\items,v, \tau,2)$ allocates all chores into two bundles.
Given any First Fit Valid tuple $(\items,\allocs,v, \tau)$ such that $|\allocs|=2$, we have $\alloci[1]\cup\alloci[2]=\items$.
\end{lemma}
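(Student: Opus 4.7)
The plan is to compare the given First Fit Valid allocation $\allocs = (\alloci[1], \alloci[2])$ with the bundle collection $\mathbf{D}$ output by $FFD(\items, v, \tau, 2)$, which by assumption allocates every chore, so $D_1 \cup D_2 = \items$ and $v(\items) = v(D_1) + v(D_2) \leq 2\tau$. The key observation is that the first benchmark bundle $B_1$ of $\allocs$ satisfies $B_1 \lex{=} D_1$, since both are lexicographically maximal subsets of $\items$ with cost at most $\tau$---$D_1$ by Proposition \ref{prop-lexiffd} and $B_1$ by its defining property. Hence, by First Fit Valid, $\alloci[1] \lex{\ge} B_1 \lex{=} D_1$.

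The heart of the argument is to establish $v(\items \setminus \alloci[1]) \leq \tau$ via a two-case split. If $\alloci[1] \lex{=} D_1$, then $v(\alloci[1]) = v(D_1)$, and since $v(\items) = v(D_1) + v(D_2)$, one gets $v(\items \setminus \alloci[1]) = v(D_2) \leq \tau$. If instead $\alloci[1] \lex{>} D_1$, then Lemma \ref{lem-lex-larger-implies-cost-larger} forces $v(\alloci[1]) > \tau$; combined with $v(\items) \leq 2\tau$, this yields $v(\items \setminus \alloci[1]) < \tau$.

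Once $v(\items \setminus \alloci[1]) \leq \tau$, the full set $\items \setminus \alloci[1]$ is itself a feasible candidate for the second benchmark bundle $B_2$, so $B_2$ is cost-equivalent to $\items \setminus \alloci[1]$. By First Fit Valid, $\alloci[2] \lex{\ge} B_2$, while the subset containment $\alloci[2] \subseteq \items \setminus \alloci[1]$ forces $\alloci[2] \lex{\le} \items \setminus \alloci[1]$. The two lexicographic comparisons squeeze to equality, which in turn gives $\alloci[2] = \items \setminus \alloci[1]$, hence $\alloci[1] \cup \alloci[2] = \items$.

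I do not anticipate a significant obstacle: the whole argument reduces to a two-case analysis on whether the First Fit Valid first bundle agrees with or strictly lex-dominates $D_1$, and in both cases the remaining budget cannot exceed $\tau$. The only slightly delicate step is the strict case $\alloci[1] \lex{>} D_1$, which crucially uses Lemma \ref{lem-lex-larger-implies-cost-larger} to convert lex-strict-dominance into the strict cost bound $v(\alloci[1]) > \tau$.
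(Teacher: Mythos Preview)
Your proof is correct and follows essentially the same approach as the paper's. The paper's proof compresses your two-case split into the single line ``$\alloci[1]\lex{\ge}B_1=D_1$, so $v(\alloci[1])\ge v(D_1)$'' (which tacitly relies on the same reasoning you spell out via Lemma~\ref{lem-lex-larger-implies-cost-larger}), and then concludes $v(\items\setminus\alloci[1])\le v(D_2)\le\tau$ directly; your explicit benchmark argument for $\alloci[2]$ is likewise just an unpacking of the paper's one-line ``By First Fit Valid properties, the bundle $\alloci[2]$ must contain all the remaining chores.''
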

\begin{proof}
Let $\mathbf{D}:=FFD(\items,v, \tau,2)$. By the lemma assumption, we have $D_1\cup D_2=\items$. By Proposition \ref{prop-lexiffd}, the bundle $D_1$ is equal to $B_1$ --- the 1st benchmark bundle of $\mathbf{A}$.  Because $(\items,\allocs,v, \tau)$ is First Fit Valid, we have $\alloci[1]\lex{\ge}B_1=D_1$, so $v(\alloci[1])\ge v(D_1)$. So $v(\items\setminus\alloci[1])\le v(\items\setminus D_1)=v(D_2)\le \tau$. 
By First Fit Valid properties, the bundle $\alloci[2]$ must contain all the remaining chores. 
\end{proof}

\begin{proof}[Proof of \Cref{thm-ffd-is-monotone}]
Let $\mathbf{D}=FFD(\items,v, \tau)$ be the output of FFD algorithm. Suppose that $\mathbf{D}$ allocates all chores in the first $n$ bundles. Let tuple $(\items, \allocs, v, \tau)$ be First Fit Valid and $|\allocs|=n$.

The case $n=2$ immediately follows from \Cref{lem-2bundle}.

Next, we consider
the case $n=3$ and $\alpha\ge\frac{10}{9}$. 
Let $\tau := \alpha \mu$.
We are given a First-Fit-Valid tuple $(\items,\allocs,v, \tau)$, and have to prove that all chores are allocated.

By First-Fit-Valid properties,  $\alloci[1]\lex{\ge}B_1=D_1$, where $B_1$ is the benchmark bundle. If $\alloci[1]=D_1$, then we can apply \Cref{lem-2bundle} to chores set $\items\setminus D_1 = \items\setminus A_1$ and bundles $\alloci[2]$ and $\alloci[3]$:
since FFD$(\items\setminus D_1, v, \tau) = (D_2,D_3)$, and the tuple $(\items\setminus A_1, (A_2,A_3), v, \tau)$ is First Fit Valid, \Cref{lem-2bundle} implies that $\alloci[2]$ and $\alloci[3]$ should contain all the chores in set $\items\setminus D_1=\items\setminus\alloci[1]$.

Next, we consider the case $\alloci[1]\lex{>}D_1$.  
Since $D_1$ is the 1st benchmark bundle, which is maximal among bundles with cost at most $\tau$, we have $v(\alloci[1])>\tau\ge\frac{10}{9} \mu$. 
Suppose for contradiction that the bundle collection $\allocs$ does not contain all chores. 
By First Fit Valid properties this means that, without the first two bundles, the total remaining cost is more than $\tau$, that is,  $v(\items\setminus(\alloci[1]\cup\alloci[2]))>\tau \ge\frac{10}{9} \mu$,
By First Fit Valid properties this means that, without the first two bundles, the total remaining cost is more than $\tau$, that is,  $v(\items\setminus(\alloci[1]\cup\alloci[2]))>\tau \ge\frac{10}{9} \mu$,
Since the maximin share is $\mu$, $v(\items)\le 3z$, so
$v(\alloci[1]\cup\alloci[2]) < \frac{17}{9} \mu$,
so $v(\alloci[2])<\frac{7}{9} \mu$. As there is an unallocated chore $c^*$, it cannot be allocated to $\alloci[2]$, so $v(c^*) > \tau-\frac{7}{9}\mu\geq \frac{1}{3} \mu$.


As small cost chores cannot influence the allocation of large cost chores, we can forget all chores with cost less than $v(c^*)$.
Let us consider the set $\items'=\items[\geq v(c^*)]$,
which only contains the chores with cost at least as high as $c^*$.
Similarly, define  $D_k'=D_k[\geq v(c^*)]$ and $\alloci[k]'=\alloci[k][\geq v(c^*)]$.  Notice that $\mathbf{D}'=FFD(\items',v, \tau)$. By \Cref{lem-benchmark}(a), we have $\alloci[1]'\lex{\ge}D_1'$. If $\alloci[1]'\lex{=}D_1'$, then we can apply \Cref{lem-2bundle} on chores set $\items\setminus D_1'$ and bundles $\alloci[2]'$ and $\alloci[3]'$. Then we have $\alloci[2]'\cup\alloci[3]'=\items'\setminus\alloci[1]'=\items'\setminus D_1'$. There are no remaining chores.

The only case left is $\alloci[1]'\lex{>}D_1'$. By \Cref{lem-benchmark}(b), we have $v(\alloci[1]')>\tau\ge\frac{10}{9}\mu$. So $|\alloci[1]'|\ge2$.  
On the other hand, notice that $|\items'|\le 6$, since MMS$(\items',v,3)\leq \mu$, so there must exist a partition of $\items'$ in which each bundle contains at most 2 chores with cost greater than $\frac{1}{3}\mu$. Therefore, we have $|\items'\setminus\alloci[1]'|\le 4$.

As at least one chore ($c^*$) remains unallocated in $\allocs'$, we have $|A_2' \cup A_3'|\leq 3$, so one of $A_2',A_3'$ must contain a single chore; suppose it is $A_2'$, and denote the single chore by $c_2$. Since $c^*$ cannot be allocated to $A_2'$, we have $v(c_2)+v(c^*)>\tau>\mu$. As all chores in $\items'$ have a cost of at least $v(c^*)$,
chore $c_2$ must be in a singleton bundle in any MMS partition.
Moreover, $c_2$ is not the largest chore in $\items'$, as the largest chore (denote it by $c_1$) is allocated to $A_1'$ by First-Fit-Valid properties. Therefore, $c_1$ too must be in a singleton bundle in any MMS partition. As any bundle in an MMS partition contains at most two chores, we get $|\items'|\le 4$, so $|\items'\setminus\alloci[1]'|\le 2$. This is a contradiction, as two chores can always be allocated into two bundles.

\end{proof}

\end{document}